\newtheorem{thm}{Theorem}
\newtheorem{coro}{Corollary}
\newtheorem*{thm*}{Theorem}
\newcommand{\setthmtag}[1]{
  \let\oldthethm\thethm
  \newcommand{\thethm}{#1}
  \g@addto@macro\endthm{
    \addtocounter{thm}{-1} 
    \global\let\thethm\oldthethm}
  }
\newtheorem*{prop*}{Proposition}
\newtheorem{lemma}[thm]{Lemma}
\newtheorem*{lemma*}{Lemma}
\newtheorem*{cor*}{Corollary}
\newtheorem*{cj*}{Conjecture}
\newtheorem{Def}[thm]{Definition}
\newtheorem*{Def*}{Definition}
\theoremstyle{definition}
\newtheorem*{rem*}{Remark}
\def\beq{\begin{equation}}
\def\eeq{\end{equation}}
\def\bq{\begin{quote}}
\def\eq{\end{quote}}
\def\ben{\begin{enumerate}}
\def\een{\end{enumerate}}
\def\bit{\begin{itemize}}
\def\eit{\end{itemize}}
\def\r|{\right|}
\newcommand{\n}{\mathcal{N}}
\newcommand\be{\begin{equation}}
\newcommand\ee{\end{equation}}
\newcommand{\pdftitle}{High-temperature partition functions and classical simulatability of long-range quantum systems}
 \def\\{}
 \def\textsuperscript#1{}
\begin{document}

\title{\pdftitle}

\author{\begingroup
\hypersetup{urlcolor=black}
\href{https://orcid.org/0009-0002-1274-4747}{Jorge~Sánchez-Segovia
\endgroup}
}
\email{jorge.sanchezsegovia@estudiante.uam.es}
\affiliation{Instituto de Física Teórica UAM/CSIC, C. Nicolás Cabrera 13-15, Cantoblanco, 28049 Madrid, Spain}
\affiliation{Institute of Fundamental Physics IFF-CSIC, C. Serrano 113b, Madrid 28006, Spain}

\author{\begingroup
\hypersetup{urlcolor=black}
\href{https://orcid.org/0000-0001-9854-998X}{Jan~T.~Schneider
\endgroup}
}
\email{jan.schneider@iff.csic.es}
 \affiliation{Institute of Fundamental Physics IFF-CSIC, C. Serrano 113b, Madrid 28006, Spain}

\author{\begingroup
\hypersetup{urlcolor=black}
\href{https://orcid.org/0000-0002-5889-4022}{Álvaro~M.~Alhambra
\endgroup}
}
\email{alvaro.alhambra@csic.es}
 \affiliation{Instituto de Física Teórica UAM/CSIC, C. Nicolás Cabrera 13-15, Cantoblanco, 28049 Madrid, Spain}

\begin{abstract}
\setlength{\parindent}{0pt}

Long-range quantum systems, in which the interactions decay as $1/r^{\alpha}$, are of increasing interest due to the variety of experimental set-ups in which they naturally appear. Motivated by this, we study fundamental properties of long-range spin systems in thermal equilibrium, focusing on the weak regime of $
\alpha>D$. Our main result is a proof of analiticity of their partition functions at high temperatures, which allows us to construct a classical algorithm with sub-exponential runtime $\exp(\mathcal{O}(\log^2(N/\epsilon)))$ that approximates the log-partition function to small additive error $\epsilon$. As by-products, we establish the equivalence of ensembles and the Gaussianity of the density of states, which we verify numerically in both the weak and strong long-range regimes. This also yields constraints on the appearance of various classes of phase transitions, including thermal, dynamical and excited-state ones. Our main technical contribution is the extension to the quantum long-range regime of the convergence criterion for cluster expansions of Koteck\'y and Preiss.

\end{abstract}

\maketitle

\section{Introduction}
Aided by the fast-paced development of experimental quantum platforms, we are increasingly able to probe complex systems of interacting quantum particles. In many of these platforms, the interactions between the individual particles are naturally \emph{long ranged}, with all-to-all couplings decaying algebraically with the distance $r$ between two constituents as $1/r^\alpha$ with \(0 < \alpha < \infty\). This includes trapped ions~\cite{ReviewTrappedIons}, Rydberg atoms~\cite{Browaeys2020} and cold atoms~\cite{Douglas_2015,Hung_2016} or molecules~\cite{Cornish2024}.
The greater range of interactions than that of local spin models (e.g., nearest-neightbor Ising) allows for interesting physical phenomena~\cite{IslamExperiment,Richerme2014,Joshi_2022,JoshiPRL}, including inequivalence of ensembles~\cite{Barr__2001,Kastner_2010,Russomanno_2021,Defenu_2024}, negative heat capacity~\cite{campa2014e}, or metastability and discreteness of the spectrum~\cite{defenu2021c}. It can also have a dramatic impact on quantum effects such as spin squeezing~\cite{Bornet_2023,Eckner_2023}.

It is thus of wide interest to study general properties of long-range quantum systems.  
Extensive theoretical research has been recently conducted toward characterizing both the dynamics (e.g.~\cite{KastnerEquilibration,Hauke_2013, jan_PhysRevResearch.3.L012022,LRB_1PhysRevX.10.031010,LRB2_PhysRevLett.127.160401,Pappalardi_2018,Mori_2019,DPTsLongRange,DefenuReview_2024,mattes2025longrangeinteractingsystemslocally}) and equilibrium physics (e.g.~\cite{Barr__2001,Kastner_2010,Russomanno_2021,Defenu_2024,Gong_2017,Kuwahara_2020_AreaLaw,jan_PhysRevB.106.014306}).
Across the literature, two regimes are typically identified as fundamentally different~\cite{LRreview_Defenu_2023}: the strong ($\alpha \le D$) and the weak ($\alpha>D$) long-range regimes, with $D$ being the system's dimension.
While the strong regime is typically accompanied by some of the aforementioned anomalous phenomena, the weak regime is often closer to short-range systems.

Here, we study the thermal equilibrium properties of spin systems in the weak long-range regime. Our main result is that, at high enough temperatures, partition functions (and simple generalizations thereof) are analytic and well approximated by their Taylor series. We also show that this yields a subexponential time classical algorithm for their estimation to multiplicative error. This runtime contrasts with the exponential runtimes expected for arbitrary temperatures~\cite{Poulin_2009,Bravyi_2022} (due to QMA hardness), and the polynomial ones established for short-range models at high temperature~\cite{AlgorithmicPirogov–Sinai,efficient_alg_Mann_2021,Harrow_2020,Haah2024}.  

We also explore further consequences of that analyticity. In particular, we establish the ensemble equivalence at high temperatures (as previously reported in~\cite{Kuwahara_2020_Gaussian}), and the Gaussianity of their density of states (DOS)~\cite{Hartmann_2004,Rai_2024}. We also run numerical tests of the DOS with tensor network methods that go beyond our analytical results, and hint at deviations of that Gaussianity in the strong long-range regime. We finally explore  how our results constrain the appearance of different kinds of phase transitions in the strong long-range regime, including thermal, dynamical, and excited-state phase transitions~\cite{ESQPTs_Cejnar_2021}.

The primary analytical toolkit that we use is the \textit{cluster expansion}, a versatile technique rooted in mathematical physics, which offers a systematic way to describe expansions around reference points (in this case, $\beta=0$).
Our main technical contribution is a proof of the convergence of this expansion in weak long-range systems, by extending the Koteck\'y-Preiss~\cite{kotecky1986cluster} convergence criterion to that regime. This criterion has been previously applied to finite-range quantum systems, at both high~\cite{efficient_alg_Mann_2021} and low~\cite{Helmuth_2023} temperatures, and to more general classes of quantum problems~\cite{classsimofshort_PRXQuantum.4.020340,Mann_2024}.  Our technical findings are also inspired by recent results on cluster expansions for high-temperature finite-range quantum models~\cite{Kliesch_2014,Molnar_2015,yin2023polynomialtimeclassicalsamplinghightemperature,Haah2024,bluhm2024strongdecaycorrelationsgibbs,bakshi2025hightemperaturegibbsstatesunentangled}, some of which have already been extended to the weak long-range regime~\cite{Kuwahara_2020_Gaussian,kimura2024clusteringtheorem1dlongrange,kim2024thermalarealawlongrange}. 

We start with the setup and an introduction to the cluster expansion in Sec.~\ref{sec:setup}. We then show our main result for partition functions in Sec.~\ref{sec:main}, followed by the classical algorithm in Sec.~\ref{sec5_aproximationalgorithm}. The consequences for the statistical properties are shown in Sec.~\ref{sec:statistical} and the constraints on criticality in Sec.~\ref{sec3_absenceofcriticality}. All the proofs are outlined in the main text, with some of the more technical details placed in Appendixes A-C.

\section{Setup}\label{sec:setup}

We consider quantum systems with $N$ qudits on a $D$-dimensional lattice $\Lambda$ with $\vert \Lambda \vert =N$, governed by Hamiltonians with $k$-body terms
\begin{equation}
    H= \sum_{Z: \vert Z \vert \leq k} h_{Z}, \label{eq:k-body-term}
\end{equation}
with long-range interactions decaying as a polynomial, such that
\begin{equation}
    \sum_{Z \ni \{i,i'\}} \norm{h_Z} \leq \frac{g}{\left(1+ d_{ii'}\right)^\alpha} \equiv J_{ii'},
    \label{eq:HkLocal}
\end{equation}
where $g$ is some finite constant. Here the symbol $d_{ij}$ stands for the Manhattan distance between
the sites $i$ and $j$ in $\Lambda$. By taking $\alpha \rightarrow \infty$ we also include finite-range interactions.
A geometric quantity which will play an important role IN our calculations is
\begin{equation}
    u= 2^{\alpha}\max_i \sum_j \frac{1}{\left(1+d_{ij}\right)^\alpha}, \label{eq:def-u}
\end{equation}
which is $\mathcal{O}(1)$ if $\alpha > D$.

We focus on functions of $H$ of the form 
\begin{equation}
    Z_\rho = \text{Tr}[\mathrm{e}^{-\beta H} \rho],
\end{equation}
where $\beta \in \mathbb{C}$ and $\rho$ is a product state over all sites. By choosing $\rho=\mathbb{I}/d^N$ and $\beta >0$ we recover the usual partition function up to a factor of $1/d^N$, and by choosing $\beta = \mathrm{i} t $ with $t \in \mathbb{R}$, and $\rho= \otimes_i^N \ketbra{\Phi_i}{\Phi_i}$, we obtain a Loschmidt echo.

\subsection{Polymer models}

We now introduce a convenient way of writing the function $Z_ \rho$, in terms of so-called \emph{polymers}.

In an abstract setting, a polymer model is a triple $(\mathcal{C}, w, \sim)$, where $\mathcal{C}$ is a countable set whose objects are called polymers. The function $w : \mathcal{C} \to \mathbb{C}$ assigns to each polymer $\gamma \in \mathcal{C}$ a complex number $w_\gamma$ called the \emph{weight} of the polymer, and $\sim$ is a symmetric compatibility relation such that each polymer is incompatible with itself. Equivalently, the incompatibility relation $\nsim$ is a symmetric and reflexive relation. A set of polymers is called \textit{admissible} if all the polymers in the set are pairwise compatible. Note that the empty set is admissible.

Let $\mathcal{G}$ denote the collection of all admissible sets of polymers from $\mathcal{C}$. The polymer expansion of the partition function is defined by
\begin{equation} \label{eq:polymerexp}
Z_\rho = \sum_{\Gamma \in \mathcal{G}} \prod_{\gamma \in\Gamma} w_\gamma,
\end{equation}
where \( \Gamma \) is a nonempty ordered polymer tuple.

Here, the set of polymers $\mathcal{C}$ are identified with tuples of hyperedges in $\Lambda$, with each hyperedge corresponding to a particular term $h_Z$. We define $\norm{\gamma}$ as the total number of hyperedges in the polymer, with multiplicities $m_\gamma(Z)$ such that $\norm{\gamma}= \sum_Z m_{\gamma}(Z)$, and $\vert \gamma \vert$ the number of different hyperedges. The relation $\sim$ is such that $\gamma_1 \sim \gamma_2$ if the corresponding sets of hyperedges are disconnected and $\nsim$ otherwise. 
Finally, the weight $w_\gamma$ is given by (see App.~A of~\cite{efficient_alg_Mann_2021} or App.~\ref{app:KP_GenPartFunct}) 
\begin{equation}
    w_\gamma= \frac{ \left(-\beta\right)^{\norm{\gamma}}}{\norm{\gamma}! \prod_{Z \in \gamma}m_{\gamma}(Z)!} \text{Tr} \left[\sum_{\sigma \in S_{\norm{\gamma}}}  \prod_{i=1}^{\norm{\gamma}} h_{\gamma_{\sigma(i)}} \rho \right], 
    \label{eq::weight}
\end{equation}
where inside the trace we sum over permutations of $\norm{\gamma}$ elements. The factor $\prod_{Z \in \gamma} m_{\gamma}(Z)! $ is there to avoid overcounting when repeated hyperedges appear in a polymer. 

\subsection{The cluster expansion}

Consider a set of incompatible polymers $\Gamma$, where the underlying hypergraph of a polymer is connected to at least that of another polymer. We refer to these sets as  \emph{connected clusters}. The \emph{incompatibility graph} \( H_{\Gamma} \) of \( \Gamma \) is the graph with vertex set the polymers $ \{ \gamma \in \Gamma \} $ and edges between two polymers if and only if they are incompatible (see Fig.~\ref{fig:S2.connectedcanddisconnectedclusters}). By assumption, $H_\Gamma$ is connected.

The cluster expansion is then a formal power series for $ \log Z_\rho $ in the variables $ w_\gamma $, defined by
\begin{equation}\label{eq:logZexp}
    \log Z_{\rho} := \sum_{\Gamma \in  \mathcal{G}_C}  \varphi \left(H_{\Gamma} \right)  \prod_{\gamma \in \Gamma} w_{\gamma},
\end{equation}
where \( \mathcal{G}_C \) denotes the set of all connected clusters $\Gamma$ of polymers \( \gamma \in \mathcal{C} \). See e.g.~\cite{StatisticalMechanicsofLatticeSystems_friedli_velenik_2017} for a proof of Eq.~\eqref{eq:logZexp} and of why only connected clusters contribute, as opposed to admissible sets as in Eq.~\eqref{eq:polymerexp}.
Given a graph $H_\Gamma$, the \emph{Ursell function} is defined as 
\begin{equation}
 \varphi(H_\Gamma) := \frac{1}{\vert H_\Gamma\vert!} \sum_{\substack{S \subseteq E(H_\Gamma)\\ S \text{ spanning, connected}}} (-1)^{\vert S\vert},
\end{equation}
where $S$ are subsets of edges of $H_\Gamma$. This cluster expansion can also be seen as a different rearrangement of the Taylor expansion of a parameter~\cite{Dobrushin1996}, in this case $\beta$ around $\beta=0$.
\begin{figure}[t]
\centering
\renewcommand{\thesubfigure}{\alph{subfigure}}
    \begin{subfigure}[b]{0.4\linewidth}
    \centering
    \includegraphics[width=3.1 cm]{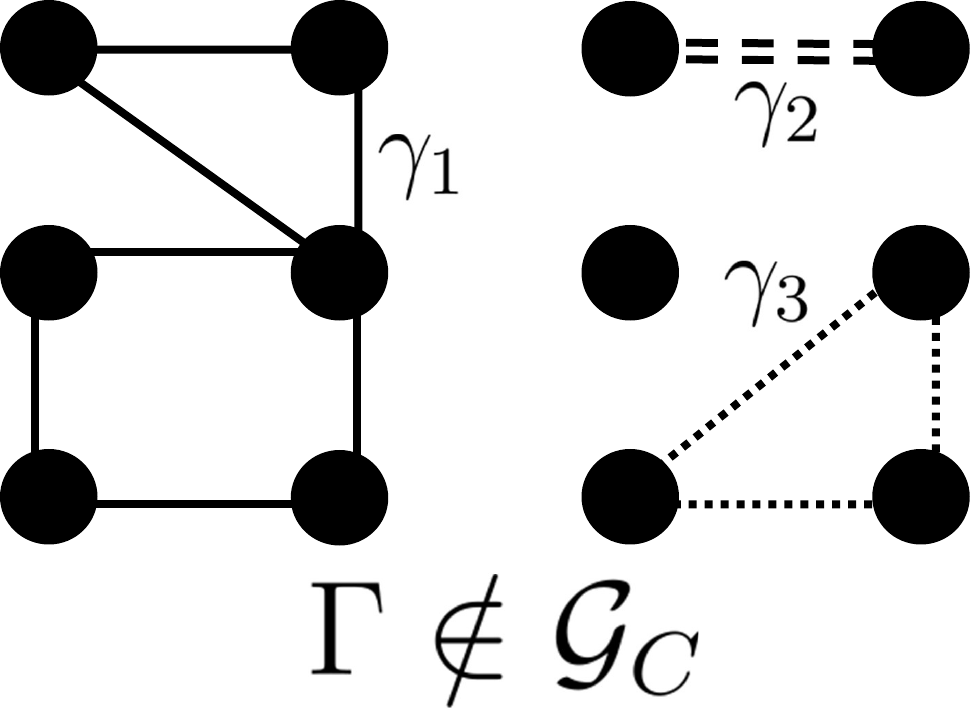}
   \caption{}
    \end{subfigure}
\hfill
    \begin{subfigure}[b]{0.4\linewidth}
    \centering
    \includegraphics[width=3.1 cm]{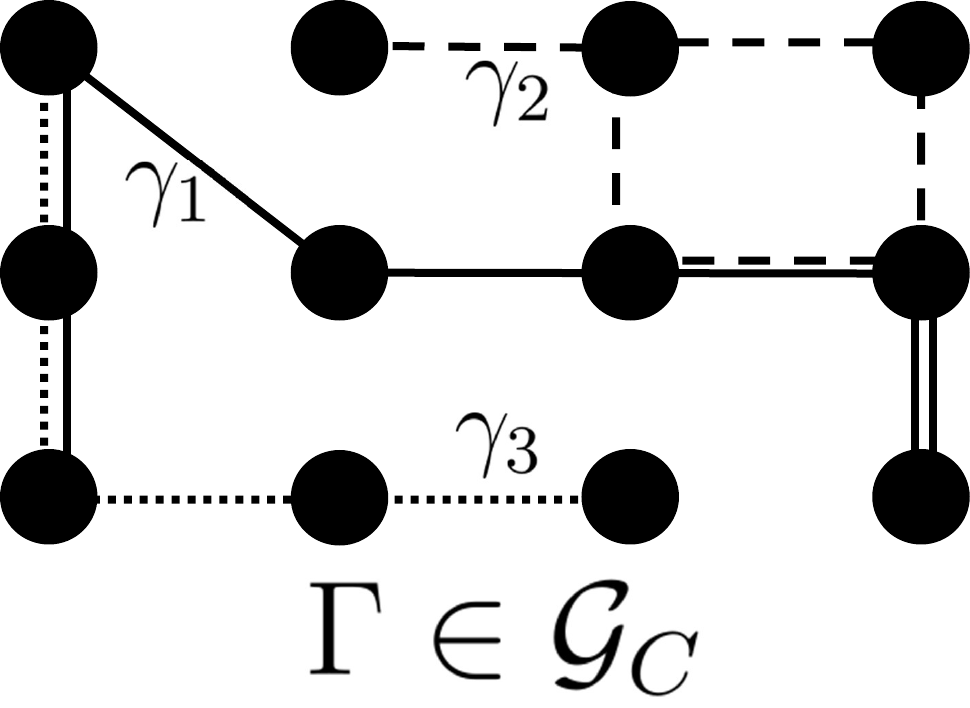}
    \caption{ }
    \end{subfigure}
   \caption{ \justifying Illustration of a set of polymers $\Gamma=\{\gamma_1,\gamma_2,\gamma_3\}$. In \textbf{(a)} the set is admissible, as they are disconnected, while in \textbf{(b)} the set is a cluster as they are connected.}
   \label{fig:S2.connectedcanddisconnectedclusters}
\end{figure}
For each set of polymers $\Gamma$, let us define  $\norm{\Gamma}=\sum_{\gamma \in \Gamma} \norm{\gamma}$.
A sufficient condition for convergence of the expansion of $\log Z_\rho$ is the following.
\begin{lemma}(Koteck\'y--Preiss.~\cite{kotecky1986cluster})
    Assuming there exist a function $a : \Gamma \rightarrow \mathbb{R}_{>0}$ such that, for every $\gamma^*\in \Gamma$
    \begin{equation}
        \sum_{\gamma \nsim \gamma^*} |w \left(\gamma\right)| \mathrm{e}^{a\left(\gamma\right)} \leq a \left(\gamma^*\right),
        \label{eq::Kotecky-Preiss}
    \end{equation}
then, 
    \begin{equation}
    \sum_{\substack{\Gamma \in \mathcal{G}_C\\ \Gamma \nsim \gamma^*}} \vert \varphi(\Gamma) \prod_{\gamma \in \Gamma}  w_{\gamma}\vert  \leq a \left(\gamma^*\right),
    \label{eq::KoteckyPreiss2}
    \end{equation}
and consequently $\log Z_\rho$ is analytic and has a convergent cluster expansion.
    \label{le::Kotecky-Preiss}
\end{lemma}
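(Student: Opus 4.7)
The plan is to reduce the left-hand side of Eq.~\eqref{eq::KoteckyPreiss2} to a sum over labelled trees of polymers via a Penrose-type tree-graph bound on the Ursell function, and then run an induction on cluster size that closes exactly at the KP hypothesis \eqref{eq::Kotecky-Preiss}. First I would invoke the classical tree-graph inequality
\begin{equation}
    \lvert\varphi(H_\Gamma)\rvert \leq \frac{1}{\lvert\Gamma\rvert!}\,\#\{\text{spanning trees of } H_\Gamma\},
\end{equation}
which recasts the cluster sum as a sum over pairs (cluster, spanning tree). Reorganizing this as a sum over labelled trees whose vertices carry polymer labels and whose edges encode incompatibilities, I would root each tree at a polymer $\gamma_1 \nsim \gamma^*$, producing a structure amenable to a vertex-by-vertex recursive decomposition: the root contributes $\lvert w(\gamma_1)\rvert$ times an unordered product over sub-branches, each itself a labelled tree rooted at a polymer incompatible with $\gamma_1$.

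Next I would induct on the number of vertices $n = \lvert\Gamma\rvert$. The base $n = 1$ is just $\sum_{\gamma_1 \nsim \gamma^*} \lvert w(\gamma_1)\rvert \leq \sum_{\gamma_1 \nsim \gamma^*} \lvert w(\gamma_1)\rvert e^{a(\gamma_1)} \leq a(\gamma^*)$, a direct use of \eqref{eq::Kotecky-Preiss} with $e^{a(\gamma_1)} \geq 1$. For the inductive step, if the root $\gamma_1$ carries $k$ sub-branches, the contribution factorizes into $\tfrac{1}{k!}$ times the $k$-th power of a single-branch sum. By the inductive hypothesis, each single-branch sum is bounded by $a(\gamma_1)$, so summing over $k \geq 0$ produces the exponential $e^{a(\gamma_1)}$. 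The remaining root-sum $\sum_{\gamma_1 \nsim \gamma^*} \lvert w(\gamma_1)\rvert \, e^{a(\gamma_1)}$ is then precisely what hypothesis \eqref{eq::Kotecky-Preiss} bounds by $a(\gamma^*)$, closing the induction.

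The main obstacle I anticipate is the bookkeeping of symmetry factors: one must confirm that the $1/\lvert\Gamma\rvert!$ weight from the tree-graph bound, the multiplicity with which a given unlabelled rooted tree arises from the labelled-tree data, and the combinatorics of attaching $k$ sub-branches to a vertex all combine so that $\sum_k x^k/k! = e^x$ emerges cleanly rather than with a spurious polynomial factor. The KP hypothesis is tailor-made for this identity, so the factorials must be tracked carefully rather than bounded lossily. A standard way to organize this is via the Penrose partition scheme, which furnishes a canonical assignment of each connected cluster to a single rooted tree and thereby eliminates overcounting by construction.

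Once \eqref{eq::KoteckyPreiss2} is established, analyticity of $\log Z_\rho$ in $\beta \in \mathbb{C}$ follows quickly: each weight $w_\gamma$ is polynomial in $\beta$ of degree $\lVert\gamma\rVert$, so the cluster expansion \eqref{eq:logZexp} is a formal sum of entire functions, and the uniform bound \eqref{eq::KoteckyPreiss2} supplies absolute and uniform-on-compacts convergence throughout the region of $\beta$ where \eqref{eq::Kotecky-Preiss} holds. Weierstrass's theorem then delivers analyticity of $\log Z_\rho$ in that region.
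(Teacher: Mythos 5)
The paper does not actually prove this lemma: it is quoted from Koteck\'y--Preiss~\cite{kotecky1986cluster}, with the proof deferred to that reference and to~\cite{StatisticalMechanicsofLatticeSystems_friedli_velenik_2017}. Your sketch is a faithful outline of exactly the standard argument given there --- the Penrose/Rota tree-graph bound on the Ursell function followed by a strong induction on cluster size in which a vertex with $k$ branches contributes $\tfrac{1}{k!}\,a(\gamma_1)^k$, so that the sum over $k$ reproduces $\mathrm{e}^{a(\gamma_1)}$ and the hypothesis \eqref{eq::Kotecky-Preiss} closes the induction --- and the bookkeeping of symmetry factors you flag is indeed the only substantive work remaining. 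The one point worth adding is that analyticity of $\log Z_\rho$ also requires $Z_\rho \neq 0$ on the relevant domain; this too follows from the convergent cluster expansion (the bound \eqref{eq::KoteckyPreiss2} controls the tail of the exponentiated series uniformly), but it should be stated rather than taken for granted before invoking Weierstrass.
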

The proof dates back to the original work~\cite{kotecky1986cluster} and can also be found in~\cite{StatisticalMechanicsofLatticeSystems_friedli_velenik_2017}, with refinements of this criterion later proven in ~\cite{Dobrushin1996,Ueltschi2004,Fernandez_2007}.

\section{Main result}\label{sec:main}

Our main technical result is to show that the convergence criterion of Lemma~\ref{le::Kotecky-Preiss} holds for long-range systems at small enough $\beta$. More specifically, we prove that Eq.~\eqref{eq::KoteckyPreiss2} holds and that this implies a fast convergence of the Taylor series in $\beta$, which is defined by considering only clusters up to a given size. 
\begin{thm}
Let $H$ be a long-range Hamiltonian such that $\alpha> D$. Then, $\log Z_\rho$ is analytic and has a convergent cluster expansion for $\vert \beta \vert < \beta^* =\left( 8 \mathrm{e}k gu \right)^{-1}$, as
\begin{equation}\label{eq:mainresult}
    \left \vert \log Z_\rho -T_m \right \vert  \leq N \left\vert \frac{\beta}{\beta^*}\right\vert^m, 
\end{equation}
where the $m$th order expansion $T_m$ is
    \begin{equation}
        T_m := \sum_{\substack{\Gamma \in  \mathcal{G}_C\\ \norm{\Gamma} \leq m}}  \varphi \left(\Gamma \right)  \prod_{\gamma \in \Gamma} w_{\gamma}.
    \end{equation}
    \label{Theorem:KoteckyLR}
\end{thm}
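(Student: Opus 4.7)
The plan is to verify the Koteck\'y--Preiss criterion of Lemma~\ref{le::Kotecky-Preiss} with the choice $a(\gamma)=\norm{\gamma}$, and then convert the resulting analyticity of $\log Z_\rho$ into the Taylor-type truncation bound~\eqref{eq:mainresult}.

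First I would estimate the polymer weight from Eq.~\eqref{eq::weight}: the bound $|\text{Tr}[A\rho]|\le\norm{A}$ for the product state $\rho$, submultiplicativity of the operator norm, and the observation that the sum over $S_{\norm{\gamma}}$ exactly cancels the $\norm{\gamma}!$ in the denominator, together give
\begin{equation*}
    |w_\gamma|\le\frac{|\beta|^{\norm{\gamma}}}{\prod_{Z\in\gamma}m_\gamma(Z)!}\prod_{Z\in\gamma}\norm{h_Z}^{m_\gamma(Z)}.
\end{equation*}

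The central combinatorial step is a rooted tree-graph estimate for the KP sum~\eqref{eq::Kotecky-Preiss}. Any polymer $\gamma\nsim\gamma^*$ contains a hyperedge sharing a site with the vertex set $V(\gamma^*)$ of $\gamma^*$; I would pick such a distinguished root and order the remaining $\norm{\gamma}-1$ hyperedges as a connected growth sequence in which each new hyperedge meets one of its predecessors. Each step is bounded using~\eqref{eq:HkLocal}: since $|Z|\ge 1$, one has $\sum_{Z\ni j}\norm{h_Z}\le\sum_{j'}J_{jj'}\le gu/2^{\alpha}$ by the definition of $u$ in~\eqref{eq:def-u}. This is precisely where the hypothesis $\alpha>D$ enters, as it is what guarantees the finiteness of $u$. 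An additional factor of $k$ per step accounts for the choice of attachment site inside the previous hyperedge, and the multiplicity factorials $m_\gamma(Z)!$ absorb the over-counting of orderings producing the same polymer. Summing the resulting geometric series gives an estimate of the form
\begin{equation*}
    \sum_{\gamma\nsim\gamma^*}|w_\gamma|\,e^{\norm{\gamma}}\le C\,|V(\gamma^*)|\sum_{n\ge 1}\bigl(e\,kgu\,|\beta|\bigr)^n,
\end{equation*}
which, after absorbing $C$ and $|V(\gamma^*)|\le k\norm{\gamma^*}$ into numerical constants, is bounded by $\norm{\gamma^*}=a(\gamma^*)$ provided $|\beta|<\beta^*=(8ekgu)^{-1}$. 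Lemma~\ref{le::Kotecky-Preiss} then yields analyticity of $\log Z_\rho$ on the disk $|\beta|<\beta^*$ and the conclusion~\eqref{eq::KoteckyPreiss2}.

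The truncation estimate~\eqref{eq:mainresult} follows because $w_\gamma\propto\beta^{\norm{\gamma}}$ makes $T_m$ exactly the degree-$m$ Taylor polynomial of $\log Z_\rho$ in $\beta$. To turn~\eqref{eq::KoteckyPreiss2} into an $N$-dependent tail bound, I would sum it over a choice of root site (of which there are $N$) for each connected cluster, and use the factor $e^{\norm{\Gamma}}\ge e^{m}$ accessible from an enhanced KP inequality (obtained by running the previous step with a slightly enlarged $a$, which only slightly shrinks the region of $\beta$ and is absorbed into the constant $8$) to strip off an $e^{-m}$, yielding $N|\beta/\beta^*|^m$. The main obstacle throughout the argument is the long-range tree-graph estimate itself: unlike in the finite-range case in which each hyperedge has $O(1)$ neighbors and the step sum is immediate, here every hyperedge interacts with infinitely many others, and convergence rests entirely on the summability $\sum_j(1+d_{ij})^{-\alpha}<\infty$ packaged into $u$; keeping the explicit constants tight enough to land at $\beta^*=(8ekgu)^{-1}$ is the remaining bookkeeping.
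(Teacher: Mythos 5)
There is a genuine gap at the heart of your argument: the combinatorial control of the sum over connected polymers attached to $\gamma^*$. In your growth-sequence scheme, a new hyperedge at step $j$ may attach to \emph{any} of its predecessors, so the set of already-covered sites has size up to $(j-1)k$ (plus the root), and bounding the step-$j$ sum by $\sum_{Z\ni i}\norm{h_Z}\le gu/2^\alpha$ per covered site yields a factor of order $(j-1)k\,gu$ at step $j$. Summed over $n$ steps this gives $n!\,(kgu)^n$, and the resulting series $\sum_n n!\,(\mathrm{e}kgu|\beta|)^n$ diverges for every $\beta\neq 0$. Your claim that the multiplicity factorials $\prod_Z m_\gamma(Z)!$ absorb this over-counting is incorrect: those factorials only compensate for permutations of \emph{identical repeated} hyperedges, and a chain-like polymer of $n$ distinct hyperedges has $\prod_Z m_\gamma(Z)!=1$ while admitting as few as a single valid growth ordering, so nothing cancels the $n!$. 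If instead you restrict attachment to the immediately preceding hyperedge (as your phrase ``inside the previous hyperedge'' suggests), the enumeration misses most connected polymers (e.g.\ stars), so the bound is not valid either. This is exactly the point where the long-range case differs from the finite-range one, and the paper resolves it differently: each polymer is mapped (non-uniquely) to an \emph{unlabeled rooted tree}, the lattice sum for a fixed tree structure is bounded by $(gu)^n$ using iterated summability/convolution of $J_{ij}$ (Lemma~\ref{lemma:common_upperbound}), and the number of unlabeled rooted trees with $n$ edges is bounded by $4^n$ rather than $n!$ — whence the geometric series and the constant $8\mathrm{e}kgu$.

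A secondary issue: your truncation step does not deliver the stated bound. Stripping off $\mathrm{e}^{-m}$ from an enhanced KP inequality gives a tail of order $N\mathrm{e}^{-m}$, not $N|\beta/\beta^*|^m$; the paper instead factors $|\beta|^{\norm{\Gamma}} = |\beta|^m\,|\beta|^{\norm{\Gamma}-m}$, evaluates the remaining sum at $\beta=\beta^*$ where it is bounded by $N$, and thereby obtains the claimed $(|\beta|/\beta^*)^m$ decay. Your weight estimate and the overall KP strategy (choice of $a$, rooting at a hyperedge of $\gamma^*$, summing over the $N$ sites at the end) are otherwise consistent with the paper.
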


\begin{proof}
    The full proof can be found in Appendix~\ref{app:KP_LR} but we sketch it here. We first choose $a\left(\gamma^*\right) = \abs{\gamma^*}$ in the LHS of Eq.~\ref{eq::Kotecky-Preiss} and fix a hyperedge $Z^*$. The sum over polymers connected to $Z^*$ can be seen as a sum over hypergraphs in the lattice. Since the interaction is long range, the resulting sum over polymers includes hypergraphs connected to $Z^*$ with all combinations of hyperedges. Each of these combinations, however, can be mapped in a nonunique way to a tree rooted at $Z^*$, in which a vertex of the tree corresponds to a lattice point. For each rooted tree, we upper bound the sum over all lattice points, which can be done in a controlled way due to the decay of the interactions with $\alpha >D$, in such a way that the upper bound is independent of the tree. We then just need to upper bound the number of trees with $m$ edges by $4^m$. Finally, after some manipulation of Eq.~\eqref{eq::KoteckyPreiss2}, we arrive to an upper bound on $\abs{\log Z_\rho - T_m }$.
\end{proof}
The proof requires a rather different argument to that of finite-range Hamiltonians~\cite{efficient_alg_Mann_2021,Helmuth_2023}, since the combinatorial arguments previously used to control the sum over polymers in Eq.~\eqref{eq::Kotecky-Preiss} do not apply for long-range models, where the interaction graph is complete. Instead, we have to resort to a different counting of the types of structures that the polymers can have, and then sum over the entire system aided by the \emph{uniform summability} and \emph{convolution} conditions of the long range interactions with $\alpha>D$ as defined in e.g.~\cite{Nachtergaele_2006} (this was named \emph{reproducing} in~\cite{hastings2010quasiadiabatic}). 

It is useful for some applications to generalize Theorem~\ref{Theorem:KoteckyLR} to partition functions with several exponentials within the trace.
 Given $K$ Hamiltonians $\{H_l \}_{l=1}^K$ and parameters $\lambda_l$, the generalized partition function is defined as
\begin{equation}
 \log Z_{Gen, \rho}=\log \text{Tr} \left[\prod_l \mathrm{e}^{\lambda_l H_l} \rho \right].
\end{equation}
\begin{thm}
If $ \sum_l |\lambda_l|\leq \beta^*/K$, $ \log Z_{Gen, \rho} $ has a convergent cluster expansion and
    \begin{equation}
       \left \vert \log Z_{Gen, \rho} -T_{M_1,  \dots M_K} \right \vert  \leq N \left(K \sum_{l} \bigg \vert \frac{\lambda_l}{\beta^*} \bigg \vert \right)^M,
    \end{equation}
    where $T_{M_1,  \dots M_K}$ is the expansion in $\{ \lambda_l \}$ to  order $\{M_1,  \dots M_K\}$ and $M= \sum_l M_l$.
    \label{theo:KP_GenPartFunct}
\end{thm}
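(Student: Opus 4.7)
The plan is to reduce Theorem~\ref{theo:KP_GenPartFunct} to the long-range Koteck\'y--Preiss argument used in the proof of Theorem~\ref{Theorem:KoteckyLR} by enlarging the polymer model to carry a type label. First I would Taylor expand each $\mathrm{e}^{\lambda_l H_l}$ in its own $\lambda_l$ and insert the expansions into the trace, producing a sum over ordered sequences of hyperedges where each hyperedge additionally carries a type label $l \in \{1,\dots,K\}$. Grouping these into \emph{labeled polymers}---tuples of hyperedges together with the data of how many belong to each $H_l$---yields a polymer expansion of the same form as Eq.~\eqref{eq:polymerexp}, with weights analogous to Eq.~\eqref{eq::weight} but with $(-\beta)^{\norm{\gamma}}$ replaced by $\prod_l \lambda_l^{n_l(\gamma)}$, and with the permutation sum factorized into separate symmetrizations within each type block (reflecting that $\prod_l \mathrm{e}^{\lambda_l H_l}$ is an ordered product). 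The compatibility relation depends only on the underlying geometric supports and is therefore identical to that of the single-exponential polymer model, so the cluster expansion~\eqref{eq:logZexp} for $\log Z_{Gen,\rho}$ carries over formally.

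Next I would verify that, after bounding $|\lambda_l|\leq \Lambda \equiv \sum_l |\lambda_l|$ and summing over type labels, the labeled weight bound reduces to the single-exponential one under the substitution $|\beta|\to K\Lambda$. Two combinatorial steps make this work: (i) the operator-norm bound on the within-block symmetrized trace still cancels the Taylor denominators $\prod_l n_l(\gamma)!$ exactly as in the single-exponential case, leaving a residual $\prod_{l,Z}\norm{h_Z}^{m_{l,Z}(\gamma)}/\prod_{l,Z} m_{l,Z}(\gamma)!$; and (ii) the multinomial identity $\sum_{m_1+\cdots+m_K=m_Z}\prod_l (m_l!)^{-1}=K^{m_Z}/m_Z!$ collapses the sum over type distributions compatible with a fixed underlying hypergraph into a factor $K^{m_Z}/m_Z!$ per hyperedge. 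Together with $\prod_l |\lambda_l|^{n_l(\gamma)}\leq \Lambda^{\norm{\gamma}}$, this reproduces exactly the single-exponential weight bound with $|\beta|$ replaced by $K\Lambda$.

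With this substitution in place, the geometric argument of Theorem~\ref{Theorem:KoteckyLR}---choosing $a(\gamma^*)=\abs{\gamma^*}$, reorganizing the sum over polymers incompatible with $\gamma^*$ as a sum over rooted trees in $\Lambda$, bounding each tree sum via the uniform-summability and convolution properties of $(1+d_{ij})^{-\alpha}$ for $\alpha>D$, and counting trees by $4^m$---goes through verbatim. Convergence therefore holds whenever $K\Lambda \leq \beta^*$, which is precisely the hypothesis $\sum_l |\lambda_l|\leq \beta^*/K$, and the truncation error is obtained by restricting the cluster sum to total order $M=\sum_l M_l$: summing the labeled polymer bound over $\norm{\Gamma}\geq M$ yields the stated $N(K\Lambda/\beta^*)^M$ via the same geometric tail estimate as in Eq.~\eqref{eq:mainresult}.

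The principal obstacle is the combinatorial bookkeeping in the first two steps: one must verify that the within-block (rather than global) symmetrization in the labeled $w_\gamma$ still produces the clean norm bound $\prod_{l,Z}\norm{h_Z}^{m_{l,Z}}$ after cancelling the $\prod_l n_l!$ from the Taylor series, and that the sum over type labels can be cleanly separated from the geometric polymer sum by the multinomial identity above. Once this is established, all of the long-range machinery of Theorem~\ref{Theorem:KoteckyLR} transfers without modification, and the only cost of the generalization is the factor $K$ that appears both in the convergence radius and as the base of the geometric decay of the truncation error.
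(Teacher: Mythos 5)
Your proposal is correct and follows essentially the same route as the paper: colored/labeled polymers with within-block symmetrized weights, a reduction of the weight bound to the single-exponential case, and then the rooted-tree Koteck\'y--Preiss machinery of Theorem~\ref{Theorem:KoteckyLR} applied verbatim, with the truncation error handled by the same geometric tail estimate. The only cosmetic difference is where the factor $K$ enters: the paper keeps the multinomial sum $\sum \binom{m}{t_1,\dots,t_K}\prod_l |\lambda_l|^{t_l} = \left(\sum_l |\lambda_l|\right)^m$ exact (so convergence of the KP criterion holds already for $\sum_l|\lambda_l|\le\beta^*$) and introduces $K$ only when normalizing the truncation error at $\lambda_l=\beta^*/K$, whereas you absorb $K^{\norm{\gamma}}$ into the weight bound via the per-hyperedge label sum; both yield the stated hypothesis $\sum_l |\lambda_l| \le \beta^*/K$ and the bound $N\left(K\sum_l|\lambda_l|/\beta^*\right)^M$.
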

The proof is left to Appendix~\ref{app:KP_GenPartFunct}. The strategy is very similar to the one given in Theorem~\ref{Theorem:KoteckyLR} but now with the extra degree of freedom in the sum over polymers, due to the $K$ types of hyperedges appearing in the expansion. 
\section{Approximation algorithm for partition functions}
\label{sec5_aproximationalgorithm}

The convergence of the Taylor series allows for a classical approximation algorithm of $\log Z_\rho$ with bounded runtime, as is already established for partition functions of finite-range models~\cite{AlgorithmicPirogov–Sinai,efficient_alg_Mann_2021,Harrow_2020,Haah2024}. The idea is to explicitly show how to compute the individual Taylor moments up to a given order. Then, their sum yields a good approximation due to Eq.~\eqref{eq:mainresult}. 
We first need a lemma to count all the connected clusters in the sum of Eq.~\eqref{eq:logZexp}.

\begin{lemma}
    The connected clusters of size at most $m$ can be listed in time $N^{\mathcal{O}(km)} m!$
    \label{lemma:alg_countingclusters}
\end{lemma}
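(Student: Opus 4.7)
The plan is to list clusters by growing them one hyperedge at a time from a root, exploiting the fact that every connected cluster $\Gamma$ with $\|\Gamma\|=s\leq m$ determines a connected sub-hypergraph on at most $s$ hyperedges of $\Lambda$. I would represent each candidate cluster by a pair consisting of (i) an ordered tuple $(Z_1,\ldots,Z_s)$ of hyperedges (with possible repetitions) such that each prefix $Z_1\cup\cdots\cup Z_j$ is connected, together with (ii) a composition of $s$ into parts, specifying where the polymer blocks of $\Gamma$ start and end, so that the polymers themselves can be read off as consecutive sub-tuples.

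For part (i) I would enumerate such tuples via BFS-style growth. The first hyperedge has at most $\binom{N}{k}\leq N^k$ choices; given the first $i$ hyperedges, whose combined support covers at most $ki$ sites, an extending hyperedge must share at least one of those sites, giving at most $ki\cdot N^{k-1}$ options (choose one existing site, then up to $k-1$ additional sites). Re-selecting a hyperedge already present is a special case of this bound. Multiplying over $s\leq m$ steps,
\[
N^{k}\prod_{i=1}^{m-1}k\,i\,N^{k-1} \;=\; N^{k+(k-1)(m-1)}\,k^{m-1}(m-1)! \;=\; N^{\mathcal{O}(km)}\,m!,
\]
where the $k^{m-1}$ factor is absorbed into $N^{\mathcal{O}(km)}$ using $k\leq N$. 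For part (ii), the number of compositions of $s\leq m$ is at most $2^{m}$, and for each candidate verifying that every block is a valid polymer and that the incompatibility graph on polymers is connected costs only $\mathrm{poly}(m,k)$ time. Both contributions fit inside the $N^{\mathcal{O}(km)}$ prefactor, yielding the claimed bound.

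The main obstacle is ensuring that the growth procedure actually reaches every cluster. A connected sub-hypergraph admits many BFS orderings starting from any of its hyperedges, so enumerating over all roots and traversal orders produces a superset of the set of clusters; this overcounting is harmless for a listing task, provided candidates can be canonicalized in polynomial time, which they can. A secondary subtlety is the treatment of multiplicities $m_\gamma(Z)$ inside a single polymer, but since the growth procedure already permits re-selecting an existing hyperedge at any step, the same counting applies without modification, and any duplicate candidates are filtered in the canonicalization stage.
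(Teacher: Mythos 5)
Your proposal is correct and follows essentially the same route as the paper: iteratively grow a connected hypergraph one hyperedge at a time, bounding the number of extensions at each step to obtain the $N^{\mathcal{O}(km)}m!$ count, and then extract the clusters supported on each such structure. The only difference is that where the paper delegates the hypergraph-to-cluster step to Theorem 6 of its algorithmic Pirogov--Sinai reference (at cost $\exp(\mathcal{O}(m))$), you make it explicit via block compositions of the ordered tuple together with a verification and canonicalization pass, which is an equally valid $2^{\mathcal{O}(m)}\cdot\mathrm{poly}(m,k)$ overhead.
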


\begin{proof}
    The first point is to enumerate all connected hypergraphs in $\Lambda$ of size at most $m$, including multiplicities. This can be done by generating them iteratively. We start by  fixing a specific vertex, of which there are $N$ possibilities. Then, there there are at most $(N-1)^k$ possible hyperedges to choose from to continue growing the hypergraph. After that, there are two sites, and at most $(N-2)^k$ edges to choose from. If we continue this process up to hypergraph size $m$, there are $\binom{N}{m}^k \left(m!\right)^2 \leq N^{km} m!$ hypergraphs $S_m$. Then, for each hypergraph $S_m$, we enumerate all polymers (multisets) of size at most $m$ whose corresponding hypergraph is $S_m$, and from that list of polymers, each connected cluster can be listed. The last two steps can be done in time $\exp{\mathcal{O}(m)}$, as shown in Theorem 6 of~\cite{AlgorithmicPirogov–Sinai}. Multiplying the two we obtain the runtime estimate.
\end{proof}

Lemma \ref{lemma:alg_countingclusters} is the one that differs more significantly with respect to the finite-range case, and constitutes the costliest subroutine of the approximation algorithm. To obtain algorithms with better performance, one could, for instance, change the exhaustive listings of the clusters by a scheme involving importance sampling over them, but the exact method to do it is currently unclear \cite{chen2021fastalgorithmslowtemperatures,blanca2023fastperfectsamplingsubgraphs}.

\begin{thm}\label{th:algorithm}
Let $\beta < \beta^*$. There exists an algorithm that outputs a function $f_\beta$ such that $\left \vert \log Z- f_\beta \right \vert \leq \epsilon$  in subexponential time,
    \begin{equation}
        \mathrm{e}^{\mathcal{O}(\log^2(N/\epsilon))}.
    \end{equation}
\end{thm}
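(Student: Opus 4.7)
The plan is to combine the convergence bound of Theorem~\ref{Theorem:KoteckyLR} with the enumeration subroutine of Lemma~\ref{lemma:alg_countingclusters} to explicitly compute the truncated Taylor polynomial $T_m$ for a suitably chosen $m$, and then control both the approximation error and the runtime.

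First I would fix the truncation order. By Theorem~\ref{Theorem:KoteckyLR},
\begin{equation*}
 \left|\log Z_\rho - T_m\right| \leq N \left|\beta/\beta^*\right|^m,
\end{equation*}
so choosing
\begin{equation*}
 m = \left\lceil \frac{\log(N/\epsilon)}{\log(\beta^*/\beta)} \right\rceil = \mathcal{O}\!\left(\log(N/\epsilon)\right)
\end{equation*}
guarantees that the truncation error is at most $\epsilon$. The algorithm then outputs $f_\beta := T_m$, and the remaining task is to bound the cost of evaluating $T_m$.

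Next I would describe how to evaluate $T_m$ explicitly. By Lemma~\ref{lemma:alg_countingclusters}, the list of all connected clusters of size at most $m$ can be produced in time $N^{\mathcal{O}(km)}\, m!$. For each cluster $\Gamma$ with incompatibility graph $H_\Gamma$ of at most $m$ vertices, the Ursell function $\varphi(H_\Gamma)$ is a finite sum over spanning connected subgraphs and can be evaluated in time $\exp(\mathcal{O}(m))$. The weight $w_\gamma$ in Eq.~\eqref{eq::weight} requires computing a trace of a product of at most $\|\gamma\|\leq m$ local terms against the product state $\rho$; since the whole product is supported on at most $km$ sites and $\rho$ factorises over them, this trace reduces to a computation on a Hilbert space of dimension $d^{\mathcal{O}(km)}$, and the symmetrised sum over $S_{\|\gamma\|}$ contributes another factor of $m!$. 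Thus the per-cluster cost is $\exp(\mathcal{O}(km))\, m!$.

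Multiplying enumeration and per-cluster costs gives a total runtime of $N^{\mathcal{O}(km)}(m!)^2 \exp(\mathcal{O}(km))$. Plugging in $m=\mathcal{O}(\log(N/\epsilon))$ and treating $k$ and $d$ as constants, the dominant term is $N^{\mathcal{O}(km)} = \exp(\mathcal{O}(\log N \cdot \log(N/\epsilon))) = \exp(\mathcal{O}(\log^2(N/\epsilon)))$, while $(m!)^2 = \exp(\mathcal{O}(m\log m))$ is subleading, which yields the claimed bound.

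The main obstacle is really the combinatorial one already packaged into Lemma~\ref{lemma:alg_countingclusters}: the number of connected clusters in a long-range model grows as $N^{\mathcal{O}(km)}$ because the interaction hypergraph is complete, and this factor is precisely what upgrades the polynomial-time scaling of the finite-range case to the subexponential scaling $\exp(\mathcal{O}(\log^2(N/\epsilon)))$. Everything else (truncation error, Ursell sums, local trace evaluation) is standard and fits comfortably inside this bound, so the proof reduces to carefully combining the truncation bound with this enumeration cost.
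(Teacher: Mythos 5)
Your proposal is correct and follows essentially the same route as the paper: truncate at $m=\mathcal{O}(\log(N/\epsilon)/\log(\beta^*/\beta))$, enumerate the connected clusters via Lemma~\ref{lemma:alg_countingclusters}, compute the Ursell functions and polymer weights for each cluster, and multiply the costs. The only imprecision is that evaluating the Ursell function by brute force over all spanning connected edge subsets of $H_\Gamma$ costs $\exp(\mathcal{O}(m^2))$ rather than $\exp(\mathcal{O}(m))$ (the paper instead invokes Lemma~6 of the Mann--Helmuth reference for the faster evaluation), but since $m=\mathcal{O}(\log(N/\epsilon))$ this does not affect the final $\mathrm{e}^{\mathcal{O}(\log^2(N/\epsilon))}$ bound.
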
 
\begin{proof}
    All clusters of size at most $m$ can be listed in time $N^{\mathcal{O}(km)}m!$ via Lemma~\ref{lemma:alg_countingclusters}. Then for each of these clusters, we can compute the Ursell function and the polymer weights $w_\gamma$ in $\exp{\mathcal{O}(m)}$ time by Lemmas 6 and 7 of~\cite{efficient_alg_Mann_2021} respectively (see also~\cite{Haah2024}). Therefore, each term in the series of $T_m$ can be computed in $N^{\mathcal{O}(k m)}m!$ steps. By choosing $m=\mathcal{O}(\log(N/\epsilon)/\log(\beta^*/\beta))$, it follows from Theorem~\ref{Theorem:KoteckyLR} that $T_m$ approximates $\log Z_\rho$ to $\epsilon$ precision, which can be computed exactly in $\mathcal{O}\left(N^{\mathcal{O}(\log N / \epsilon)} (\log (N/\epsilon)!\right)= \mathrm{e}^{\mathcal{O}(\log^2(N/\epsilon))}$  steps.
\end{proof}
Previous work on rigorous approximation algorithms for long range models is restricted to the 1D case for $\alpha>2$ covered in~\cite{achutha2024efficientsimulation1dlongrange}. This shows a tensor network algorithm for the Gibbs state at any temperature $\beta$, also with subexponential runtime $\mathrm{e}^{\mathcal{O}(\beta \log^2(N/\epsilon))}$. Our findings are complementary, as we cover arbitrary dimensions, beyond 1D, but restricted to high temperatures.

Note that if we only require a weaker approximation error, such that we output a function such that $\vert \log Z_\rho-f_\beta \vert \leq \epsilon' N$ (as defined in \cite{Bravyi_2022}), we reduce the time complexity of our algorithm to $\exp\!\big(\mathcal{O}(\log^2(\epsilon'^{-1}))\big)$, independent of system size. Notice that it is done by simply redefining the desired error $\epsilon \rightarrow \epsilon' N$.
\section{Statistical properties}\label{sec:statistical}
Given a quantum state $\rho$ the distribution of the possible outcomes of measuring an observable $A$ is
\begin{equation}
    p_{A, \rho} (a) =  \Tr [ \Pi_a \rho],
\end{equation}
where $\Pi_a$ is the projector onto the subspace of eigenstates of eigenvalue $a$. The order-$m$ moments of the distribution are $ \langle  A^m\rangle_\rho=\Tr [  A^m \rho]$, and the moment generating function of this distribution is defined as
\begin{equation}
    M_ A \left(\tau \right)=\Tr \left[  \mathrm{e}^{\tau A} \rho\right].
\end{equation}
We can control these functions thanks to Theorem~\ref{theo:KP_GenPartFunct} if $\rho$ is a Gibbs state or a product state. This enables us to elucidate statistical properties of long-range systems.

\subsection{Concentration bounds and ensemble equivalence}

\label{sec4_concentrationbounds}

We now show the Chernoff-Hoeffding concentration bound of the tail of the distribution $p_{\beta,A} (a)$, for Gibbs states $\rho = \mathrm{e}^{-\beta H}/Z$.

\begin{coro}
Let $\alpha>D$ and $\beta  \leq\beta^*/2$. It holds that
  \begin{equation}
P_{A, \beta} (|a - \langle A \rangle_{\beta} | > \delta) \leq 2 \exp \left( - \frac{\delta^2}{4 c_\beta N} \right),
\end{equation}
where $c_\beta=2u \left(\frac{1}{\beta^*-\beta}\right)^2$. 
\end{coro}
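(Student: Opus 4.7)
The plan is to control the cumulant generating function of $A$ with respect to the Gibbs state $\rho_\beta = \mathrm{e}^{-\beta H}/Z_\beta$ by using Theorem~\ref{theo:KP_GenPartFunct}, and then apply a standard Chernoff-type exponential Markov inequality. First, I would write the moment generating function as a ratio of generalised partition functions,
\begin{equation}
M_A(\tau) \;=\; \frac{\Tr[\mathrm{e}^{\tau A}\mathrm{e}^{-\beta H}]}{\Tr[\mathrm{e}^{-\beta H}]},
\end{equation}
so that $\log M_A(\tau)$ is a difference of two quantities to which Theorem~\ref{theo:KP_GenPartFunct} applies (with $K=2$, $H_1=A$, $H_2=H$, and $\rho$ the maximally mixed product state $\one/d^N$, whose normalisation cancels between numerator and denominator). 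Assuming $A$ is itself an extensive long-range observable of the form in Eqs.~\eqref{eq:k-body-term}--\eqref{eq:HkLocal}, the hypothesis $\beta \leq \beta^*/2$ puts us safely inside the convergence region $\sum_l|\lambda_l|\leq \beta^*/K$, so $\log M_A(\tau)$ is analytic in $\tau$ on a disk around the origin and admits a convergent cluster/Taylor expansion with remainder bounded by $N(2(|\tau|+\beta)/\beta^*)^M$.

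Second, I would introduce the centred cumulant generating function
\begin{equation}
g(\tau) \;:=\; \log M_A(\tau) - \tau\langle A\rangle_\beta,
\end{equation}
which by construction satisfies $g(0)=g'(0)=0$, so that $g(\tau)/\tau^2$ is analytic on the same disk. Combining the bound on $|\log Z_{Gen}|$ given by Theorem~\ref{theo:KP_GenPartFunct} with a Cauchy estimate for $\langle A\rangle_\beta = \partial_\tau\log M_A(\tau)|_{\tau=0}$ gives an $\mathcal{O}(N)$ bound for $g$ on the analyticity disk, and then the maximum modulus principle applied to $g(\tau)/\tau^2$ upgrades this to the quadratic bound $|g(\tau)| \leq c_\beta N \tau^2$ for $\tau$ in a slightly smaller subdisk, with $c_\beta$ an explicit constant of the stated form $c_\beta = 2u(\beta^*-\beta)^{-2}$. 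The geometric factor $u$ of Eq.~\eqref{eq:def-u} enters when one traces how Theorem~\ref{theo:KP_GenPartFunct} bounds the second-order coefficient of the expansion, since that coefficient is precisely $\mathrm{Var}_{\rho_\beta}(A)$, whose cluster-expansion contains sums over pairs of $A$- and $H$-hyperedges controlled by the uniform-summability estimate $\sum_{i'} J_{ii'} \lesssim u$.

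Third, I would plug the quadratic bound on $g$ into the exponential Markov inequality
\begin{equation}
P_{A,\beta}(a-\langle A\rangle_\beta > \delta) \;\leq\; \mathrm{e}^{-\tau\delta}\,\Tr[\mathrm{e}^{\tau(A-\langle A\rangle_\beta)}\rho_\beta] \;=\; \mathrm{e}^{-\tau\delta+g(\tau)} \;\leq\; \mathrm{e}^{-\tau\delta+c_\beta N\tau^2},
\end{equation}
and optimise over $\tau>0$, choosing $\tau_\star = \delta/(2c_\beta N)$ (which stays inside the analyticity disk for the deviations of interest). This yields the one-sided tail bound $\mathrm{e}^{-\delta^2/(4c_\beta N)}$, and the same argument applied to $-A$ together with a union bound produces the factor $2$ in the stated inequality. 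The main obstacle I expect is pinning down the sharp prefactor $c_\beta = 2u(\beta^*-\beta)^{-2}$ rather than a cruder $\mathcal{O}(1/(\beta^*/2-\beta)^2)$ that naive max-modulus would give; doing this cleanly requires bookkeeping the cluster expansion of the tilted variance $g''(\tau)$ using the long-range convolution/summability bounds behind Theorem~\ref{theo:KP_GenPartFunct}, so that the dominant pairing of one $A$-hyperedge against arbitrary $H$-hyperedges produces the factor $u$, while resummation in $\beta$ produces the radius $\beta^*-\beta$.
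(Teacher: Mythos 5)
Your proposal is correct and follows essentially the same route as the paper: both control the centred cumulant generating function via Theorem~\ref{theo:KP_GenPartFunct} with $K=2$ (one $\tau A$ exponential, one $-\beta H$ exponential), obtain a quadratic bound $|g(\tau)|\leq c_\beta N\tau^2$ with $c_\beta=\mathcal{O}\big((\beta^*/2-\beta)^{-2}\big)$, and finish with the exponential Markov inequality optimised at $\tau_\star=\delta/(2c_\beta N)$ plus a union bound over $\pm A$. The only difference is cosmetic — you extract the $\tau^2$ factor by a Cauchy/maximum-modulus estimate on $g(\tau)/\tau^2$, whereas the paper pulls $\left(\tau/(\beta^*/2-\beta)\right)^2$ directly out of the cluster sum using $n_\tau$-counting — and your worry about matching the exact prefactor $2u(\beta^*-\beta)^{-2}$ is well-founded, since the paper's own appendix in fact derives the cruder $(\beta^*/2-\beta)^{-2}$ form.
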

Thus, for macroscopically large deviations \( \delta \propto N \), the probability of measuring \( A \) to be away from \( \langle A \rangle_{\beta} \) by at least \( \delta \) is exponentially small. 

The bound follows from the inequality on the moment-generating function $M_A (\tau)$
\begin{equation}
\log M_A (\tau) =\log \langle \mathrm{e}^{\tau (A - \langle A \rangle_{\beta})} \rangle_{\beta} \leq c_\beta \tau^2 N,
\label{eq::tau2}
\end{equation}
whose derivation using Theorem~\ref{theo:KP_GenPartFunct} is left to Appendix~\ref{app:CHbound}. An analogous result also holds for $\rho$ product state via Theorem~\ref{Theorem:KoteckyLR}. This result was previously reported in~\cite{Kuwahara_2020_Gaussian}, and worked for high temperatures $\beta \le (8 \mathrm{e}^3 k gu)$, including long-range models \footnote{The convergence proof of the cluster expansion it relies on has recently been found to be flawed (see~\cite{Haah2024,classsimofshort_PRXQuantum.4.020340}), although a fix has been reported~\cite{tomotaka-private}.}. 

One of the main consequences of this bound is the equivalence of ensembles between the canonical and microcanonical states. In large systems, this means that the average macroscopic properties of both the thermal (canonical) state and the microcanonical ensemble are essentially the same.  This is thus an important feature of large systems at equilibrium, which motivates the search for criteria ensuring the equivalence of the two ensembles. This was known to hold for finite-range systems~\cite{Lima1971, MullerAdlamd2015,Kuwahara_2020_ETH}. Since ensemble equivalence can be shown to hold from the Chernoff-Hoeffding bound under rather general conditions~\cite{Kuwahara_2020_Gaussian,QMBD_inTE_reviewAA}, our main result implies the following.
\begin{coro}
    For long-range interacting systems, ensemble equivalence holds when \( \alpha > D \) and \( \beta \leq \beta^*/2 \).
\end{coro}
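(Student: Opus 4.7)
The plan is to reduce ensemble equivalence directly to the Chernoff-Hoeffding concentration bound established in the preceding corollary. The strategy is to use that bound, applied in particular to the Hamiltonian itself, to show that the energy distribution in the canonical Gibbs state is sharply peaked around its mean, so that averages of macroscopic observables in the canonical ensemble coincide with their microcanonical counterparts at the corresponding energy, up to corrections that vanish in the thermodynamic limit.

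More concretely, I would proceed as follows. First, instantiate the Chernoff-Hoeffding bound with $A = H$, obtaining that the distribution $p_{H,\beta}(E) = \text{Tr}[\Pi_E \mathrm{e}^{-\beta H}/Z]$ is concentrated in a window of width $\mathcal{O}(\sqrt{N})$ around $\langle H\rangle_\beta$, with Gaussian-like tails. Second, for any extensive observable $A$ that admits a decomposition of the type in Eq.~\eqref{eq:HkLocal}, decompose the canonical expectation as
\begin{equation}
\langle A\rangle_\beta = \sum_E p_{H,\beta}(E)\, \langle A\rangle_{\mathrm{mc},E},
\end{equation}
where $\langle A\rangle_{\mathrm{mc},E} = \text{Tr}[A\,\Pi_E]/\text{Tr}[\Pi_E]$ is the microcanonical expectation at energy $E$. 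By the concentration, this sum is dominated by a window of width $\mathcal{O}(\sqrt{N})$ around $\langle H\rangle_\beta$, so that canonical and microcanonical averages coincide up to subleading corrections in $N$. The reverse direction follows by applying the same concentration estimate to the microcanonical state, using the analyticity of $\log Z_\rho$ established in Theorem~\ref{Theorem:KoteckyLR} to control the Laplace relation between the two ensembles.

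The main technical obstacle is in lifting the pointwise concentration of the energy distribution to equivalence at the level of expectation values of arbitrary macroscopic observables. This requires either a continuity estimate of $\langle A\rangle_{\mathrm{mc},E}$ as a function of $E$, or the use of a suitably smeared microcanonical ensemble on an energy window of width $\mathcal{O}(\sqrt{N})$. Both approaches are carried out in detail in the framework of~\cite{Kuwahara_2020_Gaussian} (and reviewed in~\cite{QMBD_inTE_reviewAA}), and can be invoked essentially as a black box once the Chernoff-Hoeffding bound is available. Since that bound holds precisely for $\alpha > D$ and $\beta \leq \beta^*/2$, the corollary follows.
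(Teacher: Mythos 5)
Your proposal matches the paper's argument: the corollary is obtained by combining the Chernoff--Hoeffding concentration bound of the preceding corollary (valid for $\alpha>D$ and $\beta\leq\beta^*/2$) with the standard reduction from energy concentration to ensemble equivalence, invoked as a black box from~\cite{Kuwahara_2020_Gaussian,QMBD_inTE_reviewAA}. Your additional sketch of how that reduction works (writing the canonical average as a mixture of microcanonical ones over an $\mathcal{O}(\sqrt{N})$ energy window) is consistent with those references and adds nothing that conflicts with the paper's treatment.
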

However, ensemble inequivalence often arises in the strong long-range regime \( \alpha \leq D \)~\cite{Barr__2001,Kastner_2010,Russomanno_2021,Defenu_2024}.
The corollary then shows that, at least at high temperatures, the condition $\alpha \leq D$  is necessary for ensemble inequivalence. This also means that we do not expect our results to extend to the strong long-range regime.

\subsection{Gaussianity of the Density of States}
\label{sec2_Gaussianity}

By choosing a purely imaginary variable $\tau=i \lambda$ in the moment-generating function we recover the characteristic function
\begin{equation}
    \varphi_ A (\lambda)= \Tr [ \mathrm{e}^{i\lambda A} \rho] \,,
\end{equation}
of the distribution $p_{A, \rho} (a)$. The control we have over it due to Theorems~\ref{Theorem:KoteckyLR} and~\ref{theo:KP_GenPartFunct} allows us to prove further statements of probability theory, akin to the central limit theorem.

\begin{thm}
Consider the cumulative distribution function
\begin{equation}
    C(x)=\int_{-\infty}^x dy \, p_{A, \rho}(y), 
\end{equation}
where $\rho$ is either a product or Gibbs state at $\beta < \beta^*/2$, and the corresponding Gaussian cumulative function
      \begin{equation}
        \mathcal{G}(x)=\int_{-\infty}^x dy \frac{1}{\sqrt{2 \pi \sigma_\rho^2}} \mathrm{e}^{-\frac{\left(y-\langle A\rangle_\rho\right)^2}{2 \sigma_\rho^2}},
    \end{equation}
 with $\sigma_\rho^2= \langle  A^2\rangle_\rho - \langle  A\rangle_\rho^2 \geq \Omega(N^{1/2}) $. Their distance, defined as
 \begin{equation}
        \zeta_N = \max_x |C(x)-\mathcal{G}(x)|
    \end{equation}
is bounded by
 \begin{equation}
    \zeta_N \leq \mathcal{O}(N^{-1/2}).
    \end{equation}
    \label{th:LR_BEtheorem}
\end{thm}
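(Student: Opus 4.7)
The plan is to run a Berry--Esseen-style argument: apply Esseen's smoothing inequality to reduce $\zeta_N$ to an integral of $|\varphi_A(\lambda)-\varphi_G(\lambda)|/|\lambda|$, where $\varphi_A(\lambda)=\operatorname{Tr}[e^{i\lambda A}\rho]$ is the characteristic function and $\varphi_G(\lambda)=e^{i\lambda\langle A\rangle_\rho-\lambda^2\sigma_\rho^2/2}$ its Gaussian counterpart, and then control that integral using the cluster-expansion estimates of Theorems~\ref{Theorem:KoteckyLR} and~\ref{theo:KP_GenPartFunct}.

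First I would establish analyticity of $\log\varphi_A$ together with quantitative cumulant bounds. For a product state $\rho$, Theorem~\ref{Theorem:KoteckyLR} applied with the substitution $-\beta H\leftrightarrow i\lambda A$ yields $|\log\varphi_A(\lambda)-T_m|\leq N|\lambda/\beta^*|^m$ on $|\lambda|<\beta^*$. For a Gibbs state $\rho=e^{-\beta H}/Z$ at $\beta<\beta^*/2$, the same kind of remainder estimate on $|\lambda|<r:=\beta^*/2-\beta$ follows from Theorem~\ref{theo:KP_GenPartFunct} with $K=2$, $H_1=A$, $H_2=H$, $\lambda_1=i\lambda$, $\lambda_2=-\beta$, applied both to $\operatorname{Tr}[e^{i\lambda A}e^{-\beta H}\mathbb{I}/d^N]$ and to $Z$, and subtracting. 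Cauchy's estimates on the Taylor coefficients of this analytic function then yield cumulant bounds $|\kappa_k|\leq C_1 k!\,N/r^k$ for all $k\geq1$, and summing the resulting geometric tail produces
\begin{equation*}
\log\varphi_A(\lambda)=i\lambda\langle A\rangle_\rho-\tfrac{1}{2}\lambda^2\sigma_\rho^2+R(\lambda),\qquad|R(\lambda)|\leq C_2 N|\lambda|^3,
\end{equation*}
on $|\lambda|\leq r/2$.

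Next I would invoke Esseen's smoothing inequality
\begin{equation*}
\zeta_N\leq \frac{1}{\pi}\int_{-T}^{T}\left|\frac{\varphi_A(\lambda)-\varphi_G(\lambda)}{\lambda}\right|d\lambda+\frac{C_{BE}}{T\sigma_\rho},
\end{equation*}
for a suitably chosen $T$. Restricting $T$ so that $|R(\lambda)|\leq 1$ on $[-T,T]$, I write $\varphi_A=\varphi_G\,e^R$ and use $|e^R-1|\leq 2|R|$ together with $|\varphi_G(\lambda)|=e^{-\lambda^2\sigma_\rho^2/2}$ to bound the integrand by $2C_2 N|\lambda|^2 e^{-\lambda^2\sigma_\rho^2/2}$. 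The standard Gaussian second-moment identity $\int_{\mathbb{R}}|\lambda|^2 e^{-\lambda^2\sigma_\rho^2/2}d\lambda=\mathcal{O}(\sigma_\rho^{-3})$ therefore contributes $\mathcal{O}(N/\sigma_\rho^3)$ to the first Esseen term. Choosing $T$ at the boundary of the convergence window (of constant order) makes the second Esseen term $\mathcal{O}(1/(r\sigma_\rho))$, and combining both contributions with the growth hypothesis on $\sigma_\rho^2$ yields $\zeta_N\leq\mathcal{O}(N^{-1/2})$ as claimed.

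The main obstacle, and where the cluster-expansion control really pays off, is extracting the Cauchy-type estimate on the cumulants at the correct rate: this requires the convergence radius to be an $N$-independent constant, which is precisely what Theorems~\ref{Theorem:KoteckyLR}--\ref{theo:KP_GenPartFunct} provide. A secondary technical point is choosing $T$ so that it sits strictly inside the region where $|R|\leq 1$ while still being wide enough for the Gaussian damping to concentrate the integrand well within it; this balancing between the smoothing integral and the boundary term is what ultimately fixes the polynomial rate and is the step most sensitive to the growth hypothesis on $\sigma_\rho$.
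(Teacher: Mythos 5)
Your proposal follows essentially the same route as the paper: both derive the cubic remainder bound $\left|\log\varphi_A(\lambda)+\tfrac{1}{2}\lambda^2\sigma_\rho^2\right|\leq N\,|\lambda|^3/(\beta^*/2-\beta)^3$ from the cluster expansion (using Theorem~\ref{theo:KP_GenPartFunct} to absorb the Gibbs weight as a second exponential, exactly as you do) and then feed it into the Esseen smoothing inequality, a step the paper delegates to Appendix A of~\cite{Rai_2024} while you write it out explicitly. The one point to tighten is your choice of $T$: requiring $|R(\lambda)|\leq 1$ on all of $[-T,T]$ is incompatible with taking $T$ of constant order, so the smoothing integral must be split, using $|e^{R}-1|\leq 2|R|$ only near the origin and bounding $|\varphi_A|$ and $|\varphi_G|$ separately via their Gaussian decay on the remaining range --- the standard Berry--Esseen bookkeeping that the cited reference carries out and that your closing paragraph correctly identifies as the sensitive step.
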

\begin{figure}[t]
    \centering
    \includegraphics[width=\linewidth]{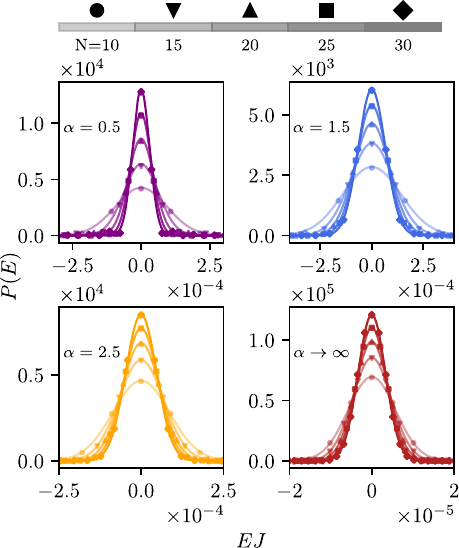}
    \caption{\justifying The density of states of the long-range transverse-field Ising model Eq.~\eqref{eq:def-TFI} varying $N$ for $\alpha=0.5, \, 1.5, \, 2.5$ and the short-ranged transverse-field Ising model ($\alpha \rightarrow \infty$). We show both the numerical estimation (markers) and the fit to a Gaussian curve (solid lines). The variance decreases with system size due to the rescaling of the Hamiltonian $H \rightarrow H/\norm{H}$.}
    \label{fig:DOSandCDOSvaryingAlpha}
\end{figure} 
This shows that the measurement statistics follow a Gaussian distribution when the system size is sufficiently large. The proof is left to Appendix~\ref{app:BE}. The assumption that the variance $\sigma_\rho $ of extensive observables growing at least $\propto N^{1/2}$ is a typical one, and rules out simple cases such as states $\rho$ that are simultaneously product and eigenstates of $A$. 

This result has further direct implications for ensemble equivalence,  thermalization and other related facts~\cite{brandao2015equivalencestatisticalmechanicalensembles,bertoni2024typicalthermalizationlowentanglementstates, Rai_2024}. 

A case of particular interest is the energy distribution $p_{E, \rho}$ for which we can simply choose $A=-H$. For $\rho = I/d$ this reduces to the familiar Density of States (DOS) 
\begin{equation}
    P(E)= \frac{1}{d} \sum_i \delta(E-E_i).
    \label{eq:dos}
\end{equation}
To better understand the Gaussianity of the DOS we performed tensor networks numerical simulations of the long-range transverse-field Ising (TFI) chain,
\begin{equation}
    H_{\text{LR, Ising}}= J \sum_{i < j}^N \frac{\sigma_i^x \sigma_{j}^x}{|i-j|^{\alpha}} + h \sum_{i=1}^N \sigma_i^z \label{eq:def-TFI} \,.
\end{equation}
This allows us to check whether the Gaussianity holds for smaller $\alpha \le D$, beyond the regime of applicability of our results.

As it can be observed in Fig.~\ref{fig:DOSandCDOSvaryingAlpha}, the numerical estimation of the DOS is apparently well captured by a Gaussian fit. 
\begin{figure}[t]
    \centering
    \includegraphics[width=\linewidth]{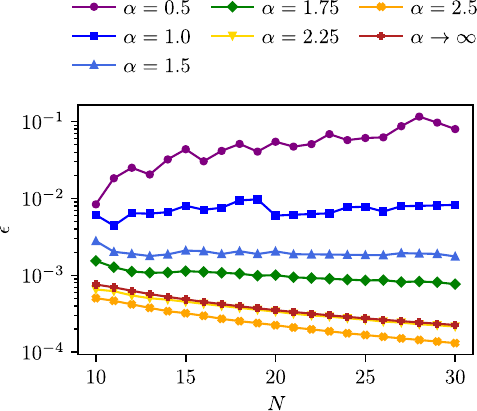}
    \caption{\label{fig:tvalue_varyingAlpha}
    \justifying
    Representation of the residuals $\epsilon = \frac{1}{R \max_i{y_i}} \sum_i^R |y_i -g(x_i)|^2$, where $g(x)$ is the Gaussian fit, as a function of system sizes and for different values of $\alpha$. For more details on the simulations see Appendix~\ref{app: sim}.}
\end{figure}
In Fig.~\ref{fig:tvalue_varyingAlpha} we plot the error of those Gaussian approximations, and find that for $\alpha>D=1$ the distributions are closer to Gaussian as $N$ increases, as expected from Theorem~\ref{th:LR_BEtheorem}. Furthermore, the error becomes smaller as the range of interactions decreases with $\alpha$. However, the error starts growing with $N$ for the cases of $\alpha \le D$, which is consistent with the expectation that the condition $\alpha>D$ is necessary for Theorems~\ref{Theorem:KoteckyLR} and~\ref{theo:KP_GenPartFunct} to hold.
\section{Absence of criticality}
We now establish how our results constrain the appearance of various forms of criticality in long-range models.  This is interesting from a fundamental viewpoint, since critical points separate regimes with different universal behavior \cite{Sachdev_2011}. The presence of criticality is also interesting for quantum simulation, since the diverging correlation lengths associated with it often pose a significant challenge for simulation methods \cite{PhysRevB.80.235127,PhysRevB.86.075117}.

\label{sec3_absenceofcriticality}
\subsection{Thermal and Dynamical Phase Transitions}
Since Theorem~\ref{Theorem:KoteckyLR} means that the log-partition function of long-range models is analytic in the thermodynamic limit for $\beta \leq \beta^*$, this rigorously implies the absence of high-temperature thermal phase transitions for long-range models, in analogy with the finite-range case. This covers a wider range of $\alpha$ (although a much narrower range in $\beta$) than the long-range extensions of the Mermin-Wagner theorem~\cite{Bruno_2001}. This is consistent with previous numerical results (e.g.~\cite{LRreview_Defenu_2023,Zhao2023}), and it also includes BKT-type transitions such as those recently observed in long-range $2D$ systems~\cite{KTPexample2_Giachetti_2021,BKT2PRB,KTPexample1_xiao2024twodimensionalxyferromagnetinduced}. For these, the critical temperature $\beta_{\text{BKT}}$ is estimated to be two orders of magnitude away from $\beta^*$~\cite{KTPexample1_xiao2024twodimensionalxyferromagnetinduced}.

This absence of thermal criticality can often be associated with other physical phenomena, such as the decay of correlations~\cite{Harrow_2020}.
For long-range systems, this decay has been proven in high temperatures in~\cite{kim2024thermalarealawlongrange}, and $1D$ at any temperatures for $\alpha>2$ in~\cite{kimura2024clusteringtheorem1dlongrange}. 

Dynamical phase transitions (DPTs) can be understood as a real-time counterpart of thermal phase transitions~\cite{DQPTs1_Halimeh_2017,DQPTs2_PhysRevB.104.115133,DQPTs3_unkovi__2018}. Since we can choose $\rho$ in Theorem~\ref{Theorem:KoteckyLR} to be any product state, our results constrain the time at which they can appear. Consider an infinite sequence of long-range Hamiltonians \( H_n \) acting on \( n \) particles, under the assumptions of the setup, and product states \( |\Psi\rangle = \bigotimes_{i=1}^{n} |\phi_i\rangle \), such that  
$
g_n(t) \equiv \log \langle \Psi | \mathrm{e}^{-\mathrm{i}tH_n} | \Psi \rangle.$
A DPT is said to occur when the following function becomes nonanalytic 
$
G(t) = \lim_{n \to \infty} \frac{g_n(t)}{n}.
$

Again as a direct consequence of Theorem~\ref{Theorem:KoteckyLR}, the function \( G(t) \) is analytic for \( t < \beta^*  \), implying that DPTs can only occur at later times.  
\subsection{Excited State Quantum Phase transitions}
A less-studied type of phase transition in quantum systems are Excited State Quantum Phase Transitions (ESQPTs). These transitions are characterized by singularities in the DOS at specific critical energies, indicating a fundamental shift in the structure of a quantum system's excited-state spectrum. Considering the definition of the DOS in Eq.~\eqref{eq:dos}, they are defined in the thermodynamic limit as follows~\cite{ESQPTs_Cejnar_2021,ESQPTs_PhysRevE.78.021106,ESQPTsPhysRevA.94.012113}.
\begin{Def}
An ESQPT occurs at energy density $e_c$ if $p(e)\equiv \lim_{N \rightarrow \infty} P(e N)$ is nonanalytic at $e=e_c$.
\end{Def}
This type of phenomenon has been found to appear in many-body spin systems with a well-defined classical limit of few degrees of freedom. These typically have permutation-symmetric Hamiltonians defined through collective spin operators, such as the LMG~\cite{Ribeiro_2007} or Dicke~\cite{Brandes_2013} models.  
In the review~\cite{ESQPTs_Cejnar_2021}, it is left open whether similar phenomena occurs in quantum systems with many degrees of freedom, such as those with a local structure, arguing that existing methods are unable to identify ESQPTs.

We can critically examine this open question by considering Theorem~\ref{th:LR_BEtheorem}, which implies that the DOS of many-body systems with long-range (but sufficiently decaying) interactions is Gaussian. For the thermodynamic limit, we can consider a renormalization Hamiltonian $H^* \equiv H/ \norm{H}$, and standard deviation $\sigma^* \equiv \sigma / \norm{H}$. As we increase $N$, the DOS becomes sharper with system size, as $\lim_{N\to \infty} \sigma^* =0$ (see Fig.~\ref{fig:DOSandCDOSvaryingAlpha} for a numerical illustration). Due to Theorem~\ref{th:LR_BEtheorem}, we can conclude that, defining $h \equiv \lim_{N\rightarrow \infty} \tr{\frac{H^*}{d^N}}$, we have
\begin{equation}
    p(e)= \delta(e-h).
\end{equation}

The only singularity is the one at $e=h$, since otherwise the DOS is trivial due to the overwhelming majority of the eigenstates concentrating at the average. Therefore, ESQPTs are seemingly trivialized by the Gaussianity of the DOS in both finite range and long range spin systems for $\alpha>D$. For $\alpha \le D$, the numerical results of Fig.~\ref{fig:DOSandCDOSvaryingAlpha} also suggest a similar picture, even if we do not expect Theorem~\ref{th:LR_BEtheorem} to hold there (see Fig.~\ref{fig:tvalue_varyingAlpha}).
\section{Conclusion} \label{sec6_conclusions}
We have shown how partition functions of long-range models at high temperatures, and generalizations thereof, behave similarly to those of finite-range models as long as $\alpha>D$. This is possible due to the convergence of the cluster expansion, whose proof is more technically involved due to the nature of the long-range interactions. We also provide numerical evidence illustrating the fact that extensions to $\alpha \le D$ are unlikely to hold.

The main quantitative difference of the present work with respect to the finite-range setting is the runtime of the approximation algorithm of Theorem~\ref{th:algorithm}. While for finite range it is clear that the algorithms are fully polynomial, in both $N$ and $\epsilon^{-1}$, here the fact that interactions are highly nonlocal means that we can only obtain superpolynomial runtimes. 

It is currently unclear whether rigorous polynomial time classical algorithms are possible, even in 1D~\cite{achutha2024efficientsimulation1dlongrange}. However, it is known that efficient quantum algorithms for the same task exist, provided that one can efficiently prepare the associated Gibbs states $\rho_\beta$ (which is not possible under general conditions). This is because if one has access to polynomially many samples of $\rho_\beta$, one can additively approximate $\log Z$ efficiently~\cite{Poulin_2009,Bravyi_2022}. The efficient preparation was shown in~\cite{rouzé2024optimalquantumalgorithmgibbs} under stronger restrictions than those of Theorem~\ref{Theorem:KoteckyLR} on the temperature and interaction ranges, although quantum efficiency is expected to hold more widely.

This situation might suggest a superpolynomial separation between classical and quantum algorithms for estimating partition functions of long-range models. However, it is possible that such separation is simply due to the lack of classical algorithms with better run-times than Theorem~\ref{th:algorithm}. One possible path toward them is to draw inspiration from classical Monte-Carlo schemes for long-range Ising models~\cite{LongRangeMonteCarlo,Flores-Sola2017,chen2021fastalgorithmslowtemperatures,blanca2023fastperfectsamplingsubgraphs}.

Our work paves the way to extending to long-range systems other important features already known for high-temperature Gibbs states of finite-range Hamiltonians. It would be particularly interesting to show a relation between the convergence of the cluster expansion and the decay of correlations~\cite{Harrow_2020} or if the Gibbs states are separable at high enough temperatures ~\cite{bakshi2025hightemperaturegibbsstatesunentangled}. Some other examples are tensor network approximations~\cite{Kliesch_2014,Molnar_2015}, the Markov structure~\cite{kuwahara2024clusteringconditionalmutualinformation,chen2025quantumgibbsstateslocally,kato2025clusteringconditionalmutualinformation,capel2024conditionalindependence1dgibbs} or efficient Hamiltonian learning~\cite{Haah2024,bakshi2023learning,chen2025learningquantumgibbsstates}.

It also has potential implications for the classical simulability results of local dynamics~\cite{classsimofshort_PRXQuantum.4.020340,mcdonough2025liebrobinson}. It would be interesting to study the relation of the convergence of $\log Z_\rho$ with the Lieb-Robinson bounds~\cite{Nachtergaele_2006,LRB_1PhysRevX.10.031010,LRB2_PhysRevLett.127.160401}.
\medskip
\begin{acknowledgments}
The authors thank Alejandro Gonz\'alez Tudela for suggesting the topic of long-range models, and Samuel Scalet and Daniel Stilck França for suggesting Lemma~\ref{lemma:alg_countingclusters}.
The work of J.S.S. was supported by the Spanish Research Council, CSIC, through the JAE-PREDOC 2024 and JAE-INTRO 2023 programs.
J.T.S acknowledges funding from the Grant TED2021-130552B-C22 funded by MCIN/AEI/10.13039/501100011033 and by the ``European Union NextGenerationEU/PRTR''.
A.M.A. acknowledges support from the Spanish Agencia Estatal de Investigacion through the grants ``IFT Centro de Excelencia Severo Ochoa CEX2020-001007-S'' and ``Ramón y Cajal RyC2021-031610-I'', financed by MCIN/AEI/10.13039/501100011033 and the European Union NextGenerationEU/PRTR. 
This project was funded within the QuantERA II Programme that has received funding from the EU’s H2020 research and innovation programme under the GANo. 101017733..
\end{acknowledgments}

\bibliographystyle{apsrev4-2}
\bibliography{references}

\begin{thebibliography}{99}%
\makeatletter
\providecommand \@ifxundefined [1]{%
 \@ifx{#1\undefined}
}%
\providecommand \@ifnum [1]{%
 \ifnum #1\expandafter \@firstoftwo
 \else \expandafter \@secondoftwo
 \fi
}%
\providecommand \@ifx [1]{%
 \ifx #1\expandafter \@firstoftwo
 \else \expandafter \@secondoftwo
 \fi
}%
\providecommand \natexlab [1]{#1}%
\providecommand \enquote  [1]{``#1''}%
\providecommand \bibnamefont  [1]{#1}%
\providecommand \bibfnamefont [1]{#1}%
\providecommand \citenamefont [1]{#1}%
\providecommand \href@noop [0]{\@secondoftwo}%
\providecommand \href [0]{\begingroup \@sanitize@url \@href}%
\providecommand \@href[1]{\@@startlink{#1}\@@href}%
\providecommand \@@href[1]{\endgroup#1\@@endlink}%
\providecommand \@sanitize@url [0]{\catcode `\\12\catcode `\$12\catcode
  `\&12\catcode `\#12\catcode `\^12\catcode `\_12\catcode `\%12\relax}%
\providecommand \@@startlink[1]{}%
\providecommand \@@endlink[0]{}%
\providecommand \url  [0]{\begingroup\@sanitize@url \@url }%
\providecommand \@url [1]{\endgroup\@href {#1}{\urlprefix }}%
\providecommand \urlprefix  [0]{URL }%
\providecommand \Eprint [0]{\href }%
\providecommand \doibase [0]{https://doi.org/}%
\providecommand \selectlanguage [0]{\@gobble}%
\providecommand \bibinfo  [0]{\@secondoftwo}%
\providecommand \bibfield  [0]{\@secondoftwo}%
\providecommand \translation [1]{[#1]}%
\providecommand \BibitemOpen [0]{}%
\providecommand \bibitemStop [0]{}%
\providecommand \bibitemNoStop [0]{.\EOS\space}%
\providecommand \EOS [0]{\spacefactor3000\relax}%
\providecommand \BibitemShut  [1]{\csname bibitem#1\endcsname}%
\let\auto@bib@innerbib\@empty
\bibitem [{\citenamefont {Foss-Feig}\ \emph {et~al.}(2025)\citenamefont
  {Foss-Feig}, \citenamefont {Pagano}, \citenamefont {Potter},\ and\
  \citenamefont {Yao}}]{ReviewTrappedIons}%
  \BibitemOpen
  \bibfield  {author} {\bibinfo {author} {\bibfnamefont {M.}~\bibnamefont
  {Foss-Feig}}, \bibinfo {author} {\bibfnamefont {G.}~\bibnamefont {Pagano}},
  \bibinfo {author} {\bibfnamefont {A.~C.}\ \bibnamefont {Potter}},\ and\
  \bibinfo {author} {\bibfnamefont {N.~Y.}\ \bibnamefont {Yao}},\ }\href
  {https://doi.org/https://doi.org/10.1146/annurev-conmatphys-032822-045619}
  {\bibfield  {journal} {\bibinfo  {journal} {Annual Review of Condensed Matter
  Physics}\ }\textbf {\bibinfo {volume} {16}},\ \bibinfo {pages} {145}
  (\bibinfo {year} {2025})}\BibitemShut {NoStop}%
\bibitem [{\citenamefont {Browaeys}\ and\ \citenamefont
  {Lahaye}(2020)}]{Browaeys2020}%
  \BibitemOpen
  \bibfield  {author} {\bibinfo {author} {\bibfnamefont {A.}~\bibnamefont
  {Browaeys}}\ and\ \bibinfo {author} {\bibfnamefont {T.}~\bibnamefont
  {Lahaye}},\ }\href {https://doi.org/10.1038/s41567-019-0733-z} {\bibfield
  {journal} {\bibinfo  {journal} {Nature Physics}\ }\textbf {\bibinfo {volume}
  {16}},\ \bibinfo {pages} {132} (\bibinfo {year} {2020})}\BibitemShut
  {NoStop}%
\bibitem [{\citenamefont {Douglas}\ \emph {et~al.}(2015)\citenamefont
  {Douglas}, \citenamefont {Habibian}, \citenamefont {Hung}, \citenamefont
  {Gorshkov}, \citenamefont {Kimble},\ and\ \citenamefont
  {Chang}}]{Douglas_2015}%
  \BibitemOpen
  \bibfield  {author} {\bibinfo {author} {\bibfnamefont {J.~S.}\ \bibnamefont
  {Douglas}}, \bibinfo {author} {\bibfnamefont {H.}~\bibnamefont {Habibian}},
  \bibinfo {author} {\bibfnamefont {C.-L.}\ \bibnamefont {Hung}}, \bibinfo
  {author} {\bibfnamefont {A.~V.}\ \bibnamefont {Gorshkov}}, \bibinfo {author}
  {\bibfnamefont {H.~J.}\ \bibnamefont {Kimble}},\ and\ \bibinfo {author}
  {\bibfnamefont {D.~E.}\ \bibnamefont {Chang}},\ }\href
  {https://doi.org/10.1038/nphoton.2015.57} {\bibfield  {journal} {\bibinfo
  {journal} {Nature Photonics}\ }\textbf {\bibinfo {volume} {9}},\ \bibinfo
  {pages} {326–331} (\bibinfo {year} {2015})}\BibitemShut {NoStop}%
\bibitem [{\citenamefont {Hung}\ \emph {et~al.}(2016)\citenamefont {Hung},
  \citenamefont {González-Tudela}, \citenamefont {Cirac},\ and\ \citenamefont
  {Kimble}}]{Hung_2016}%
  \BibitemOpen
  \bibfield  {author} {\bibinfo {author} {\bibfnamefont {C.-L.}\ \bibnamefont
  {Hung}}, \bibinfo {author} {\bibfnamefont {A.}~\bibnamefont
  {González-Tudela}}, \bibinfo {author} {\bibfnamefont {J.~I.}\ \bibnamefont
  {Cirac}},\ and\ \bibinfo {author} {\bibfnamefont {H.~J.}\ \bibnamefont
  {Kimble}},\ }\bibfield  {journal} {\bibinfo  {journal} {Proceedings of the
  National Academy of Sciences}\ }\textbf {\bibinfo {volume} {113}},\ \href
  {https://doi.org/10.1073/pnas.1603777113} {10.1073/pnas.1603777113} (\bibinfo
  {year} {2016})\BibitemShut {NoStop}%
\bibitem [{\citenamefont {Cornish}\ \emph {et~al.}(2024)\citenamefont
  {Cornish}, \citenamefont {Tarbutt},\ and\ \citenamefont
  {Hazzard}}]{Cornish2024}%
  \BibitemOpen
  \bibfield  {author} {\bibinfo {author} {\bibfnamefont {S.~L.}\ \bibnamefont
  {Cornish}}, \bibinfo {author} {\bibfnamefont {M.~R.}\ \bibnamefont
  {Tarbutt}},\ and\ \bibinfo {author} {\bibfnamefont {K.~R.~A.}\ \bibnamefont
  {Hazzard}},\ }\href {https://doi.org/10.1038/s41567-024-02453-9} {\bibfield
  {journal} {\bibinfo  {journal} {Nature Physics}\ }\textbf {\bibinfo {volume}
  {20}},\ \bibinfo {pages} {730} (\bibinfo {year} {2024})}\BibitemShut
  {NoStop}%
\bibitem [{\citenamefont {Islam}\ \emph {et~al.}(2013)\citenamefont {Islam},
  \citenamefont {Senko}, \citenamefont {Campbell}, \citenamefont {Korenblit},
  \citenamefont {Smith}, \citenamefont {Lee}, \citenamefont {Edwards},
  \citenamefont {Wang}, \citenamefont {Freericks},\ and\ \citenamefont
  {Monroe}}]{IslamExperiment}%
  \BibitemOpen
  \bibfield  {author} {\bibinfo {author} {\bibfnamefont {R.}~\bibnamefont
  {Islam}}, \bibinfo {author} {\bibfnamefont {C.}~\bibnamefont {Senko}},
  \bibinfo {author} {\bibfnamefont {W.~C.}\ \bibnamefont {Campbell}}, \bibinfo
  {author} {\bibfnamefont {S.}~\bibnamefont {Korenblit}}, \bibinfo {author}
  {\bibfnamefont {J.}~\bibnamefont {Smith}}, \bibinfo {author} {\bibfnamefont
  {A.}~\bibnamefont {Lee}}, \bibinfo {author} {\bibfnamefont {E.~E.}\
  \bibnamefont {Edwards}}, \bibinfo {author} {\bibfnamefont {C.-C.~J.}\
  \bibnamefont {Wang}}, \bibinfo {author} {\bibfnamefont {J.~K.}\ \bibnamefont
  {Freericks}},\ and\ \bibinfo {author} {\bibfnamefont {C.}~\bibnamefont
  {Monroe}},\ }\href {https://doi.org/10.1126/science.1232296} {\bibfield
  {journal} {\bibinfo  {journal} {Science}\ }\textbf {\bibinfo {volume}
  {340}},\ \bibinfo {pages} {583} (\bibinfo {year} {2013})},\ \Eprint
  {https://arxiv.org/abs/https://www.science.org/doi/pdf/10.1126/science.1232296}
  {https://www.science.org/doi/pdf/10.1126/science.1232296} \BibitemShut
  {NoStop}%
\bibitem [{\citenamefont {Richerme}\ \emph {et~al.}(2014)\citenamefont
  {Richerme}, \citenamefont {Gong}, \citenamefont {Lee}, \citenamefont {Senko},
  \citenamefont {Smith}, \citenamefont {Foss-Feig}, \citenamefont {Michalakis},
  \citenamefont {Gorshkov},\ and\ \citenamefont {Monroe}}]{Richerme2014}%
  \BibitemOpen
  \bibfield  {author} {\bibinfo {author} {\bibfnamefont {P.}~\bibnamefont
  {Richerme}}, \bibinfo {author} {\bibfnamefont {Z.-X.}\ \bibnamefont {Gong}},
  \bibinfo {author} {\bibfnamefont {A.}~\bibnamefont {Lee}}, \bibinfo {author}
  {\bibfnamefont {C.}~\bibnamefont {Senko}}, \bibinfo {author} {\bibfnamefont
  {J.}~\bibnamefont {Smith}}, \bibinfo {author} {\bibfnamefont
  {M.}~\bibnamefont {Foss-Feig}}, \bibinfo {author} {\bibfnamefont
  {S.}~\bibnamefont {Michalakis}}, \bibinfo {author} {\bibfnamefont {A.~V.}\
  \bibnamefont {Gorshkov}},\ and\ \bibinfo {author} {\bibfnamefont
  {C.}~\bibnamefont {Monroe}},\ }\href {https://doi.org/10.1038/nature13450}
  {\bibfield  {journal} {\bibinfo  {journal} {Nature}\ }\textbf {\bibinfo
  {volume} {511}},\ \bibinfo {pages} {198} (\bibinfo {year}
  {2014})}\BibitemShut {NoStop}%
\bibitem [{\citenamefont {Joshi}\ \emph {et~al.}(2022)\citenamefont {Joshi},
  \citenamefont {Kranzl}, \citenamefont {Schuckert}, \citenamefont {Lovas},
  \citenamefont {Maier}, \citenamefont {Blatt}, \citenamefont {Knap},\ and\
  \citenamefont {Roos}}]{Joshi_2022}%
  \BibitemOpen
  \bibfield  {author} {\bibinfo {author} {\bibfnamefont {M.~K.}\ \bibnamefont
  {Joshi}}, \bibinfo {author} {\bibfnamefont {F.}~\bibnamefont {Kranzl}},
  \bibinfo {author} {\bibfnamefont {A.}~\bibnamefont {Schuckert}}, \bibinfo
  {author} {\bibfnamefont {I.}~\bibnamefont {Lovas}}, \bibinfo {author}
  {\bibfnamefont {C.}~\bibnamefont {Maier}}, \bibinfo {author} {\bibfnamefont
  {R.}~\bibnamefont {Blatt}}, \bibinfo {author} {\bibfnamefont
  {M.}~\bibnamefont {Knap}},\ and\ \bibinfo {author} {\bibfnamefont {C.~F.}\
  \bibnamefont {Roos}},\ }\href {https://doi.org/10.1126/science.abk2400}
  {\bibfield  {journal} {\bibinfo  {journal} {Science}\ }\textbf {\bibinfo
  {volume} {376}},\ \bibinfo {pages} {720–724} (\bibinfo {year}
  {2022})}\BibitemShut {NoStop}%
\bibitem [{\citenamefont {Joshi}\ \emph {et~al.}(2020)\citenamefont {Joshi},
  \citenamefont {Elben}, \citenamefont {Vermersch}, \citenamefont {Brydges},
  \citenamefont {Maier}, \citenamefont {Zoller}, \citenamefont {Blatt},\ and\
  \citenamefont {Roos}}]{JoshiPRL}%
  \BibitemOpen
  \bibfield  {author} {\bibinfo {author} {\bibfnamefont {M.~K.}\ \bibnamefont
  {Joshi}}, \bibinfo {author} {\bibfnamefont {A.}~\bibnamefont {Elben}},
  \bibinfo {author} {\bibfnamefont {B.}~\bibnamefont {Vermersch}}, \bibinfo
  {author} {\bibfnamefont {T.}~\bibnamefont {Brydges}}, \bibinfo {author}
  {\bibfnamefont {C.}~\bibnamefont {Maier}}, \bibinfo {author} {\bibfnamefont
  {P.}~\bibnamefont {Zoller}}, \bibinfo {author} {\bibfnamefont
  {R.}~\bibnamefont {Blatt}},\ and\ \bibinfo {author} {\bibfnamefont {C.~F.}\
  \bibnamefont {Roos}},\ }\href
  {https://doi.org/10.1103/PhysRevLett.124.240505} {\bibfield  {journal}
  {\bibinfo  {journal} {Phys. Rev. Lett.}\ }\textbf {\bibinfo {volume} {124}},\
  \bibinfo {pages} {240505} (\bibinfo {year} {2020})}\BibitemShut {NoStop}%
\bibitem [{\citenamefont {Barré}\ \emph {et~al.}(2001)\citenamefont {Barré},
  \citenamefont {Mukamel},\ and\ \citenamefont {Ruffo}}]{Barr__2001}%
  \BibitemOpen
  \bibfield  {author} {\bibinfo {author} {\bibfnamefont {J.}~\bibnamefont
  {Barré}}, \bibinfo {author} {\bibfnamefont {D.}~\bibnamefont {Mukamel}},\
  and\ \bibinfo {author} {\bibfnamefont {S.}~\bibnamefont {Ruffo}},\ }\bibfield
   {journal} {\bibinfo  {journal} {Physical Review Letters}\ }\textbf {\bibinfo
  {volume} {87}},\ \href {https://doi.org/10.1103/physrevlett.87.030601}
  {10.1103/physrevlett.87.030601} (\bibinfo {year} {2001})\BibitemShut
  {NoStop}%
\bibitem [{\citenamefont {Kastner}(2010)}]{Kastner_2010}%
  \BibitemOpen
  \bibfield  {author} {\bibinfo {author} {\bibfnamefont {M.}~\bibnamefont
  {Kastner}},\ }\bibfield  {journal} {\bibinfo  {journal} {Physical Review
  Letters}\ }\textbf {\bibinfo {volume} {104}},\ \href
  {https://doi.org/10.1103/physrevlett.104.240403}
  {10.1103/physrevlett.104.240403} (\bibinfo {year} {2010})\BibitemShut
  {NoStop}%
\bibitem [{\citenamefont {Russomanno}\ \emph {et~al.}(2021)\citenamefont
  {Russomanno}, \citenamefont {Fava},\ and\ \citenamefont
  {Heyl}}]{Russomanno_2021}%
  \BibitemOpen
  \bibfield  {author} {\bibinfo {author} {\bibfnamefont {A.}~\bibnamefont
  {Russomanno}}, \bibinfo {author} {\bibfnamefont {M.}~\bibnamefont {Fava}},\
  and\ \bibinfo {author} {\bibfnamefont {M.}~\bibnamefont {Heyl}},\ }\bibfield
  {journal} {\bibinfo  {journal} {Physical Review B}\ }\textbf {\bibinfo
  {volume} {104}},\ \href {https://doi.org/10.1103/physrevb.104.094309}
  {10.1103/physrevb.104.094309} (\bibinfo {year} {2021})\BibitemShut {NoStop}%
\bibitem [{\citenamefont {Defenu}\ \emph
  {et~al.}(2024{\natexlab{a}})\citenamefont {Defenu}, \citenamefont {Mukamel},\
  and\ \citenamefont {Ruffo}}]{Defenu_2024}%
  \BibitemOpen
  \bibfield  {author} {\bibinfo {author} {\bibfnamefont {N.}~\bibnamefont
  {Defenu}}, \bibinfo {author} {\bibfnamefont {D.}~\bibnamefont {Mukamel}},\
  and\ \bibinfo {author} {\bibfnamefont {S.}~\bibnamefont {Ruffo}},\ }\bibfield
   {journal} {\bibinfo  {journal} {Physical Review Letters}\ }\textbf {\bibinfo
  {volume} {133}},\ \href {https://doi.org/10.1103/physrevlett.133.050403}
  {10.1103/physrevlett.133.050403} (\bibinfo {year}
  {2024}{\natexlab{a}})\BibitemShut {NoStop}%
\bibitem [{\citenamefont {Campa}\ \emph {et~al.}(2014)\citenamefont {Campa},
  \citenamefont {Dauxois}, \citenamefont {Fanelli},\ and\ \citenamefont
  {Ruffo}}]{campa2014e}%
  \BibitemOpen
  \bibfield  {author} {\bibinfo {author} {\bibfnamefont {A.}~\bibnamefont
  {Campa}}, \bibinfo {author} {\bibfnamefont {T.}~\bibnamefont {Dauxois}},
  \bibinfo {author} {\bibfnamefont {D.}~\bibnamefont {Fanelli}},\ and\ \bibinfo
  {author} {\bibfnamefont {S.}~\bibnamefont {Ruffo}},\ }in\ \href
  {https://doi.org/10.1093/acprof:oso/9780199581931.003.0002} {\emph {\bibinfo
  {booktitle} {Physics of {{Long-Range Interacting Systems}}}}},\ \bibinfo
  {editor} {edited by\ \bibinfo {editor} {\bibfnamefont {A.}~\bibnamefont
  {Campa}}, \bibinfo {editor} {\bibfnamefont {T.}~\bibnamefont {Dauxois}},
  \bibinfo {editor} {\bibfnamefont {D.}~\bibnamefont {Fanelli}},\ and\ \bibinfo
  {editor} {\bibfnamefont {S.}~\bibnamefont {Ruffo}}}\ (\bibinfo  {publisher}
  {Oxford University Press},\ \bibinfo {year} {2014})\ pp.\ \bibinfo {pages}
  {32--60}\BibitemShut {NoStop}%
\bibitem [{\citenamefont {Defenu}(2021)}]{defenu2021c}%
  \BibitemOpen
  \bibfield  {author} {\bibinfo {author} {\bibfnamefont {N.}~\bibnamefont
  {Defenu}},\ }\href {https://doi.org/10.1073/pnas.2101785118} {\bibfield
  {journal} {\bibinfo  {journal} {Proceedings of the National Academy of
  Sciences}\ }\textbf {\bibinfo {volume} {118}},\ \bibinfo {pages}
  {e2101785118} (\bibinfo {year} {2021})}\BibitemShut {NoStop}%
\bibitem [{\citenamefont {Bornet}\ \emph {et~al.}(2023)\citenamefont {Bornet},
  \citenamefont {Emperauger}, \citenamefont {Chen}, \citenamefont {Ye},
  \citenamefont {Block}, \citenamefont {Bintz}, \citenamefont {Boyd},
  \citenamefont {Barredo}, \citenamefont {Comparin}, \citenamefont {Mezzacapo},
  \citenamefont {Roscilde}, \citenamefont {Lahaye}, \citenamefont {Yao},\ and\
  \citenamefont {Browaeys}}]{Bornet_2023}%
  \BibitemOpen
  \bibfield  {author} {\bibinfo {author} {\bibfnamefont {G.}~\bibnamefont
  {Bornet}}, \bibinfo {author} {\bibfnamefont {G.}~\bibnamefont {Emperauger}},
  \bibinfo {author} {\bibfnamefont {C.}~\bibnamefont {Chen}}, \bibinfo {author}
  {\bibfnamefont {B.}~\bibnamefont {Ye}}, \bibinfo {author} {\bibfnamefont
  {M.}~\bibnamefont {Block}}, \bibinfo {author} {\bibfnamefont
  {M.}~\bibnamefont {Bintz}}, \bibinfo {author} {\bibfnamefont {J.~A.}\
  \bibnamefont {Boyd}}, \bibinfo {author} {\bibfnamefont {D.}~\bibnamefont
  {Barredo}}, \bibinfo {author} {\bibfnamefont {T.}~\bibnamefont {Comparin}},
  \bibinfo {author} {\bibfnamefont {F.}~\bibnamefont {Mezzacapo}}, \bibinfo
  {author} {\bibfnamefont {T.}~\bibnamefont {Roscilde}}, \bibinfo {author}
  {\bibfnamefont {T.}~\bibnamefont {Lahaye}}, \bibinfo {author} {\bibfnamefont
  {N.~Y.}\ \bibnamefont {Yao}},\ and\ \bibinfo {author} {\bibfnamefont
  {A.}~\bibnamefont {Browaeys}},\ }\href
  {https://doi.org/10.1038/s41586-023-06414-9} {\bibfield  {journal} {\bibinfo
  {journal} {Nature}\ }\textbf {\bibinfo {volume} {621}},\ \bibinfo {pages}
  {728–733} (\bibinfo {year} {2023})}\BibitemShut {NoStop}%
\bibitem [{\citenamefont {Eckner}\ \emph {et~al.}(2023)\citenamefont {Eckner},
  \citenamefont {Darkwah~Oppong}, \citenamefont {Cao}, \citenamefont {Young},
  \citenamefont {Milner}, \citenamefont {Robinson}, \citenamefont {Ye},\ and\
  \citenamefont {Kaufman}}]{Eckner_2023}%
  \BibitemOpen
  \bibfield  {author} {\bibinfo {author} {\bibfnamefont {W.~J.}\ \bibnamefont
  {Eckner}}, \bibinfo {author} {\bibfnamefont {N.}~\bibnamefont
  {Darkwah~Oppong}}, \bibinfo {author} {\bibfnamefont {A.}~\bibnamefont {Cao}},
  \bibinfo {author} {\bibfnamefont {A.~W.}\ \bibnamefont {Young}}, \bibinfo
  {author} {\bibfnamefont {W.~R.}\ \bibnamefont {Milner}}, \bibinfo {author}
  {\bibfnamefont {J.~M.}\ \bibnamefont {Robinson}}, \bibinfo {author}
  {\bibfnamefont {J.}~\bibnamefont {Ye}},\ and\ \bibinfo {author}
  {\bibfnamefont {A.~M.}\ \bibnamefont {Kaufman}},\ }\href
  {https://doi.org/10.1038/s41586-023-06360-6} {\bibfield  {journal} {\bibinfo
  {journal} {Nature}\ }\textbf {\bibinfo {volume} {621}},\ \bibinfo {pages}
  {734–739} (\bibinfo {year} {2023})}\BibitemShut {NoStop}%
\bibitem [{\citenamefont {Kastner}(2011)}]{KastnerEquilibration}%
  \BibitemOpen
  \bibfield  {author} {\bibinfo {author} {\bibfnamefont {M.}~\bibnamefont
  {Kastner}},\ }\href {https://doi.org/10.1103/PhysRevLett.106.130601}
  {\bibfield  {journal} {\bibinfo  {journal} {Phys. Rev. Lett.}\ }\textbf
  {\bibinfo {volume} {106}},\ \bibinfo {pages} {130601} (\bibinfo {year}
  {2011})}\BibitemShut {NoStop}%
\bibitem [{\citenamefont {Hauke}\ and\ \citenamefont
  {Tagliacozzo}(2013)}]{Hauke_2013}%
  \BibitemOpen
  \bibfield  {author} {\bibinfo {author} {\bibfnamefont {P.}~\bibnamefont
  {Hauke}}\ and\ \bibinfo {author} {\bibfnamefont {L.}~\bibnamefont
  {Tagliacozzo}},\ }\bibfield  {journal} {\bibinfo  {journal} {Physical Review
  Letters}\ }\textbf {\bibinfo {volume} {111}},\ \href
  {https://doi.org/10.1103/physrevlett.111.207202}
  {10.1103/physrevlett.111.207202} (\bibinfo {year} {2013})\BibitemShut
  {NoStop}%
\bibitem [{\citenamefont {Schneider}\ \emph {et~al.}(2021)\citenamefont
  {Schneider}, \citenamefont {Despres}, \citenamefont {Thomson}, \citenamefont
  {Tagliacozzo},\ and\ \citenamefont
  {Sanchez-Palencia}}]{jan_PhysRevResearch.3.L012022}%
  \BibitemOpen
  \bibfield  {author} {\bibinfo {author} {\bibfnamefont {J.~T.}\ \bibnamefont
  {Schneider}}, \bibinfo {author} {\bibfnamefont {J.}~\bibnamefont {Despres}},
  \bibinfo {author} {\bibfnamefont {S.~J.}\ \bibnamefont {Thomson}}, \bibinfo
  {author} {\bibfnamefont {L.}~\bibnamefont {Tagliacozzo}},\ and\ \bibinfo
  {author} {\bibfnamefont {L.}~\bibnamefont {Sanchez-Palencia}},\ }\href
  {https://doi.org/10.1103/PhysRevResearch.3.L012022} {\bibfield  {journal}
  {\bibinfo  {journal} {Phys. Rev. Res.}\ }\textbf {\bibinfo {volume} {3}},\
  \bibinfo {pages} {L012022} (\bibinfo {year} {2021})}\BibitemShut {NoStop}%
\bibitem [{\citenamefont {Kuwahara}\ and\ \citenamefont
  {Saito}(2020{\natexlab{a}})}]{LRB_1PhysRevX.10.031010}%
  \BibitemOpen
  \bibfield  {author} {\bibinfo {author} {\bibfnamefont {T.}~\bibnamefont
  {Kuwahara}}\ and\ \bibinfo {author} {\bibfnamefont {K.}~\bibnamefont
  {Saito}},\ }\href {https://doi.org/10.1103/PhysRevX.10.031010} {\bibfield
  {journal} {\bibinfo  {journal} {Phys. Rev. X}\ }\textbf {\bibinfo {volume}
  {10}},\ \bibinfo {pages} {031010} (\bibinfo {year}
  {2020}{\natexlab{a}})}\BibitemShut {NoStop}%
\bibitem [{\citenamefont {Tran}\ \emph {et~al.}(2021)\citenamefont {Tran},
  \citenamefont {Guo}, \citenamefont {Baldwin}, \citenamefont {Ehrenberg},
  \citenamefont {Gorshkov},\ and\ \citenamefont
  {Lucas}}]{LRB2_PhysRevLett.127.160401}%
  \BibitemOpen
  \bibfield  {author} {\bibinfo {author} {\bibfnamefont {M.~C.}\ \bibnamefont
  {Tran}}, \bibinfo {author} {\bibfnamefont {A.~Y.}\ \bibnamefont {Guo}},
  \bibinfo {author} {\bibfnamefont {C.~L.}\ \bibnamefont {Baldwin}}, \bibinfo
  {author} {\bibfnamefont {A.}~\bibnamefont {Ehrenberg}}, \bibinfo {author}
  {\bibfnamefont {A.~V.}\ \bibnamefont {Gorshkov}},\ and\ \bibinfo {author}
  {\bibfnamefont {A.}~\bibnamefont {Lucas}},\ }\href
  {https://doi.org/10.1103/PhysRevLett.127.160401} {\bibfield  {journal}
  {\bibinfo  {journal} {Phys. Rev. Lett.}\ }\textbf {\bibinfo {volume} {127}},\
  \bibinfo {pages} {160401} (\bibinfo {year} {2021})}\BibitemShut {NoStop}%
\bibitem [{\citenamefont {Pappalardi}\ \emph {et~al.}(2018)\citenamefont
  {Pappalardi}, \citenamefont {Russomanno}, \citenamefont {Žunkovič},
  \citenamefont {Iemini}, \citenamefont {Silva},\ and\ \citenamefont
  {Fazio}}]{Pappalardi_2018}%
  \BibitemOpen
  \bibfield  {author} {\bibinfo {author} {\bibfnamefont {S.}~\bibnamefont
  {Pappalardi}}, \bibinfo {author} {\bibfnamefont {A.}~\bibnamefont
  {Russomanno}}, \bibinfo {author} {\bibfnamefont {B.}~\bibnamefont
  {Žunkovič}}, \bibinfo {author} {\bibfnamefont {F.}~\bibnamefont {Iemini}},
  \bibinfo {author} {\bibfnamefont {A.}~\bibnamefont {Silva}},\ and\ \bibinfo
  {author} {\bibfnamefont {R.}~\bibnamefont {Fazio}},\ }\bibfield  {journal}
  {\bibinfo  {journal} {Physical Review B}\ }\textbf {\bibinfo {volume} {98}},\
  \href {https://doi.org/10.1103/physrevb.98.134303}
  {10.1103/physrevb.98.134303} (\bibinfo {year} {2018})\BibitemShut {NoStop}%
\bibitem [{\citenamefont {Mori}(2019)}]{Mori_2019}%
  \BibitemOpen
  \bibfield  {author} {\bibinfo {author} {\bibfnamefont {T.}~\bibnamefont
  {Mori}},\ }\href {https://doi.org/10.1088/1751-8121/aaf9db} {\bibfield
  {journal} {\bibinfo  {journal} {Journal of Physics A: Mathematical and
  Theoretical}\ }\textbf {\bibinfo {volume} {52}},\ \bibinfo {pages} {054001}
  (\bibinfo {year} {2019})}\BibitemShut {NoStop}%
\bibitem [{\citenamefont {\ifmmode \check{Z}\else
  \v{Z}\fi{}unkovi\ifmmode~\check{c}\else \v{c}\fi{}}\ \emph
  {et~al.}(2018)\citenamefont {\ifmmode \check{Z}\else
  \v{Z}\fi{}unkovi\ifmmode~\check{c}\else \v{c}\fi{}}, \citenamefont {Heyl},
  \citenamefont {Knap},\ and\ \citenamefont {Silva}}]{DPTsLongRange}%
  \BibitemOpen
  \bibfield  {author} {\bibinfo {author} {\bibfnamefont {B.}~\bibnamefont
  {\ifmmode \check{Z}\else \v{Z}\fi{}unkovi\ifmmode~\check{c}\else
  \v{c}\fi{}}}, \bibinfo {author} {\bibfnamefont {M.}~\bibnamefont {Heyl}},
  \bibinfo {author} {\bibfnamefont {M.}~\bibnamefont {Knap}},\ and\ \bibinfo
  {author} {\bibfnamefont {A.}~\bibnamefont {Silva}},\ }\href
  {https://doi.org/10.1103/PhysRevLett.120.130601} {\bibfield  {journal}
  {\bibinfo  {journal} {Phys. Rev. Lett.}\ }\textbf {\bibinfo {volume} {120}},\
  \bibinfo {pages} {130601} (\bibinfo {year} {2018})}\BibitemShut {NoStop}%
\bibitem [{\citenamefont {Defenu}\ \emph
  {et~al.}(2024{\natexlab{b}})\citenamefont {Defenu}, \citenamefont {Lerose},\
  and\ \citenamefont {Pappalardi}}]{DefenuReview_2024}%
  \BibitemOpen
  \bibfield  {author} {\bibinfo {author} {\bibfnamefont {N.}~\bibnamefont
  {Defenu}}, \bibinfo {author} {\bibfnamefont {A.}~\bibnamefont {Lerose}},\
  and\ \bibinfo {author} {\bibfnamefont {S.}~\bibnamefont {Pappalardi}},\
  }\href {https://doi.org/10.1016/j.physrep.2024.04.005} {\bibfield  {journal}
  {\bibinfo  {journal} {Physics Reports}\ }\bibinfo {series}
  {Out-of-Equilibrium Dynamics of Quantum Many-Body Systems with Long-Range
  Interactions},\ \textbf {\bibinfo {volume} {1074}},\ \bibinfo {pages} {1}
  (\bibinfo {year} {2024}{\natexlab{b}})}\BibitemShut {NoStop}%
\bibitem [{\citenamefont {Mattes}\ \emph {et~al.}(2025)\citenamefont {Mattes},
  \citenamefont {Lesanovsky},\ and\ \citenamefont
  {Carollo}}]{mattes2025longrangeinteractingsystemslocally}%
  \BibitemOpen
  \bibfield  {author} {\bibinfo {author} {\bibfnamefont {R.}~\bibnamefont
  {Mattes}}, \bibinfo {author} {\bibfnamefont {I.}~\bibnamefont {Lesanovsky}},\
  and\ \bibinfo {author} {\bibfnamefont {F.}~\bibnamefont {Carollo}},\ }\href
  {https://arxiv.org/abs/2407.02141} {\bibinfo {title} {Long-range interacting
  systems are locally non-interacting}} (\bibinfo {year} {2025}),\ \Eprint
  {https://arxiv.org/abs/2407.02141} {arXiv:2407.02141 [cond-mat.stat-mech]}
  \BibitemShut {NoStop}%
\bibitem [{\citenamefont {Gong}\ \emph {et~al.}(2017)\citenamefont {Gong},
  \citenamefont {Foss-Feig}, \citenamefont {Brandão},\ and\ \citenamefont
  {Gorshkov}}]{Gong_2017}%
  \BibitemOpen
  \bibfield  {author} {\bibinfo {author} {\bibfnamefont {Z.-X.}\ \bibnamefont
  {Gong}}, \bibinfo {author} {\bibfnamefont {M.}~\bibnamefont {Foss-Feig}},
  \bibinfo {author} {\bibfnamefont {F.~G.}\ \bibnamefont {Brandão}},\ and\
  \bibinfo {author} {\bibfnamefont {A.~V.}\ \bibnamefont {Gorshkov}},\
  }\bibfield  {journal} {\bibinfo  {journal} {Physical Review Letters}\
  }\textbf {\bibinfo {volume} {119}},\ \href
  {https://doi.org/10.1103/physrevlett.119.050501}
  {10.1103/physrevlett.119.050501} (\bibinfo {year} {2017})\BibitemShut
  {NoStop}%
\bibitem [{\citenamefont {Kuwahara}\ and\ \citenamefont
  {Saito}(2020{\natexlab{b}})}]{Kuwahara_2020_AreaLaw}%
  \BibitemOpen
  \bibfield  {author} {\bibinfo {author} {\bibfnamefont {T.}~\bibnamefont
  {Kuwahara}}\ and\ \bibinfo {author} {\bibfnamefont {K.}~\bibnamefont
  {Saito}},\ }\bibfield  {journal} {\bibinfo  {journal} {Nature
  Communications}\ }\textbf {\bibinfo {volume} {11}},\ \href
  {https://doi.org/10.1038/s41467-020-18055-x} {10.1038/s41467-020-18055-x}
  (\bibinfo {year} {2020}{\natexlab{b}})\BibitemShut {NoStop}%
\bibitem [{\citenamefont {Schneider}\ \emph {et~al.}(2022)\citenamefont
  {Schneider}, \citenamefont {Thomson},\ and\ \citenamefont
  {Sanchez-Palencia}}]{jan_PhysRevB.106.014306}%
  \BibitemOpen
  \bibfield  {author} {\bibinfo {author} {\bibfnamefont {J.~T.}\ \bibnamefont
  {Schneider}}, \bibinfo {author} {\bibfnamefont {S.~J.}\ \bibnamefont
  {Thomson}},\ and\ \bibinfo {author} {\bibfnamefont {L.}~\bibnamefont
  {Sanchez-Palencia}},\ }\href {https://doi.org/10.1103/PhysRevB.106.014306}
  {\bibfield  {journal} {\bibinfo  {journal} {Phys. Rev. B}\ }\textbf {\bibinfo
  {volume} {106}},\ \bibinfo {pages} {014306} (\bibinfo {year}
  {2022})}\BibitemShut {NoStop}%
\bibitem [{\citenamefont {Defenu}\ \emph {et~al.}(2023)\citenamefont {Defenu},
  \citenamefont {Donner}, \citenamefont {Macrì}, \citenamefont {Pagano},
  \citenamefont {Ruffo},\ and\ \citenamefont
  {Trombettoni}}]{LRreview_Defenu_2023}%
  \BibitemOpen
  \bibfield  {author} {\bibinfo {author} {\bibfnamefont {N.}~\bibnamefont
  {Defenu}}, \bibinfo {author} {\bibfnamefont {T.}~\bibnamefont {Donner}},
  \bibinfo {author} {\bibfnamefont {T.}~\bibnamefont {Macrì}}, \bibinfo
  {author} {\bibfnamefont {G.}~\bibnamefont {Pagano}}, \bibinfo {author}
  {\bibfnamefont {S.}~\bibnamefont {Ruffo}},\ and\ \bibinfo {author}
  {\bibfnamefont {A.}~\bibnamefont {Trombettoni}},\ }\bibfield  {journal}
  {\bibinfo  {journal} {Reviews of Modern Physics}\ }\textbf {\bibinfo {volume}
  {95}},\ \href {https://doi.org/10.1103/revmodphys.95.035002}
  {10.1103/revmodphys.95.035002} (\bibinfo {year} {2023})\BibitemShut {NoStop}%
\bibitem [{\citenamefont {Poulin}\ and\ \citenamefont
  {Wocjan}(2009)}]{Poulin_2009}%
  \BibitemOpen
  \bibfield  {author} {\bibinfo {author} {\bibfnamefont {D.}~\bibnamefont
  {Poulin}}\ and\ \bibinfo {author} {\bibfnamefont {P.}~\bibnamefont
  {Wocjan}},\ }\bibfield  {journal} {\bibinfo  {journal} {Physical Review
  Letters}\ }\textbf {\bibinfo {volume} {103}},\ \href
  {https://doi.org/10.1103/physrevlett.103.220502}
  {10.1103/physrevlett.103.220502} (\bibinfo {year} {2009})\BibitemShut
  {NoStop}%
\bibitem [{\citenamefont {Bravyi}\ \emph {et~al.}(2022)\citenamefont {Bravyi},
  \citenamefont {Chowdhury}, \citenamefont {Gosset},\ and\ \citenamefont
  {Wocjan}}]{Bravyi_2022}%
  \BibitemOpen
  \bibfield  {author} {\bibinfo {author} {\bibfnamefont {S.}~\bibnamefont
  {Bravyi}}, \bibinfo {author} {\bibfnamefont {A.}~\bibnamefont {Chowdhury}},
  \bibinfo {author} {\bibfnamefont {D.}~\bibnamefont {Gosset}},\ and\ \bibinfo
  {author} {\bibfnamefont {P.}~\bibnamefont {Wocjan}},\ }\href
  {https://doi.org/10.1038/s41567-022-01742-5} {\bibfield  {journal} {\bibinfo
  {journal} {Nature Physics}\ }\textbf {\bibinfo {volume} {18}},\ \bibinfo
  {pages} {1367–1370} (\bibinfo {year} {2022})}\BibitemShut {NoStop}%
\bibitem [{\citenamefont {Helmuth}\ \emph {et~al.}(2020)\citenamefont
  {Helmuth}, \citenamefont {Perkins},\ and\ \citenamefont
  {Regts}}]{AlgorithmicPirogov–Sinai}%
  \BibitemOpen
  \bibfield  {author} {\bibinfo {author} {\bibfnamefont {T.}~\bibnamefont
  {Helmuth}}, \bibinfo {author} {\bibfnamefont {W.}~\bibnamefont {Perkins}},\
  and\ \bibinfo {author} {\bibfnamefont {G.}~\bibnamefont {Regts}},\ }\href
  {https://doi.org/10.1007/s00440-019-00928-} {\bibfield  {journal} {\bibinfo
  {journal} {Probability Theory and Related Fields}\ }\textbf {\bibinfo
  {volume} {176}},\ \bibinfo {pages} {851} (\bibinfo {year}
  {2020})}\BibitemShut {NoStop}%
\bibitem [{\citenamefont {Mann}\ and\ \citenamefont
  {Helmuth}(2021)}]{efficient_alg_Mann_2021}%
  \BibitemOpen
  \bibfield  {author} {\bibinfo {author} {\bibfnamefont {R.~L.}\ \bibnamefont
  {Mann}}\ and\ \bibinfo {author} {\bibfnamefont {T.}~\bibnamefont {Helmuth}},\
  }\bibfield  {journal} {\bibinfo  {journal} {Journal of Mathematical Physics}\
  }\textbf {\bibinfo {volume} {62}},\ \href {https://doi.org/10.1063/5.0013689}
  {10.1063/5.0013689} (\bibinfo {year} {2021})\BibitemShut {NoStop}%
\bibitem [{\citenamefont {Harrow}\ \emph {et~al.}(2020)\citenamefont {Harrow},
  \citenamefont {Mehraban},\ and\ \citenamefont {Soleimanifar}}]{Harrow_2020}%
  \BibitemOpen
  \bibfield  {author} {\bibinfo {author} {\bibfnamefont {A.~W.}\ \bibnamefont
  {Harrow}}, \bibinfo {author} {\bibfnamefont {S.}~\bibnamefont {Mehraban}},\
  and\ \bibinfo {author} {\bibfnamefont {M.}~\bibnamefont {Soleimanifar}},\
  }\bibfield  {journal} {\bibinfo  {journal} {Proceedings of the 52nd Annual
  ACM SIGACT Symposium on Theory of Computing}\ }\href
  {https://doi.org/10.1145/3357713.3384322} {10.1145/3357713.3384322} (\bibinfo
  {year} {2020})\BibitemShut {NoStop}%
\bibitem [{\citenamefont {Haah}\ \emph {et~al.}(2024)\citenamefont {Haah},
  \citenamefont {Kothari},\ and\ \citenamefont {Tang}}]{Haah2024}%
  \BibitemOpen
  \bibfield  {author} {\bibinfo {author} {\bibfnamefont {J.}~\bibnamefont
  {Haah}}, \bibinfo {author} {\bibfnamefont {R.}~\bibnamefont {Kothari}},\ and\
  \bibinfo {author} {\bibfnamefont {E.}~\bibnamefont {Tang}},\ }\href
  {https://doi.org/10.1038/s41567-023-02376-x} {\bibfield  {journal} {\bibinfo
  {journal} {Nature Physics}\ }\textbf {\bibinfo {volume} {20}},\ \bibinfo
  {pages} {1027} (\bibinfo {year} {2024})}\BibitemShut {NoStop}%
\bibitem [{\citenamefont {Kuwahara}\ and\ \citenamefont
  {Saito}(2020{\natexlab{c}})}]{Kuwahara_2020_Gaussian}%
  \BibitemOpen
  \bibfield  {author} {\bibinfo {author} {\bibfnamefont {T.}~\bibnamefont
  {Kuwahara}}\ and\ \bibinfo {author} {\bibfnamefont {K.}~\bibnamefont
  {Saito}},\ }\href {https://doi.org/10.1016/j.aop.2020.168278} {\bibfield
  {journal} {\bibinfo  {journal} {Annals of Physics}\ }\textbf {\bibinfo
  {volume} {421}},\ \bibinfo {pages} {168278} (\bibinfo {year}
  {2020}{\natexlab{c}})}\BibitemShut {NoStop}%
\bibitem [{\citenamefont {Hartmann}\ \emph {et~al.}(2004)\citenamefont
  {Hartmann}, \citenamefont {mahler},\ and\ \citenamefont
  {Hess}}]{Hartmann_2004}%
  \BibitemOpen
  \bibfield  {author} {\bibinfo {author} {\bibfnamefont {M.}~\bibnamefont
  {Hartmann}}, \bibinfo {author} {\bibfnamefont {G.}~\bibnamefont {mahler}},\
  and\ \bibinfo {author} {\bibfnamefont {O.}~\bibnamefont {Hess}},\ }\href
  {https://doi.org/10.1023/b:math.0000043321.00896.86} {\bibfield  {journal}
  {\bibinfo  {journal} {Letters in Mathematical Physics}\ }\textbf {\bibinfo
  {volume} {68}},\ \bibinfo {pages} {103–112} (\bibinfo {year}
  {2004})}\BibitemShut {NoStop}%
\bibitem [{\citenamefont {Rai}\ \emph {et~al.}(2024)\citenamefont {Rai},
  \citenamefont {Cirac},\ and\ \citenamefont {Alhambra}}]{Rai_2024}%
  \BibitemOpen
  \bibfield  {author} {\bibinfo {author} {\bibfnamefont {K.~S.}\ \bibnamefont
  {Rai}}, \bibinfo {author} {\bibfnamefont {J.~I.}\ \bibnamefont {Cirac}},\
  and\ \bibinfo {author} {\bibfnamefont {{\'A}.~M.}\ \bibnamefont {Alhambra}},\
  }\href {https://doi.org/10.22331/q-2024-07-10-1401} {\bibfield  {journal}
  {\bibinfo  {journal} {Quantum}\ }\textbf {\bibinfo {volume} {8}},\ \bibinfo
  {pages} {1401} (\bibinfo {year} {2024})}\BibitemShut {NoStop}%
\bibitem [{\citenamefont {Cejnar}\ \emph {et~al.}(2021)\citenamefont {Cejnar},
  \citenamefont {Stránský}, \citenamefont {Macek},\ and\ \citenamefont
  {Kloc}}]{ESQPTs_Cejnar_2021}%
  \BibitemOpen
  \bibfield  {author} {\bibinfo {author} {\bibfnamefont {P.}~\bibnamefont
  {Cejnar}}, \bibinfo {author} {\bibfnamefont {P.}~\bibnamefont {Stránský}},
  \bibinfo {author} {\bibfnamefont {M.}~\bibnamefont {Macek}},\ and\ \bibinfo
  {author} {\bibfnamefont {M.}~\bibnamefont {Kloc}},\ }\href
  {https://doi.org/10.1088/1751-8121/abdfe8} {\bibfield  {journal} {\bibinfo
  {journal} {Journal of Physics A: Mathematical and Theoretical}\ }\textbf
  {\bibinfo {volume} {54}},\ \bibinfo {pages} {133001} (\bibinfo {year}
  {2021})}\BibitemShut {NoStop}%
\bibitem [{\citenamefont {Koteck{\'y}}\ and\ \citenamefont
  {Preiss}(1986)}]{kotecky1986cluster}%
  \BibitemOpen
  \bibfield  {author} {\bibinfo {author} {\bibfnamefont {R.}~\bibnamefont
  {Koteck{\'y}}}\ and\ \bibinfo {author} {\bibfnamefont {D.}~\bibnamefont
  {Preiss}},\ }\href {https://doi.org/10.1007/BF01211762} {\bibfield  {journal}
  {\bibinfo  {journal} {Communications in Mathematical Physics}\ }\textbf
  {\bibinfo {volume} {103}},\ \bibinfo {pages} {491} (\bibinfo {year}
  {1986})}\BibitemShut {NoStop}%
\bibitem [{\citenamefont {Helmuth}\ and\ \citenamefont
  {Mann}(2023)}]{Helmuth_2023}%
  \BibitemOpen
  \bibfield  {author} {\bibinfo {author} {\bibfnamefont {T.}~\bibnamefont
  {Helmuth}}\ and\ \bibinfo {author} {\bibfnamefont {R.~L.}\ \bibnamefont
  {Mann}},\ }\href {https://doi.org/10.22331/q-2023-10-25-1155} {\bibfield
  {journal} {\bibinfo  {journal} {Quantum}\ }\textbf {\bibinfo {volume} {7}},\
  \bibinfo {pages} {1155} (\bibinfo {year} {2023})}\BibitemShut {NoStop}%
\bibitem [{\citenamefont {Wild}\ and\ \citenamefont
  {Alhambra}(2023)}]{classsimofshort_PRXQuantum.4.020340}%
  \BibitemOpen
  \bibfield  {author} {\bibinfo {author} {\bibfnamefont {D.~S.}\ \bibnamefont
  {Wild}}\ and\ \bibinfo {author} {\bibfnamefont {A.~M.}\ \bibnamefont
  {Alhambra}},\ }\href {https://doi.org/10.1103/PRXQuantum.4.020340} {\bibfield
   {journal} {\bibinfo  {journal} {PRX Quantum}\ }\textbf {\bibinfo {volume}
  {4}},\ \bibinfo {pages} {020340} (\bibinfo {year} {2023})}\BibitemShut
  {NoStop}%
\bibitem [{\citenamefont {Mann}\ and\ \citenamefont {Minko}(2024)}]{Mann_2024}%
  \BibitemOpen
  \bibfield  {author} {\bibinfo {author} {\bibfnamefont {R.~L.}\ \bibnamefont
  {Mann}}\ and\ \bibinfo {author} {\bibfnamefont {R.~M.}\ \bibnamefont
  {Minko}},\ }\bibfield  {journal} {\bibinfo  {journal} {PRX Quantum}\ }\textbf
  {\bibinfo {volume} {5}},\ \href {https://doi.org/10.1103/prxquantum.5.010305}
  {10.1103/prxquantum.5.010305} (\bibinfo {year} {2024})\BibitemShut {NoStop}%
\bibitem [{\citenamefont {Kliesch}\ \emph {et~al.}(2014)\citenamefont
  {Kliesch}, \citenamefont {Gogolin}, \citenamefont {Kastoryano}, \citenamefont
  {Riera},\ and\ \citenamefont {Eisert}}]{Kliesch_2014}%
  \BibitemOpen
  \bibfield  {author} {\bibinfo {author} {\bibfnamefont {M.}~\bibnamefont
  {Kliesch}}, \bibinfo {author} {\bibfnamefont {C.}~\bibnamefont {Gogolin}},
  \bibinfo {author} {\bibfnamefont {M.}~\bibnamefont {Kastoryano}}, \bibinfo
  {author} {\bibfnamefont {A.}~\bibnamefont {Riera}},\ and\ \bibinfo {author}
  {\bibfnamefont {J.}~\bibnamefont {Eisert}},\ }\bibfield  {journal} {\bibinfo
  {journal} {Physical Review X}\ }\textbf {\bibinfo {volume} {4}},\ \href
  {https://doi.org/10.1103/physrevx.4.031019} {10.1103/physrevx.4.031019}
  (\bibinfo {year} {2014})\BibitemShut {NoStop}%
\bibitem [{\citenamefont {Molnar}\ \emph {et~al.}(2015)\citenamefont {Molnar},
  \citenamefont {Schuch}, \citenamefont {Verstraete},\ and\ \citenamefont
  {Cirac}}]{Molnar_2015}%
  \BibitemOpen
  \bibfield  {author} {\bibinfo {author} {\bibfnamefont {A.}~\bibnamefont
  {Molnar}}, \bibinfo {author} {\bibfnamefont {N.}~\bibnamefont {Schuch}},
  \bibinfo {author} {\bibfnamefont {F.}~\bibnamefont {Verstraete}},\ and\
  \bibinfo {author} {\bibfnamefont {J.~I.}\ \bibnamefont {Cirac}},\ }\bibfield
  {journal} {\bibinfo  {journal} {Physical Review B}\ }\textbf {\bibinfo
  {volume} {91}},\ \href {https://doi.org/10.1103/physrevb.91.045138}
  {10.1103/physrevb.91.045138} (\bibinfo {year} {2015})\BibitemShut {NoStop}%
\bibitem [{\citenamefont {Yin}\ and\ \citenamefont
  {Lucas}(2023)}]{yin2023polynomialtimeclassicalsamplinghightemperature}%
  \BibitemOpen
  \bibfield  {author} {\bibinfo {author} {\bibfnamefont {C.}~\bibnamefont
  {Yin}}\ and\ \bibinfo {author} {\bibfnamefont {A.}~\bibnamefont {Lucas}},\
  }\href {https://arxiv.org/abs/2305.18514} {\bibinfo {title} {Polynomial-time
  classical sampling of high-temperature quantum gibbs states}} (\bibinfo
  {year} {2023}),\ \Eprint {https://arxiv.org/abs/2305.18514} {arXiv:2305.18514
  [quant-ph]} \BibitemShut {NoStop}%
\bibitem [{\citenamefont {Bluhm}\ \emph {et~al.}(2024)\citenamefont {Bluhm},
  \citenamefont {Ángela Capel},\ and\ \citenamefont
  {Pérez-Hernández}}]{bluhm2024strongdecaycorrelationsgibbs}%
  \BibitemOpen
  \bibfield  {author} {\bibinfo {author} {\bibfnamefont {A.}~\bibnamefont
  {Bluhm}}, \bibinfo {author} {\bibnamefont {Ángela Capel}},\ and\ \bibinfo
  {author} {\bibfnamefont {A.}~\bibnamefont {Pérez-Hernández}},\ }\href
  {https://arxiv.org/abs/2401.10147} {\bibinfo {title} {Strong decay of
  correlations for gibbs states in any dimension}} (\bibinfo {year} {2024}),\
  \Eprint {https://arxiv.org/abs/2401.10147} {arXiv:2401.10147 [quant-ph]}
  \BibitemShut {NoStop}%
\bibitem [{\citenamefont {Bakshi}\ \emph {et~al.}(2025)\citenamefont {Bakshi},
  \citenamefont {Liu}, \citenamefont {Moitra},\ and\ \citenamefont
  {Tang}}]{bakshi2025hightemperaturegibbsstatesunentangled}%
  \BibitemOpen
  \bibfield  {author} {\bibinfo {author} {\bibfnamefont {A.}~\bibnamefont
  {Bakshi}}, \bibinfo {author} {\bibfnamefont {A.}~\bibnamefont {Liu}},
  \bibinfo {author} {\bibfnamefont {A.}~\bibnamefont {Moitra}},\ and\ \bibinfo
  {author} {\bibfnamefont {E.}~\bibnamefont {Tang}},\ }\href
  {https://arxiv.org/abs/2403.16850} {\bibinfo {title} {High-temperature gibbs
  states are unentangled and efficiently preparable}} (\bibinfo {year}
  {2025}),\ \Eprint {https://arxiv.org/abs/2403.16850} {arXiv:2403.16850
  [quant-ph]} \BibitemShut {NoStop}%
\bibitem [{\citenamefont {Kimura}\ and\ \citenamefont
  {Kuwahara}(2024)}]{kimura2024clusteringtheorem1dlongrange}%
  \BibitemOpen
  \bibfield  {author} {\bibinfo {author} {\bibfnamefont {Y.}~\bibnamefont
  {Kimura}}\ and\ \bibinfo {author} {\bibfnamefont {T.}~\bibnamefont
  {Kuwahara}},\ }\href {https://arxiv.org/abs/2403.11431} {\bibinfo {title}
  {Clustering theorem in 1d long-range interacting systems at arbitrary
  temperatures}} (\bibinfo {year} {2024}),\ \Eprint
  {https://arxiv.org/abs/2403.11431} {arXiv:2403.11431 [quant-ph]} \BibitemShut
  {NoStop}%
\bibitem [{\citenamefont {Kim}\ \emph {et~al.}(2025)\citenamefont {Kim},
  \citenamefont {Kuwahara},\ and\ \citenamefont
  {Saito}}]{kim2024thermalarealawlongrange}%
  \BibitemOpen
  \bibfield  {author} {\bibinfo {author} {\bibfnamefont {D.}~\bibnamefont
  {Kim}}, \bibinfo {author} {\bibfnamefont {T.}~\bibnamefont {Kuwahara}},\ and\
  \bibinfo {author} {\bibfnamefont {K.}~\bibnamefont {Saito}},\ }\bibfield
  {journal} {\bibinfo  {journal} {Physical Review Letters}\ }\textbf {\bibinfo
  {volume} {134}},\ \href {https://doi.org/10.1103/physrevlett.134.020402}
  {10.1103/physrevlett.134.020402} (\bibinfo {year} {2025})\BibitemShut
  {NoStop}%
\bibitem [{\citenamefont {Friedli}\ and\ \citenamefont
  {Velenik}(2017)}]{StatisticalMechanicsofLatticeSystems_friedli_velenik_2017}%
  \BibitemOpen
  \bibfield  {author} {\bibinfo {author} {\bibfnamefont {S.}~\bibnamefont
  {Friedli}}\ and\ \bibinfo {author} {\bibfnamefont {Y.}~\bibnamefont
  {Velenik}},\ }\href {https://doi.org/10.1017/9781316882603} {\emph {\bibinfo
  {title} {Statistical Mechanics of Lattice Systems: A Concrete Mathematical
  Introduction}}}\ (\bibinfo  {publisher} {Cambridge University Press},\
  \bibinfo {year} {2017})\BibitemShut {NoStop}%
\bibitem [{\citenamefont {Dobrushin}(1996)}]{Dobrushin1996}%
  \BibitemOpen
  \bibfield  {author} {\bibinfo {author} {\bibfnamefont {R.}~\bibnamefont
  {Dobrushin}},\ }\href@noop {} {\bibfield  {journal} {\bibinfo  {journal}
  {Translations of the American Mathematical Society, Series 2}\ }\textbf
  {\bibinfo {volume} {177}},\ \bibinfo {pages} {59} (\bibinfo {year}
  {1996})}\BibitemShut {NoStop}%
\bibitem [{\citenamefont {Ueltschi}(2004)}]{Ueltschi2004}%
  \BibitemOpen
  \bibfield  {author} {\bibinfo {author} {\bibfnamefont {D.}~\bibnamefont
  {Ueltschi}},\ }\href {https://doi.org/10.17323/1609-4514-2004-4-2-511-522}
  {\bibfield  {journal} {\bibinfo  {journal} {Mosc. Math. J.}\ }\textbf
  {\bibinfo {volume} {4}},\ \bibinfo {pages} {511} (\bibinfo {year}
  {2004})}\BibitemShut {NoStop}%
\bibitem [{\citenamefont {Fern{\'{a}}ndez}\ and\ \citenamefont
  {Procacci}(2007)}]{Fernandez_2007}%
  \BibitemOpen
  \bibfield  {author} {\bibinfo {author} {\bibfnamefont {R.}~\bibnamefont
  {Fern{\'{a}}ndez}}\ and\ \bibinfo {author} {\bibfnamefont {A.}~\bibnamefont
  {Procacci}},\ }\href {https://doi.org/10.1007/s00220-007-0279-2} {\bibfield
  {journal} {\bibinfo  {journal} {Commun. Math. Phys.}\ }\textbf {\bibinfo
  {volume} {274}},\ \bibinfo {pages} {123} (\bibinfo {year}
  {2007})}\BibitemShut {NoStop}%
\bibitem [{\citenamefont {Nachtergaele}\ \emph {et~al.}(2006)\citenamefont
  {Nachtergaele}, \citenamefont {Ogata},\ and\ \citenamefont
  {Sims}}]{Nachtergaele_2006}%
  \BibitemOpen
  \bibfield  {author} {\bibinfo {author} {\bibfnamefont {B.}~\bibnamefont
  {Nachtergaele}}, \bibinfo {author} {\bibfnamefont {Y.}~\bibnamefont
  {Ogata}},\ and\ \bibinfo {author} {\bibfnamefont {R.}~\bibnamefont {Sims}},\
  }\href {https://doi.org/10.1007/s10955-006-9143-6} {\bibfield  {journal}
  {\bibinfo  {journal} {Journal of Statistical Physics}\ }\textbf {\bibinfo
  {volume} {124}},\ \bibinfo {pages} {1–13} (\bibinfo {year}
  {2006})}\BibitemShut {NoStop}%
\bibitem [{\citenamefont {Hastings}(2010)}]{hastings2010quasiadiabatic}%
  \BibitemOpen
  \bibfield  {author} {\bibinfo {author} {\bibfnamefont {M.~B.}\ \bibnamefont
  {Hastings}},\ }\href {https://arxiv.org/abs/1001.5280} {\bibinfo {title}
  {Quasi-adiabatic continuation for disordered systems: Applications to
  correlations, lieb-schultz-mattis, and hall conductance}} (\bibinfo {year}
  {2010}),\ \Eprint {https://arxiv.org/abs/1001.5280} {arXiv:1001.5280
  [math-ph]} \BibitemShut {NoStop}%
\bibitem [{\citenamefont {Chen}\ \emph {et~al.}(2021)\citenamefont {Chen},
  \citenamefont {Galanis}, \citenamefont {Goldberg}, \citenamefont {Perkins},
  \citenamefont {Stewart},\ and\ \citenamefont
  {Vigoda}}]{chen2021fastalgorithmslowtemperatures}%
  \BibitemOpen
  \bibfield  {author} {\bibinfo {author} {\bibfnamefont {Z.}~\bibnamefont
  {Chen}}, \bibinfo {author} {\bibfnamefont {A.}~\bibnamefont {Galanis}},
  \bibinfo {author} {\bibfnamefont {L.~A.}\ \bibnamefont {Goldberg}}, \bibinfo
  {author} {\bibfnamefont {W.}~\bibnamefont {Perkins}}, \bibinfo {author}
  {\bibfnamefont {J.}~\bibnamefont {Stewart}},\ and\ \bibinfo {author}
  {\bibfnamefont {E.}~\bibnamefont {Vigoda}},\ }\href
  {https://arxiv.org/abs/1901.06653} {\bibinfo {title} {Fast algorithms at low
  temperatures via markov chains}} (\bibinfo {year} {2021}),\ \Eprint
  {https://arxiv.org/abs/1901.06653} {arXiv:1901.06653 [cs.DS]} \BibitemShut
  {NoStop}%
\bibitem [{\citenamefont {Blanca}\ \emph {et~al.}(2023)\citenamefont {Blanca},
  \citenamefont {Cannon},\ and\ \citenamefont
  {Perkins}}]{blanca2023fastperfectsamplingsubgraphs}%
  \BibitemOpen
  \bibfield  {author} {\bibinfo {author} {\bibfnamefont {A.}~\bibnamefont
  {Blanca}}, \bibinfo {author} {\bibfnamefont {S.}~\bibnamefont {Cannon}},\
  and\ \bibinfo {author} {\bibfnamefont {W.}~\bibnamefont {Perkins}},\ }\href
  {https://arxiv.org/abs/2202.05907} {\bibinfo {title} {Fast and perfect
  sampling of subgraphs and polymer systems}} (\bibinfo {year} {2023}),\
  \Eprint {https://arxiv.org/abs/2202.05907} {arXiv:2202.05907 [cs.DS]}
  \BibitemShut {NoStop}%
\bibitem [{\citenamefont {Achutha}\ \emph {et~al.}(2024)\citenamefont
  {Achutha}, \citenamefont {Kim}, \citenamefont {Kimura},\ and\ \citenamefont
  {Kuwahara}}]{achutha2024efficientsimulation1dlongrange}%
  \BibitemOpen
  \bibfield  {author} {\bibinfo {author} {\bibfnamefont {R.}~\bibnamefont
  {Achutha}}, \bibinfo {author} {\bibfnamefont {D.}~\bibnamefont {Kim}},
  \bibinfo {author} {\bibfnamefont {Y.}~\bibnamefont {Kimura}},\ and\ \bibinfo
  {author} {\bibfnamefont {T.}~\bibnamefont {Kuwahara}},\ }\href
  {https://arxiv.org/abs/2409.02819} {\bibinfo {title} {Efficient simulation of
  1d long-range interacting systems at any temperature}} (\bibinfo {year}
  {2024}),\ \Eprint {https://arxiv.org/abs/2409.02819} {arXiv:2409.02819
  [quant-ph]} \BibitemShut {NoStop}%
\bibitem [{Note1()}]{Note1}%
  \BibitemOpen
  \bibinfo {note} {The convergence proof of the cluster expansion it relies on
  has recently been found to be flawed (see~\cite
  {Haah2024,classsimofshort_PRXQuantum.4.020340}), although a fix has been
  reported~\cite {tomotaka-private}.}\BibitemShut {Stop}%
\bibitem [{\citenamefont {Lima}(1971)}]{Lima1971}%
  \BibitemOpen
  \bibfield  {author} {\bibinfo {author} {\bibfnamefont {R.}~\bibnamefont
  {Lima}},\ }\href {http://eudml.org/doc/75707} {\bibfield  {journal} {\bibinfo
   {journal} {Annales I.H.P. Physique th{\'e}orique}\ }\textbf {\bibinfo
  {volume} {15}},\ \bibinfo {pages} {61} (\bibinfo {year} {1971})}\BibitemShut
  {NoStop}%
\bibitem [{\citenamefont {Müller}\ \emph {et~al.}(2015)\citenamefont
  {Müller}, \citenamefont {Adlam}, \citenamefont {Masanes},\ and\
  \citenamefont {Wiebe}}]{MullerAdlamd2015}%
  \BibitemOpen
  \bibfield  {author} {\bibinfo {author} {\bibfnamefont {M.~P.}\ \bibnamefont
  {Müller}}, \bibinfo {author} {\bibfnamefont {E.}~\bibnamefont {Adlam}},
  \bibinfo {author} {\bibfnamefont {L.}~\bibnamefont {Masanes}},\ and\ \bibinfo
  {author} {\bibfnamefont {N.}~\bibnamefont {Wiebe}},\ }\href
  {https://doi.org/10.1007/s00220-015-2473-y} {\bibfield  {journal} {\bibinfo
  {journal} {Communications in Mathematical Physics}\ }\textbf {\bibinfo
  {volume} {340}},\ \bibinfo {pages} {499–561} (\bibinfo {year}
  {2015})}\BibitemShut {NoStop}%
\bibitem [{\citenamefont {Kuwahara}\ and\ \citenamefont
  {Saito}(2020{\natexlab{d}})}]{Kuwahara_2020_ETH}%
  \BibitemOpen
  \bibfield  {author} {\bibinfo {author} {\bibfnamefont {T.}~\bibnamefont
  {Kuwahara}}\ and\ \bibinfo {author} {\bibfnamefont {K.}~\bibnamefont
  {Saito}},\ }\bibfield  {journal} {\bibinfo  {journal} {Physical Review
  Letters}\ }\textbf {\bibinfo {volume} {124}},\ \href
  {https://doi.org/10.1103/physrevlett.124.200604}
  {10.1103/physrevlett.124.200604} (\bibinfo {year}
  {2020}{\natexlab{d}})\BibitemShut {NoStop}%
\bibitem [{\citenamefont {Alhambra}(2023)}]{QMBD_inTE_reviewAA}%
  \BibitemOpen
  \bibfield  {author} {\bibinfo {author} {\bibfnamefont {A.~M.}\ \bibnamefont
  {Alhambra}},\ }\href {https://doi.org/10.1103/PRXQuantum.4.040201} {\bibfield
   {journal} {\bibinfo  {journal} {PRX Quantum}\ }\textbf {\bibinfo {volume}
  {4}},\ \bibinfo {pages} {040201} (\bibinfo {year} {2023})}\BibitemShut
  {NoStop}%
\bibitem [{\citenamefont {Brandao}\ and\ \citenamefont
  {Cramer}(2015)}]{brandao2015equivalencestatisticalmechanicalensembles}%
  \BibitemOpen
  \bibfield  {author} {\bibinfo {author} {\bibfnamefont {F.~G. S.~L.}\
  \bibnamefont {Brandao}}\ and\ \bibinfo {author} {\bibfnamefont
  {M.}~\bibnamefont {Cramer}},\ }\href {https://arxiv.org/abs/1502.03263}
  {\bibinfo {title} {Equivalence of statistical mechanical ensembles for
  non-critical quantum systems}} (\bibinfo {year} {2015}),\ \Eprint
  {https://arxiv.org/abs/1502.03263} {arXiv:1502.03263 [quant-ph]} \BibitemShut
  {NoStop}%
\bibitem [{\citenamefont {Bertoni}\ \emph {et~al.}(2024)\citenamefont
  {Bertoni}, \citenamefont {Wassner}, \citenamefont {Guarnieri},\ and\
  \citenamefont
  {Eisert}}]{bertoni2024typicalthermalizationlowentanglementstates}%
  \BibitemOpen
  \bibfield  {author} {\bibinfo {author} {\bibfnamefont {C.}~\bibnamefont
  {Bertoni}}, \bibinfo {author} {\bibfnamefont {C.}~\bibnamefont {Wassner}},
  \bibinfo {author} {\bibfnamefont {G.}~\bibnamefont {Guarnieri}},\ and\
  \bibinfo {author} {\bibfnamefont {J.}~\bibnamefont {Eisert}},\ }\href
  {https://arxiv.org/abs/2403.18007} {\bibinfo {title} {Typical thermalization
  of low-entanglement states}} (\bibinfo {year} {2024}),\ \Eprint
  {https://arxiv.org/abs/2403.18007} {arXiv:2403.18007 [quant-ph]} \BibitemShut
  {NoStop}%
\bibitem [{\citenamefont {Sachdev}(2011)}]{Sachdev_2011}%
  \BibitemOpen
  \bibfield  {author} {\bibinfo {author} {\bibfnamefont {S.}~\bibnamefont
  {Sachdev}},\ }\href@noop {} {\emph {\bibinfo {title} {Quantum Phase
  Transitions}}},\ \bibinfo {edition} {2nd}\ ed.\ (\bibinfo  {publisher}
  {Cambridge University Press},\ \bibinfo {year} {2011})\BibitemShut {NoStop}%
\bibitem [{\citenamefont {Tagliacozzo}\ \emph {et~al.}(2009)\citenamefont
  {Tagliacozzo}, \citenamefont {Evenbly},\ and\ \citenamefont
  {Vidal}}]{PhysRevB.80.235127}%
  \BibitemOpen
  \bibfield  {author} {\bibinfo {author} {\bibfnamefont {L.}~\bibnamefont
  {Tagliacozzo}}, \bibinfo {author} {\bibfnamefont {G.}~\bibnamefont
  {Evenbly}},\ and\ \bibinfo {author} {\bibfnamefont {G.}~\bibnamefont
  {Vidal}},\ }\href {https://doi.org/10.1103/PhysRevB.80.235127} {\bibfield
  {journal} {\bibinfo  {journal} {Phys. Rev. B}\ }\textbf {\bibinfo {volume}
  {80}},\ \bibinfo {pages} {235127} (\bibinfo {year} {2009})}\BibitemShut
  {NoStop}%
\bibitem [{\citenamefont {Pirvu}\ \emph {et~al.}(2012)\citenamefont {Pirvu},
  \citenamefont {Vidal}, \citenamefont {Verstraete},\ and\ \citenamefont
  {Tagliacozzo}}]{PhysRevB.86.075117}%
  \BibitemOpen
  \bibfield  {author} {\bibinfo {author} {\bibfnamefont {B.}~\bibnamefont
  {Pirvu}}, \bibinfo {author} {\bibfnamefont {G.}~\bibnamefont {Vidal}},
  \bibinfo {author} {\bibfnamefont {F.}~\bibnamefont {Verstraete}},\ and\
  \bibinfo {author} {\bibfnamefont {L.}~\bibnamefont {Tagliacozzo}},\ }\href
  {https://doi.org/10.1103/PhysRevB.86.075117} {\bibfield  {journal} {\bibinfo
  {journal} {Phys. Rev. B}\ }\textbf {\bibinfo {volume} {86}},\ \bibinfo
  {pages} {075117} (\bibinfo {year} {2012})}\BibitemShut {NoStop}%
\bibitem [{\citenamefont {Bruno}(2001)}]{Bruno_2001}%
  \BibitemOpen
  \bibfield  {author} {\bibinfo {author} {\bibfnamefont {P.}~\bibnamefont
  {Bruno}},\ }\bibfield  {journal} {\bibinfo  {journal} {Physical Review
  Letters}\ }\textbf {\bibinfo {volume} {87}},\ \href
  {https://doi.org/10.1103/physrevlett.87.137203}
  {10.1103/physrevlett.87.137203} (\bibinfo {year} {2001})\BibitemShut
  {NoStop}%
\bibitem [{\citenamefont {Zhao}\ \emph {et~al.}(2023)\citenamefont {Zhao},
  \citenamefont {Song}, \citenamefont {Qi}, \citenamefont {Rong},\ and\
  \citenamefont {Meng}}]{Zhao2023}%
  \BibitemOpen
  \bibfield  {author} {\bibinfo {author} {\bibfnamefont {J.}~\bibnamefont
  {Zhao}}, \bibinfo {author} {\bibfnamefont {M.}~\bibnamefont {Song}}, \bibinfo
  {author} {\bibfnamefont {Y.}~\bibnamefont {Qi}}, \bibinfo {author}
  {\bibfnamefont {J.}~\bibnamefont {Rong}},\ and\ \bibinfo {author}
  {\bibfnamefont {Z.~Y.}\ \bibnamefont {Meng}},\ }\href
  {https://doi.org/10.1038/s41535-023-00591-6} {\bibfield  {journal} {\bibinfo
  {journal} {npj Quantum Materials}\ }\textbf {\bibinfo {volume} {8}},\
  \bibinfo {pages} {59} (\bibinfo {year} {2023})}\BibitemShut {NoStop}%
\bibitem [{\citenamefont {Giachetti}\ \emph {et~al.}(2021)\citenamefont
  {Giachetti}, \citenamefont {Defenu}, \citenamefont {Ruffo},\ and\
  \citenamefont {Trombettoni}}]{KTPexample2_Giachetti_2021}%
  \BibitemOpen
  \bibfield  {author} {\bibinfo {author} {\bibfnamefont {G.}~\bibnamefont
  {Giachetti}}, \bibinfo {author} {\bibfnamefont {N.}~\bibnamefont {Defenu}},
  \bibinfo {author} {\bibfnamefont {S.}~\bibnamefont {Ruffo}},\ and\ \bibinfo
  {author} {\bibfnamefont {A.}~\bibnamefont {Trombettoni}},\ }\bibfield
  {journal} {\bibinfo  {journal} {Physical Review Letters}\ }\textbf {\bibinfo
  {volume} {127}},\ \href {https://doi.org/10.1103/physrevlett.127.156801}
  {10.1103/physrevlett.127.156801} (\bibinfo {year} {2021})\BibitemShut
  {NoStop}%
\bibitem [{\citenamefont {Giachetti}\ \emph {et~al.}(2022)\citenamefont
  {Giachetti}, \citenamefont {Trombettoni}, \citenamefont {Ruffo},\ and\
  \citenamefont {Defenu}}]{BKT2PRB}%
  \BibitemOpen
  \bibfield  {author} {\bibinfo {author} {\bibfnamefont {G.}~\bibnamefont
  {Giachetti}}, \bibinfo {author} {\bibfnamefont {A.}~\bibnamefont
  {Trombettoni}}, \bibinfo {author} {\bibfnamefont {S.}~\bibnamefont {Ruffo}},\
  and\ \bibinfo {author} {\bibfnamefont {N.}~\bibnamefont {Defenu}},\ }\href
  {https://doi.org/10.1103/PhysRevB.106.014106} {\bibfield  {journal} {\bibinfo
   {journal} {Phys. Rev. B}\ }\textbf {\bibinfo {volume} {106}},\ \bibinfo
  {pages} {014106} (\bibinfo {year} {2022})}\BibitemShut {NoStop}%
\bibitem [{\citenamefont {Xiao}\ \emph {et~al.}(2024)\citenamefont {Xiao},
  \citenamefont {Yao}, \citenamefont {Zhang}, \citenamefont {Fan},\ and\
  \citenamefont
  {Deng}}]{KTPexample1_xiao2024twodimensionalxyferromagnetinduced}%
  \BibitemOpen
  \bibfield  {author} {\bibinfo {author} {\bibfnamefont {T.}~\bibnamefont
  {Xiao}}, \bibinfo {author} {\bibfnamefont {D.}~\bibnamefont {Yao}}, \bibinfo
  {author} {\bibfnamefont {C.}~\bibnamefont {Zhang}}, \bibinfo {author}
  {\bibfnamefont {Z.}~\bibnamefont {Fan}},\ and\ \bibinfo {author}
  {\bibfnamefont {Y.}~\bibnamefont {Deng}},\ }\href
  {https://arxiv.org/abs/2404.08498} {\bibinfo {title} {Two-dimensional xy
  ferromagnet induced by long-range interaction}} (\bibinfo {year} {2024}),\
  \Eprint {https://arxiv.org/abs/2404.08498} {arXiv:2404.08498
  [cond-mat.stat-mech]} \BibitemShut {NoStop}%
\bibitem [{\citenamefont {Halimeh}\ and\ \citenamefont
  {Zauner-Stauber}(2017)}]{DQPTs1_Halimeh_2017}%
  \BibitemOpen
  \bibfield  {author} {\bibinfo {author} {\bibfnamefont {J.~C.}\ \bibnamefont
  {Halimeh}}\ and\ \bibinfo {author} {\bibfnamefont {V.}~\bibnamefont
  {Zauner-Stauber}},\ }\bibfield  {journal} {\bibinfo  {journal} {Physical
  Review B}\ }\textbf {\bibinfo {volume} {96}},\ \href
  {https://doi.org/10.1103/physrevb.96.134427} {10.1103/physrevb.96.134427}
  (\bibinfo {year} {2017})\BibitemShut {NoStop}%
\bibitem [{\citenamefont {Halimeh}\ \emph {et~al.}(2021)\citenamefont
  {Halimeh}, \citenamefont {Van~Damme}, \citenamefont {Guo}, \citenamefont
  {Lang},\ and\ \citenamefont {Hauke}}]{DQPTs2_PhysRevB.104.115133}%
  \BibitemOpen
  \bibfield  {author} {\bibinfo {author} {\bibfnamefont {J.~C.}\ \bibnamefont
  {Halimeh}}, \bibinfo {author} {\bibfnamefont {M.}~\bibnamefont {Van~Damme}},
  \bibinfo {author} {\bibfnamefont {L.}~\bibnamefont {Guo}}, \bibinfo {author}
  {\bibfnamefont {J.}~\bibnamefont {Lang}},\ and\ \bibinfo {author}
  {\bibfnamefont {P.}~\bibnamefont {Hauke}},\ }\href
  {https://doi.org/10.1103/PhysRevB.104.115133} {\bibfield  {journal} {\bibinfo
   {journal} {Phys. Rev. B}\ }\textbf {\bibinfo {volume} {104}},\ \bibinfo
  {pages} {115133} (\bibinfo {year} {2021})}\BibitemShut {NoStop}%
\bibitem [{\citenamefont {Žunkovič}\ \emph {et~al.}(2018)\citenamefont
  {Žunkovič}, \citenamefont {Heyl}, \citenamefont {Knap},\ and\ \citenamefont
  {Silva}}]{DQPTs3_unkovi__2018}%
  \BibitemOpen
  \bibfield  {author} {\bibinfo {author} {\bibfnamefont {B.}~\bibnamefont
  {Žunkovič}}, \bibinfo {author} {\bibfnamefont {M.}~\bibnamefont {Heyl}},
  \bibinfo {author} {\bibfnamefont {M.}~\bibnamefont {Knap}},\ and\ \bibinfo
  {author} {\bibfnamefont {A.}~\bibnamefont {Silva}},\ }\bibfield  {journal}
  {\bibinfo  {journal} {Physical Review Letters}\ }\textbf {\bibinfo {volume}
  {120}},\ \href {https://doi.org/10.1103/physrevlett.120.130601}
  {10.1103/physrevlett.120.130601} (\bibinfo {year} {2018})\BibitemShut
  {NoStop}%
\bibitem [{\citenamefont {Ribeiro}\ \emph {et~al.}(2008)\citenamefont
  {Ribeiro}, \citenamefont {Vidal},\ and\ \citenamefont
  {Mosseri}}]{ESQPTs_PhysRevE.78.021106}%
  \BibitemOpen
  \bibfield  {author} {\bibinfo {author} {\bibfnamefont {P.}~\bibnamefont
  {Ribeiro}}, \bibinfo {author} {\bibfnamefont {J.}~\bibnamefont {Vidal}},\
  and\ \bibinfo {author} {\bibfnamefont {R.}~\bibnamefont {Mosseri}},\ }\href
  {https://doi.org/10.1103/PhysRevE.78.021106} {\bibfield  {journal} {\bibinfo
  {journal} {Phys. Rev. E}\ }\textbf {\bibinfo {volume} {78}},\ \bibinfo
  {pages} {021106} (\bibinfo {year} {2008})}\BibitemShut {NoStop}%
\bibitem [{\citenamefont {Santos}\ \emph {et~al.}(2016)\citenamefont {Santos},
  \citenamefont {T\'avora},\ and\ \citenamefont
  {P\'erez-Bernal}}]{ESQPTsPhysRevA.94.012113}%
  \BibitemOpen
  \bibfield  {author} {\bibinfo {author} {\bibfnamefont {L.~F.}\ \bibnamefont
  {Santos}}, \bibinfo {author} {\bibfnamefont {M.}~\bibnamefont {T\'avora}},\
  and\ \bibinfo {author} {\bibfnamefont {F.}~\bibnamefont {P\'erez-Bernal}},\
  }\href {https://doi.org/10.1103/PhysRevA.94.012113} {\bibfield  {journal}
  {\bibinfo  {journal} {Phys. Rev. A}\ }\textbf {\bibinfo {volume} {94}},\
  \bibinfo {pages} {012113} (\bibinfo {year} {2016})}\BibitemShut {NoStop}%
\bibitem [{\citenamefont {Ribeiro}\ \emph {et~al.}(2007)\citenamefont
  {Ribeiro}, \citenamefont {Vidal},\ and\ \citenamefont
  {Mosseri}}]{Ribeiro_2007}%
  \BibitemOpen
  \bibfield  {author} {\bibinfo {author} {\bibfnamefont {P.}~\bibnamefont
  {Ribeiro}}, \bibinfo {author} {\bibfnamefont {J.}~\bibnamefont {Vidal}},\
  and\ \bibinfo {author} {\bibfnamefont {R.}~\bibnamefont {Mosseri}},\
  }\bibfield  {journal} {\bibinfo  {journal} {Physical Review Letters}\
  }\textbf {\bibinfo {volume} {99}},\ \href
  {https://doi.org/10.1103/physrevlett.99.050402}
  {10.1103/physrevlett.99.050402} (\bibinfo {year} {2007})\BibitemShut
  {NoStop}%
\bibitem [{\citenamefont {Brandes}(2013)}]{Brandes_2013}%
  \BibitemOpen
  \bibfield  {author} {\bibinfo {author} {\bibfnamefont {T.}~\bibnamefont
  {Brandes}},\ }\bibfield  {journal} {\bibinfo  {journal} {Physical Review E}\
  }\textbf {\bibinfo {volume} {88}},\ \href
  {https://doi.org/10.1103/physreve.88.032133} {10.1103/physreve.88.032133}
  (\bibinfo {year} {2013})\BibitemShut {NoStop}%
\bibitem [{\citenamefont {Rouzé}\ \emph {et~al.}(2024)\citenamefont {Rouzé},
  \citenamefont {França},\ and\ \citenamefont {Álvaro
  M.~Alhambra}}]{rouzé2024optimalquantumalgorithmgibbs}%
  \BibitemOpen
  \bibfield  {author} {\bibinfo {author} {\bibfnamefont {C.}~\bibnamefont
  {Rouzé}}, \bibinfo {author} {\bibfnamefont {D.~S.}\ \bibnamefont
  {França}},\ and\ \bibinfo {author} {\bibnamefont {Álvaro M.~Alhambra}},\
  }\href {https://arxiv.org/abs/2411.04885} {\bibinfo {title} {Optimal quantum
  algorithm for gibbs state preparation}} (\bibinfo {year} {2024}),\ \Eprint
  {https://arxiv.org/abs/2411.04885} {arXiv:2411.04885 [quant-ph]} \BibitemShut
  {NoStop}%
\bibitem [{\citenamefont {Luijten}\ and\ \citenamefont
  {Bl\"{o}te}(1995)}]{LongRangeMonteCarlo}%
  \BibitemOpen
  \bibfield  {author} {\bibinfo {author} {\bibfnamefont {E.}~\bibnamefont
  {Luijten}}\ and\ \bibinfo {author} {\bibfnamefont {H.~W.}\ \bibnamefont
  {Bl\"{o}te}},\ }\href {https://doi.org/10.1142/S0129183195000265} {\bibfield
  {journal} {\bibinfo  {journal} {International Journal of Modern Physics C}\
  }\textbf {\bibinfo {volume} {06}},\ \bibinfo {pages} {359} (\bibinfo {year}
  {1995})},\ \Eprint
  {https://arxiv.org/abs/https://doi.org/10.1142/S0129183195000265}
  {https://doi.org/10.1142/S0129183195000265} \BibitemShut {NoStop}%
\bibitem [{\citenamefont {Flores-Sola}\ \emph {et~al.}(2017)\citenamefont
  {Flores-Sola}, \citenamefont {Weigel}, \citenamefont {Kenna},\ and\
  \citenamefont {Berche}}]{Flores-Sola2017}%
  \BibitemOpen
  \bibfield  {author} {\bibinfo {author} {\bibfnamefont {E.}~\bibnamefont
  {Flores-Sola}}, \bibinfo {author} {\bibfnamefont {M.}~\bibnamefont {Weigel}},
  \bibinfo {author} {\bibfnamefont {R.}~\bibnamefont {Kenna}},\ and\ \bibinfo
  {author} {\bibfnamefont {B.}~\bibnamefont {Berche}},\ }\href
  {https://doi.org/10.1140/epjst/e2016-60338-3} {\bibfield  {journal} {\bibinfo
   {journal} {The European Physical Journal Special Topics}\ }\textbf {\bibinfo
  {volume} {226}},\ \bibinfo {pages} {581} (\bibinfo {year}
  {2017})}\BibitemShut {NoStop}%
\bibitem [{\citenamefont
  {Kuwahara}(2024)}]{kuwahara2024clusteringconditionalmutualinformation}%
  \BibitemOpen
  \bibfield  {author} {\bibinfo {author} {\bibfnamefont {T.}~\bibnamefont
  {Kuwahara}},\ }\href {https://arxiv.org/abs/2407.05835} {\bibinfo {title}
  {Clustering of conditional mutual information and quantum markov structure at
  arbitrary temperatures}} (\bibinfo {year} {2024}),\ \Eprint
  {https://arxiv.org/abs/2407.05835} {arXiv:2407.05835 [quant-ph]} \BibitemShut
  {NoStop}%
\bibitem [{\citenamefont {Chen}\ and\ \citenamefont
  {Rouzé}(2025)}]{chen2025quantumgibbsstateslocally}%
  \BibitemOpen
  \bibfield  {author} {\bibinfo {author} {\bibfnamefont {C.-F.}\ \bibnamefont
  {Chen}}\ and\ \bibinfo {author} {\bibfnamefont {C.}~\bibnamefont {Rouzé}},\
  }\href {https://arxiv.org/abs/2504.02208} {\bibinfo {title} {Quantum gibbs
  states are locally markovian}} (\bibinfo {year} {2025}),\ \Eprint
  {https://arxiv.org/abs/2504.02208} {arXiv:2504.02208 [quant-ph]} \BibitemShut
  {NoStop}%
\bibitem [{\citenamefont {Kato}\ and\ \citenamefont
  {Kuwahara}(2025)}]{kato2025clusteringconditionalmutualinformation}%
  \BibitemOpen
  \bibfield  {author} {\bibinfo {author} {\bibfnamefont {K.}~\bibnamefont
  {Kato}}\ and\ \bibinfo {author} {\bibfnamefont {T.}~\bibnamefont
  {Kuwahara}},\ }\href {https://arxiv.org/abs/2504.02235} {\bibinfo {title} {On
  the clustering of conditional mutual information via dissipative dynamics}}
  (\bibinfo {year} {2025}),\ \Eprint {https://arxiv.org/abs/2504.02235}
  {arXiv:2504.02235 [quant-ph]} \BibitemShut {NoStop}%
\bibitem [{\citenamefont {Ángela Capel}\ \emph {et~al.}(2024)\citenamefont
  {Ángela Capel}, \citenamefont {Álvaro M.~Alhambra}, \citenamefont
  {Gondolf}, \citenamefont {de~Alarcón},\ and\ \citenamefont
  {Scalet}}]{capel2024conditionalindependence1dgibbs}%
  \BibitemOpen
  \bibfield  {author} {\bibinfo {author} {\bibnamefont {Ángela Capel}},
  \bibinfo {author} {\bibnamefont {Álvaro M.~Alhambra}}, \bibinfo {author}
  {\bibfnamefont {P.}~\bibnamefont {Gondolf}}, \bibinfo {author} {\bibfnamefont
  {A.~R.}\ \bibnamefont {de~Alarcón}},\ and\ \bibinfo {author} {\bibfnamefont
  {S.~O.}\ \bibnamefont {Scalet}},\ }\href {https://arxiv.org/abs/2402.18500}
  {\bibinfo {title} {Conditional independence of 1d gibbs states with
  applications to efficient learning}} (\bibinfo {year} {2024}),\ \Eprint
  {https://arxiv.org/abs/2402.18500} {arXiv:2402.18500 [quant-ph]} \BibitemShut
  {NoStop}%
\bibitem [{\citenamefont {Bakshi}\ \emph {et~al.}(2023)\citenamefont {Bakshi},
  \citenamefont {Liu}, \citenamefont {Moitra},\ and\ \citenamefont
  {Tang}}]{bakshi2023learning}%
  \BibitemOpen
  \bibfield  {author} {\bibinfo {author} {\bibfnamefont {A.}~\bibnamefont
  {Bakshi}}, \bibinfo {author} {\bibfnamefont {A.}~\bibnamefont {Liu}},
  \bibinfo {author} {\bibfnamefont {A.}~\bibnamefont {Moitra}},\ and\ \bibinfo
  {author} {\bibfnamefont {E.}~\bibnamefont {Tang}},\ }\href
  {https://arxiv.org/abs/2310.02243} {\bibinfo {title} {Learning quantum
  hamiltonians at any temperature in polynomial time}} (\bibinfo {year}
  {2023}),\ \Eprint {https://arxiv.org/abs/2310.02243} {arXiv:2310.02243
  [quant-ph]} \BibitemShut {NoStop}%
\bibitem [{\citenamefont {Chen}\ \emph {et~al.}(2025)\citenamefont {Chen},
  \citenamefont {Anshu},\ and\ \citenamefont
  {Nguyen}}]{chen2025learningquantumgibbsstates}%
  \BibitemOpen
  \bibfield  {author} {\bibinfo {author} {\bibfnamefont {C.-F.}\ \bibnamefont
  {Chen}}, \bibinfo {author} {\bibfnamefont {A.}~\bibnamefont {Anshu}},\ and\
  \bibinfo {author} {\bibfnamefont {Q.~T.}\ \bibnamefont {Nguyen}},\ }\href
  {https://arxiv.org/abs/2504.02706} {\bibinfo {title} {Learning quantum gibbs
  states locally and efficiently}} (\bibinfo {year} {2025}),\ \Eprint
  {https://arxiv.org/abs/2504.02706} {arXiv:2504.02706 [quant-ph]} \BibitemShut
  {NoStop}%
\bibitem [{\citenamefont {McDonough}\ \emph {et~al.}(2025)\citenamefont
  {McDonough}, \citenamefont {Yin}, \citenamefont {Lucas},\ and\ \citenamefont
  {Zhang}}]{mcdonough2025liebrobinson}%
  \BibitemOpen
  \bibfield  {author} {\bibinfo {author} {\bibfnamefont {B.~T.}\ \bibnamefont
  {McDonough}}, \bibinfo {author} {\bibfnamefont {C.}~\bibnamefont {Yin}},
  \bibinfo {author} {\bibfnamefont {A.}~\bibnamefont {Lucas}},\ and\ \bibinfo
  {author} {\bibfnamefont {C.}~\bibnamefont {Zhang}},\ }\href
  {https://arxiv.org/abs/2502.02652} {\bibinfo {title} {Lieb-robinson bounds
  with exponential-in-volume tails}} (\bibinfo {year} {2025}),\ \Eprint
  {https://arxiv.org/abs/2502.02652} {arXiv:2502.02652 [quant-ph]} \BibitemShut
  {NoStop}%
\bibitem [{\citenamefont {Kuwahara}()}]{tomotaka-private}%
  \BibitemOpen
  \bibfield  {author} {\bibinfo {author} {\bibfnamefont {T.}~\bibnamefont
  {Kuwahara}},\ }\href@noop {} {\bibinfo {title} {Private
  communication}}\BibitemShut {NoStop}%
\bibitem [{\citenamefont {Nachtergaele}\ \emph {et~al.}(2019)\citenamefont
  {Nachtergaele}, \citenamefont {Sims},\ and\ \citenamefont
  {Young}}]{Nachtergaele_2019}%
  \BibitemOpen
  \bibfield  {author} {\bibinfo {author} {\bibfnamefont {B.}~\bibnamefont
  {Nachtergaele}}, \bibinfo {author} {\bibfnamefont {R.}~\bibnamefont {Sims}},\
  and\ \bibinfo {author} {\bibfnamefont {A.}~\bibnamefont {Young}},\ }\bibfield
   {journal} {\bibinfo  {journal} {Journal of Mathematical Physics}\ }\textbf
  {\bibinfo {volume} {60}},\ \href {https://doi.org/10.1063/1.5095769}
  {10.1063/1.5095769} (\bibinfo {year} {2019})\BibitemShut {NoStop}%
\bibitem [{\citenamefont {Otter}(1948)}]{TheNumberOfTrees}%
  \BibitemOpen
  \bibfield  {author} {\bibinfo {author} {\bibfnamefont {R.}~\bibnamefont
  {Otter}},\ }\href {http://www.jstor.org/stable/1969046} {\bibfield  {journal}
  {\bibinfo  {journal} {Annals of Mathematics}\ }\textbf {\bibinfo {volume}
  {49}},\ \bibinfo {pages} {583} (\bibinfo {year} {1948})}\BibitemShut
  {NoStop}%
\bibitem [{\citenamefont {Pirvu}\ \emph {et~al.}(2010)\citenamefont {Pirvu},
  \citenamefont {Murg}, \citenamefont {Cirac},\ and\ \citenamefont
  {Verstraete}}]{MPOTEv_CiracPirvu_2010}%
  \BibitemOpen
  \bibfield  {author} {\bibinfo {author} {\bibfnamefont {B.}~\bibnamefont
  {Pirvu}}, \bibinfo {author} {\bibfnamefont {V.}~\bibnamefont {Murg}},
  \bibinfo {author} {\bibfnamefont {J.~I.}\ \bibnamefont {Cirac}},\ and\
  \bibinfo {author} {\bibfnamefont {F.}~\bibnamefont {Verstraete}},\ }\href
  {https://doi.org/10.1088/1367-2630/12/2/025012} {\bibfield  {journal}
  {\bibinfo  {journal} {New Journal of Physics}\ }\textbf {\bibinfo {volume}
  {12}},\ \bibinfo {pages} {025012} (\bibinfo {year} {2010})}\BibitemShut
  {NoStop}%
\bibitem [{\citenamefont {Paeckel}\ \emph {et~al.}(2019)\citenamefont
  {Paeckel}, \citenamefont {Köhler}, \citenamefont {Swoboda}, \citenamefont
  {Manmana}, \citenamefont {Schollwöck},\ and\ \citenamefont
  {Hubig}}]{MPOTEv_Paeckel_2019}%
  \BibitemOpen
  \bibfield  {author} {\bibinfo {author} {\bibfnamefont {S.}~\bibnamefont
  {Paeckel}}, \bibinfo {author} {\bibfnamefont {T.}~\bibnamefont {Köhler}},
  \bibinfo {author} {\bibfnamefont {A.}~\bibnamefont {Swoboda}}, \bibinfo
  {author} {\bibfnamefont {S.~R.}\ \bibnamefont {Manmana}}, \bibinfo {author}
  {\bibfnamefont {U.}~\bibnamefont {Schollwöck}},\ and\ \bibinfo {author}
  {\bibfnamefont {C.}~\bibnamefont {Hubig}},\ }\href
  {https://doi.org/10.1016/j.aop.2019.167998} {\bibfield  {journal} {\bibinfo
  {journal} {Annals of Physics}\ }\textbf {\bibinfo {volume} {411}},\ \bibinfo
  {pages} {167998} (\bibinfo {year} {2019})}\BibitemShut {NoStop}%
\bibitem [{\citenamefont {Fishman}\ \emph {et~al.}(2022)\citenamefont
  {Fishman}, \citenamefont {White},\ and\ \citenamefont
  {Stoudenmire}}]{ITensor_Fishman_2022}%
  \BibitemOpen
  \bibfield  {author} {\bibinfo {author} {\bibfnamefont {M.}~\bibnamefont
  {Fishman}}, \bibinfo {author} {\bibfnamefont {S.}~\bibnamefont {White}},\
  and\ \bibinfo {author} {\bibfnamefont {E.}~\bibnamefont {Stoudenmire}},\
  }\bibfield  {journal} {\bibinfo  {journal} {SciPost Physics Codebases}\
  }\href {https://doi.org/10.21468/scipostphyscodeb.4}
  {10.21468/scipostphyscodeb.4} (\bibinfo {year} {2022})\BibitemShut {NoStop}%
\end{thebibliography}%
\clearpage
\onecolumngrid

\begin{figure*}[t]
    \centering
    \includegraphics[width=0.45\linewidth]{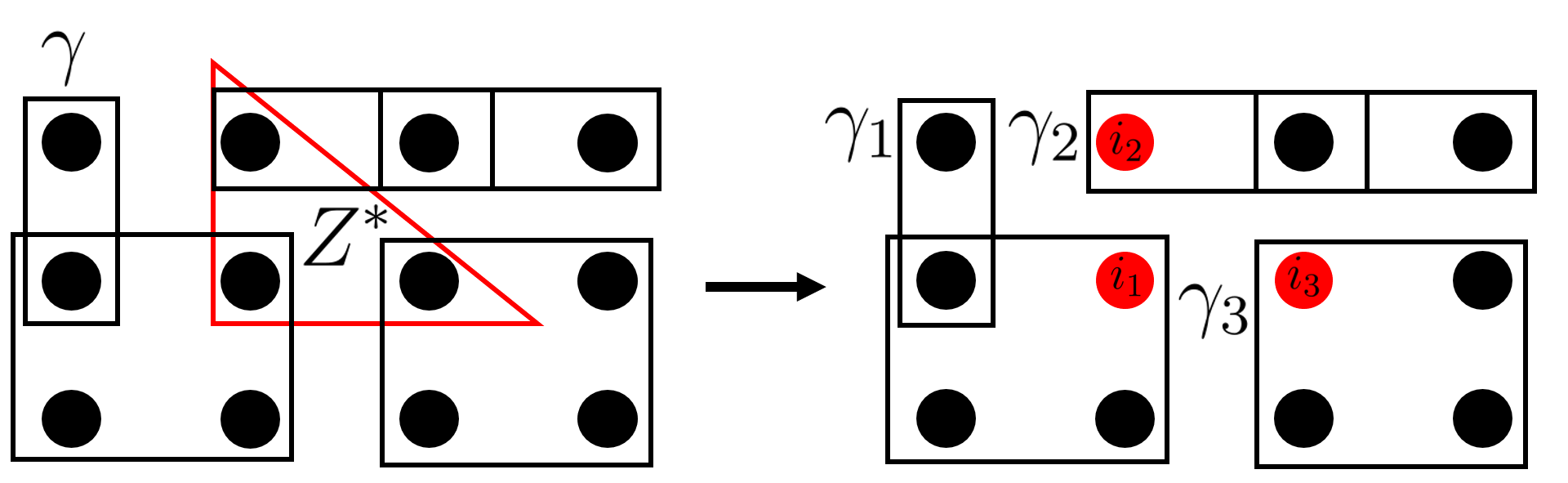}
    \caption{\justifying Scheme of a polymer containing a hyperedge $Z^*$. It has multiplicity $m=\norm{\gamma}=5$, and it is formed by 5 hyperedges. It can also be seen as a  set of three subpolymers $\gamma_1, \gamma_2, \gamma_3$ connected to sites $i_1, i_2, i_3$.}
    \label{fig:esqueme_polymer_LRproof}
\end{figure*}
\twocolumngrid

\appendix
\setcounter{footnote}{0}
\renewcommand{\thefootnote}{\arabic{footnote}}
\section{Proof of convergence criteria for long-range interacting systems}\label{sec:KoteckyProof}
\subsection{Convergence of the partition function}\label{app:KP_LR}

For a given polymer $\gamma$, we consider the function
\begin{equation}
    a \left(\gamma \right) = \vert \gamma \vert
\end{equation}
We aim to show that, for every polymer $\gamma^*$ 
\begin{equation}
    \sum_{\gamma \nsim \gamma^* } \vert w_\gamma \vert \mathrm{e}^{a\left(\gamma\right)} \leq a \left(\gamma^*\right).
\end{equation}
With this goal in mind we select a specific hyperedge $Z^*$ and, using that $\vert w_\gamma\vert \le \vert\beta|^{\norm{\gamma} } \prod_{Z \in \gamma } \norm{h_Z} $,
\begin{align}\label{eq:boundKP}
    \sum_{ \gamma \nsim Z^* } \vert w_\gamma\vert \mathrm{e}^{a\left(\gamma\right)} &= \sum_{\gamma \nsim Z^* }\vert w_\gamma\vert \mathrm{e}^{\vert\gamma\vert}  \\
    &\leq\sum_{\gamma \nsim Z^*  }\vert\beta\vert^{\norm{\gamma}} \mathrm{e}^{|\gamma|} \prod_{Z \in \gamma } \norm{h_Z} \\ 
      &\leq \sum_{\gamma \nsim Z^* } \left( \vert \beta \vert \mathrm{e}\right)^{\norm{\gamma}} \prod_{Z \in \gamma} \norm{h_Z}.
\end{align}
We thus need to control this sum over all polymers, or equivalently tuples of hyperedges that are connected to $Z^*$, which has support on sites $\{i_1, \dots, i_k\}$, where $k$ refers to the $k$-body terms of the Hamiltonian. Let us order the sum in terms of the cardinality of the polymers, so that 

\begin{align}
    &\sum_{\gamma \nsim Z^* } \left(|\beta| \mathrm{e}\right)^{\norm{\gamma}} \prod_{Z \in \gamma} \norm{h_Z} =  \sum_{m} \left(|\beta| \mathrm{e}\right)^{m} \underbrace{\sum_{\substack{ \gamma \nsim Z^* \\ \norm{\gamma}=m} }\prod_{Z \in \gamma }\norm{h_Z}}_{\mathbf{S}_{Z^*}}.
    \label{eq::expansionBeforeUpperBound}
\end{align}

To upper bound $\mathbf{S}_{Z^*}$, it is convenient to divide the product into $k$ subpolymers $\{\gamma_j\}_{j=1}^k$ with multiplicities $\{m_j\}$, each connected to vertices $\{i_j\}_{j=1}^k$ . To sum over all connected polymers of cardinality $m$ connected to $Z^*$ it suffices to count over all the sets of polymers connected to the vertices $\{i_j\}$ such that $m=\sum_j m_j$. This can always be done for all $ \gamma$, see Fig.~\ref{fig:esqueme_polymer_LRproof} for an example. This leads to a (tolerable) overcounting, as different combinations of the subpolymers $\{\gamma_j\}_{j=1}^k$ can correspond to the same polymer $\gamma$. With this in mind, we write 
\begin{align}
    \mathbf{S}_{Z^*} \leq \sum_{\substack{m_1, \dots m_j, \dots m_k \\ \sum_j m_j =m}} \prod_{j}  \sum_{ \substack{\gamma \in \Upsilon_{i_j}\\ \norm{\gamma}=m_j }}  \prod_{Z \in \gamma} \norm{h_Z}, \\
    \label{eq::S_Z*_beforebound}
\end{align}
where we now sum over all the elements of the set $\Upsilon_{i_j}$ of polymers $\gamma$ connected to site $i_j$. To continue we thus need to control the sum over all polymers, or equivalently tuples of hyperedges, of certain multiplicity $m_i$ which are connected to a given site $i$,
\begin{align}
    \mathbf{S}_i  \equiv \sum_{ \substack{\gamma \in \Upsilon_{i}\\ \norm{\gamma}=m_i }}  \prod_{Z \in \gamma} \norm{h_Z}.
    \label{eq::S_i}
\end{align}

At this point it is important to see that each polymer $\gamma$ can be associated with an auxiliary graph, labeled as $\tilde{\gamma}$, which has an edge for each hyperedge of $\gamma$, and such that it connects all the hyperedges $Z$, as illustrated in Fig.~\ref{fig::5a}. We refer to it as the \emph{connectivity polymer}. As it is always possible to make this association we can, using Eq.~\eqref{eq:HkLocal}, write

\begin{align}\label{eq:conpoly}
    \mathbf{S}_{i} \leq \sum_{\substack{\tilde{\gamma} \in \tilde{\Upsilon}_i\\ \norm{\tilde{\gamma}}=m_i}} \prod_{e \in \tilde{\gamma}} \sum_{Z \nsim e} \norm{h_Z} \leq \sum_{\substack{\tilde{\gamma} \in \tilde{\Upsilon}_i\\ \norm{\tilde{\gamma}}=m_i}} \prod_{e \in \tilde{\gamma}} J_{e}.
\end{align}
where $\tilde{\Upsilon}_i$ is analogously the set of all connectivity polymers $\tilde{\gamma}$ connected to site $i$.
The sum over all polymers connected to $\tilde{\gamma}$ is captured in $J_e$, as defined in Eq.~\eqref{eq:HkLocal}. It is important to note that, as shown in Fig.~\ref{fig::5b}, a given polymer $\gamma$ can have multiple associated connectivity polymers $\tilde{\gamma}$. Since in Eq.~\eqref{eq:conpoly} we sum over all possible connectivity polymers and, within each, over all its associated polymers, this leads to an overestimation. The benefit is that we turn the sum into one over polymers defined on graphs rather than hypergraphs, which simplifies our task considerably. For simplicity, we refer to connectivity polymers $\tilde{\gamma}$ simply as polymers from now on.
\begin{figure*}[t]
\centering
\renewcommand{\thesubfigure}{\alph{subfigure}}
    \begin{subfigure}[b]{0.45\textwidth}
    \centering
    \includegraphics[width=4.5cm]{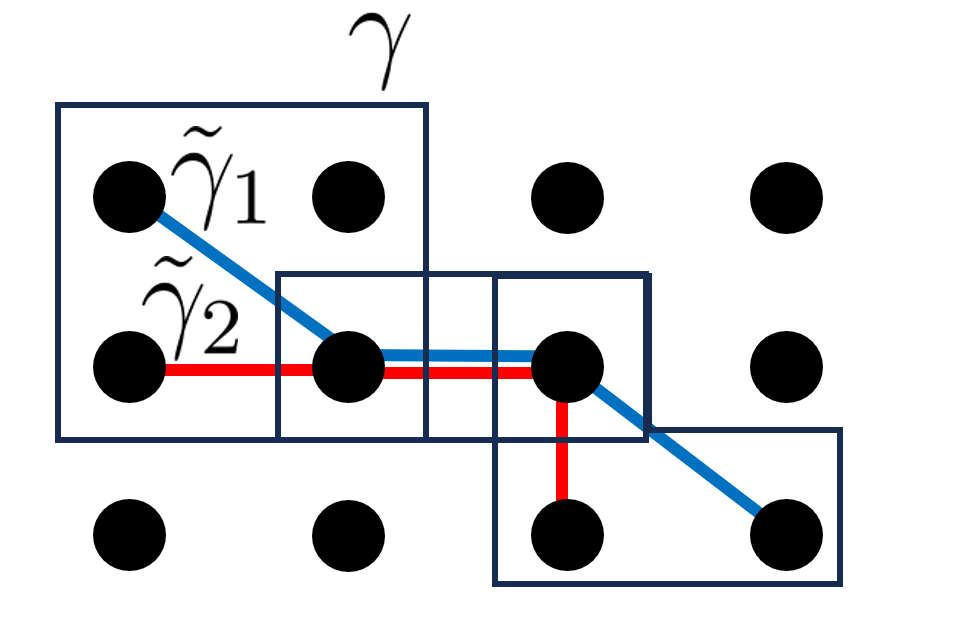}
   \caption{}
    \label{fig::5a}
    \end{subfigure}
\quad \quad
    \begin{subfigure}[b]{0.45\textwidth}
    \centering
    \includegraphics[width=4.5cm]{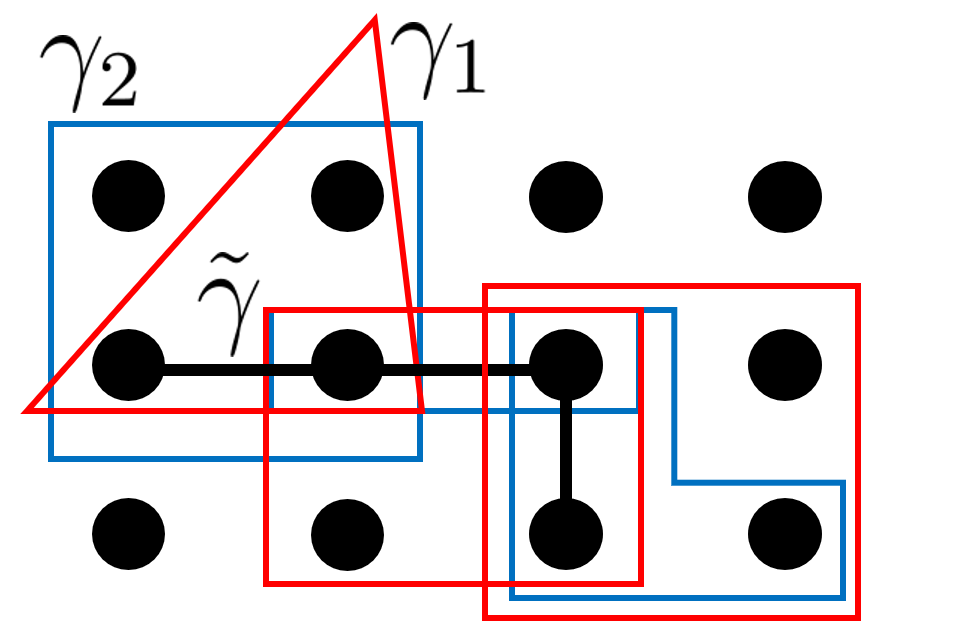}
    \caption{ }
    \label{fig::5b}
    \end{subfigure}
   \caption{\justifying \textbf{(a)} Example of a polymer $\gamma$, a tuple of hyperedges, being associated with two different connectivity polymers $\tilde{\gamma}_1$ and $\tilde{\gamma}_2$, tuples of edges. \textbf{(b)} Example of a given connectivity polymer $\tilde{\gamma}$ with different polymers $\gamma_1$ and $\gamma_2$ associated.}
\end{figure*}
At this stage it is important to remark that the notation can be expressed either in terms of edges $e$ or in terms of vertices, where $e \equiv (l,p)$ represents an edge connecting vertices $l$ and $p$. We switch between these representations as needed in subsequent steps. By writing $ \prod_{e \in \tilde{\gamma}}J_e =   \prod_{(l,p) \in \tilde{\gamma}}J_{lp}$ we have that
\begin{equation}
     \sum_{\substack{\tilde{\gamma} \in \tilde{\Upsilon}_i\\ \norm{\tilde{\gamma}}=m_i}} \prod_{e \in \tilde{\gamma}} J_{e} \equiv  \sum_{\substack{ \tilde{\gamma} \in \Upsilon_i\\\norm{\tilde{\gamma}}=m_i}} \prod_{(l,p) \in \tilde{\gamma}}J_{lp}.
    \label{eq:appA:sumUpsilon_indexes}
\end{equation}
Therefore, we are summing the product $\prod_{(l,p) \in \tilde{\gamma}}J_{lp}$ over all possible connected polymers with fixed number of edges $m_i$ of the set $\tilde{\Upsilon}_i$ of polymers connected to site $i$.
\begin{figure*}[t]
    \centering
    \renewcommand{\thesubfigure}{\alph{subfigure}} 
    \begin{subfigure}{0.45\textwidth}
        \centering
        \includegraphics[width=\textwidth]{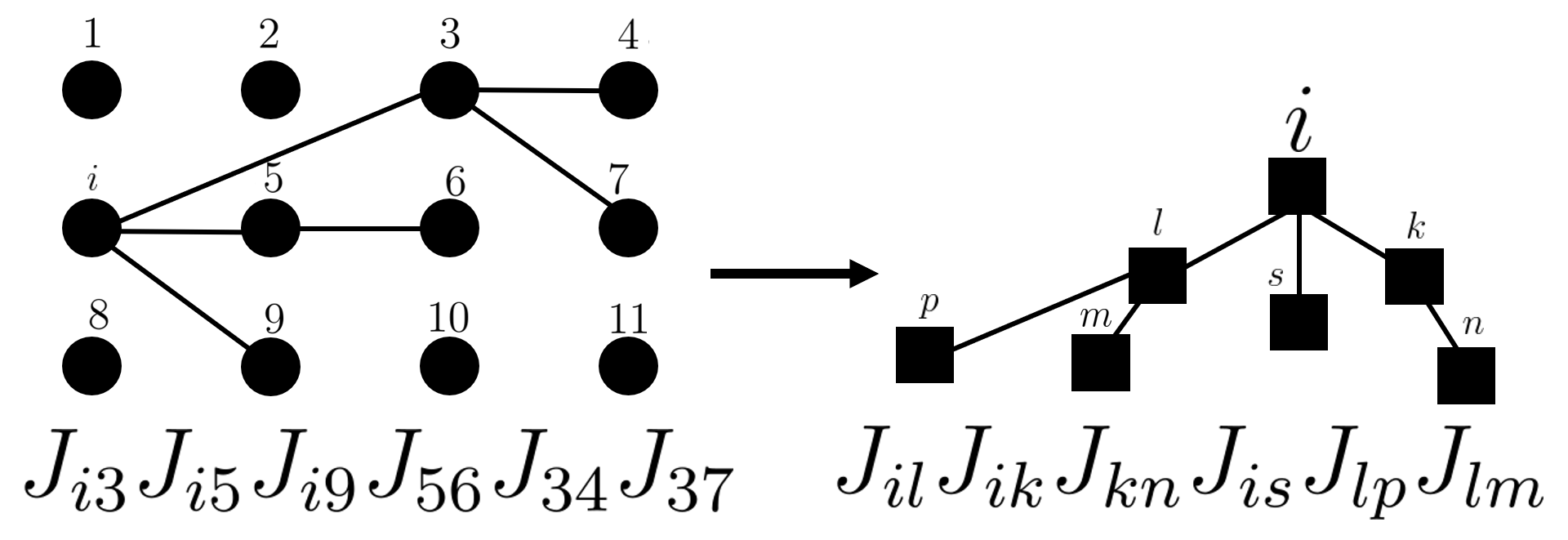}
        \caption{}
    \end{subfigure}
    \hfill
    \begin{subfigure}{0.45\textwidth}
        \centering
        \includegraphics[width=\textwidth]{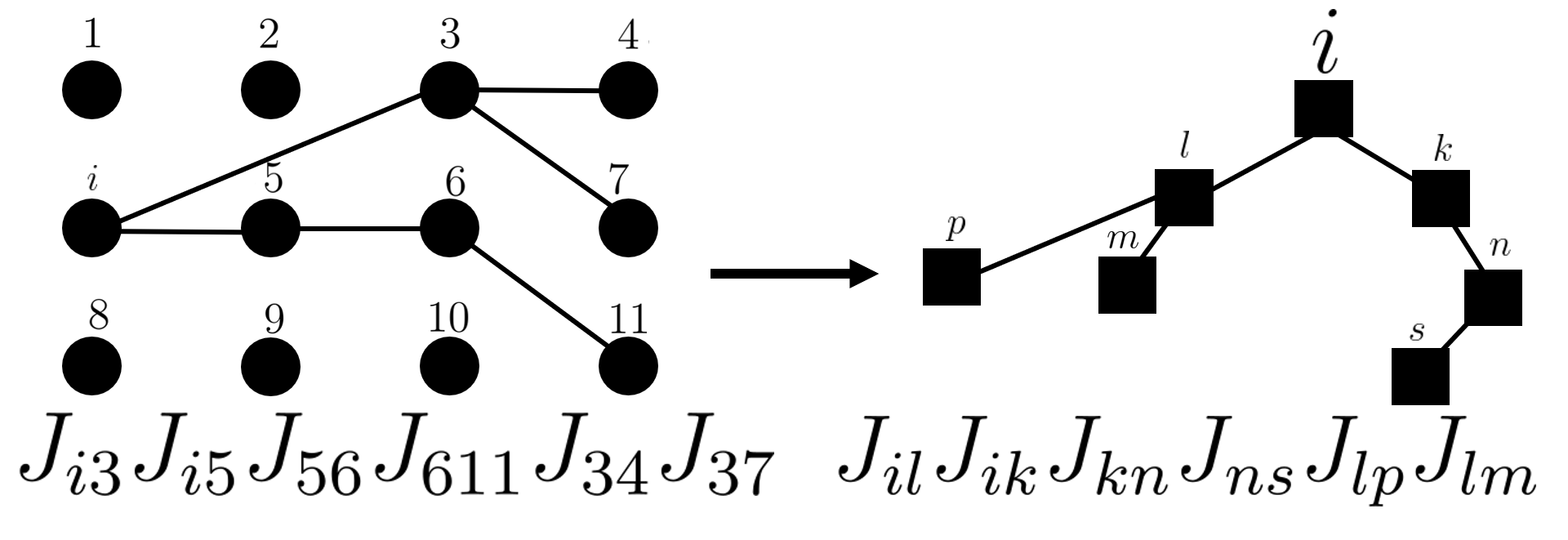}
        \caption{}
    \end{subfigure}
    
    \vspace{2pt}
    
    \begin{subfigure}{0.45\textwidth}
        \centering
        \includegraphics[width=\textwidth]{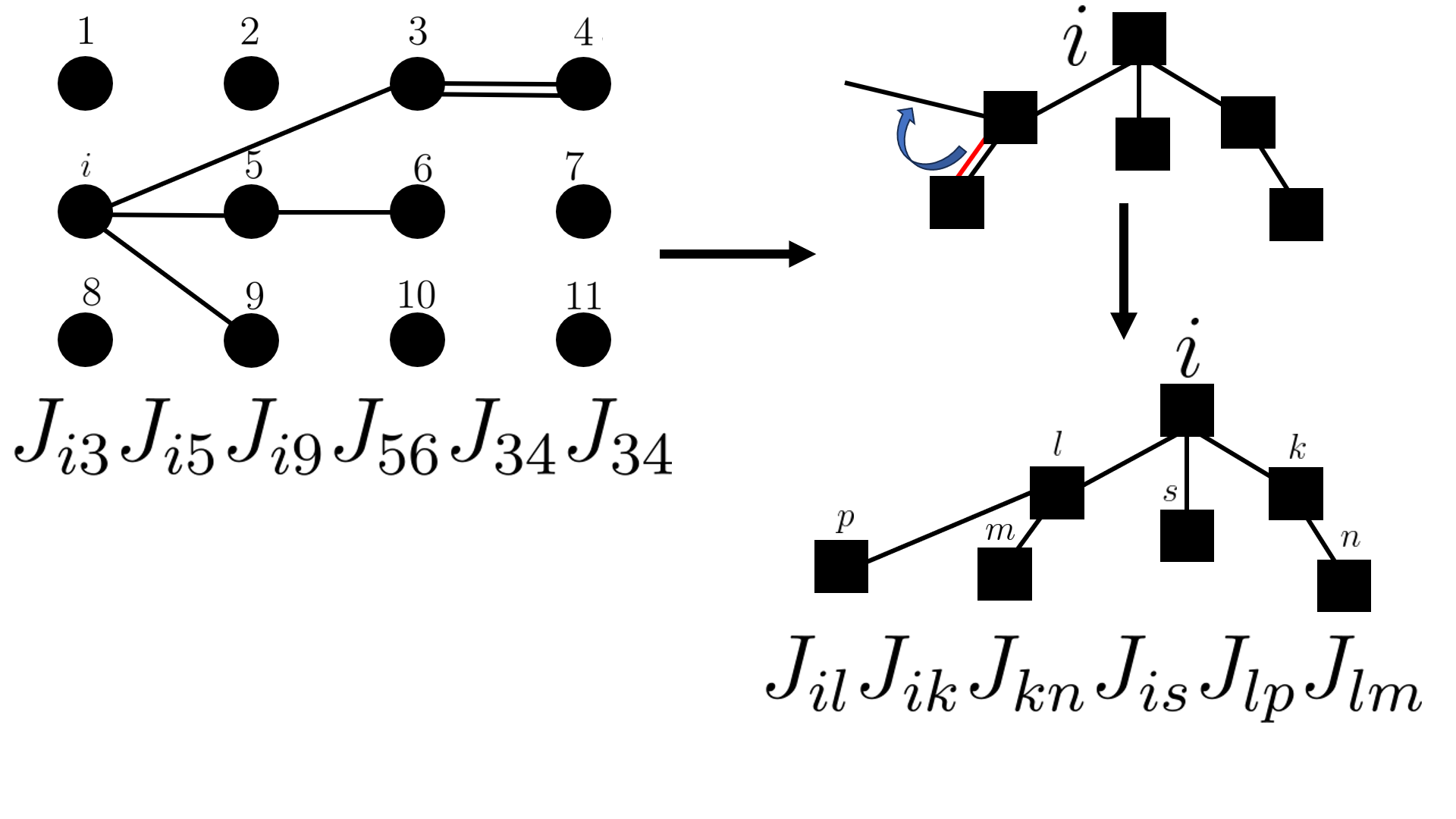}
        \caption{}
    \end{subfigure}
    \hfill
    \begin{subfigure}{0.45\textwidth}
        \centering
        \includegraphics[width=\textwidth]{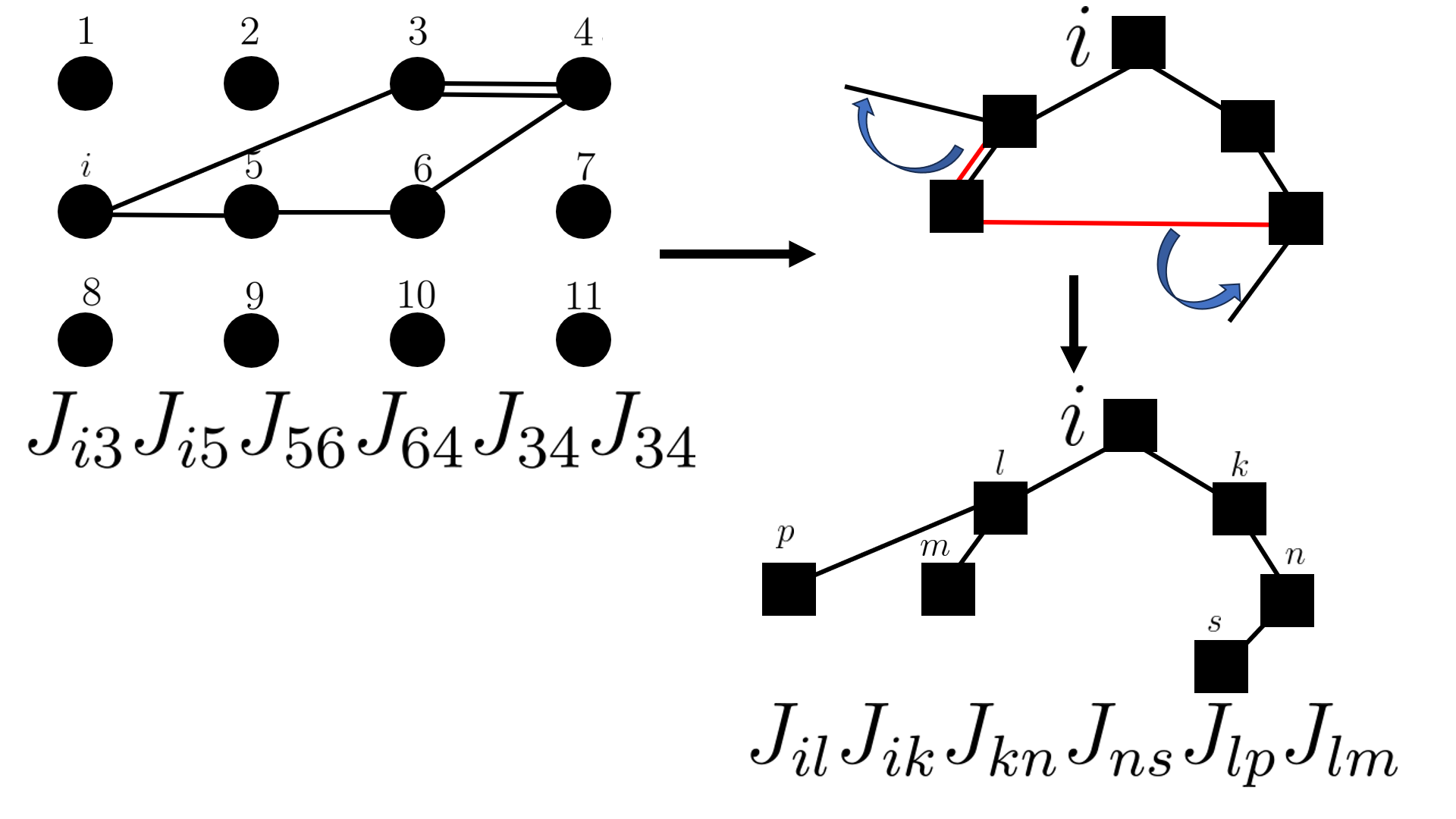}
        \caption{}
    \end{subfigure}
    \caption{\justifying \label{fig:LRlimp_Map_Lattice_to_tree}
    Illustration of the map from lattice polymer to rooted tree in indexes with some examples of $m=6$. In \textbf{(a)} and \textbf{(b)} the map is trivial, as the lattice polymer is a rooted tree itself. In cases with loops like \textbf{(c)}, the mapping can be done by adding a new auxiliary vertex for the edge not to be repeated. For \textbf{(d)}, with two loops, we add auxiliary vertices as many times as needed.
    }
\end{figure*}

\begin{figure*}[t]
\centering
\renewcommand{\thesubfigure}{\alph{subfigure}}
    \begin{subfigure}[b]{0.47\textwidth}
    \centering
    \includegraphics[width=\linewidth]{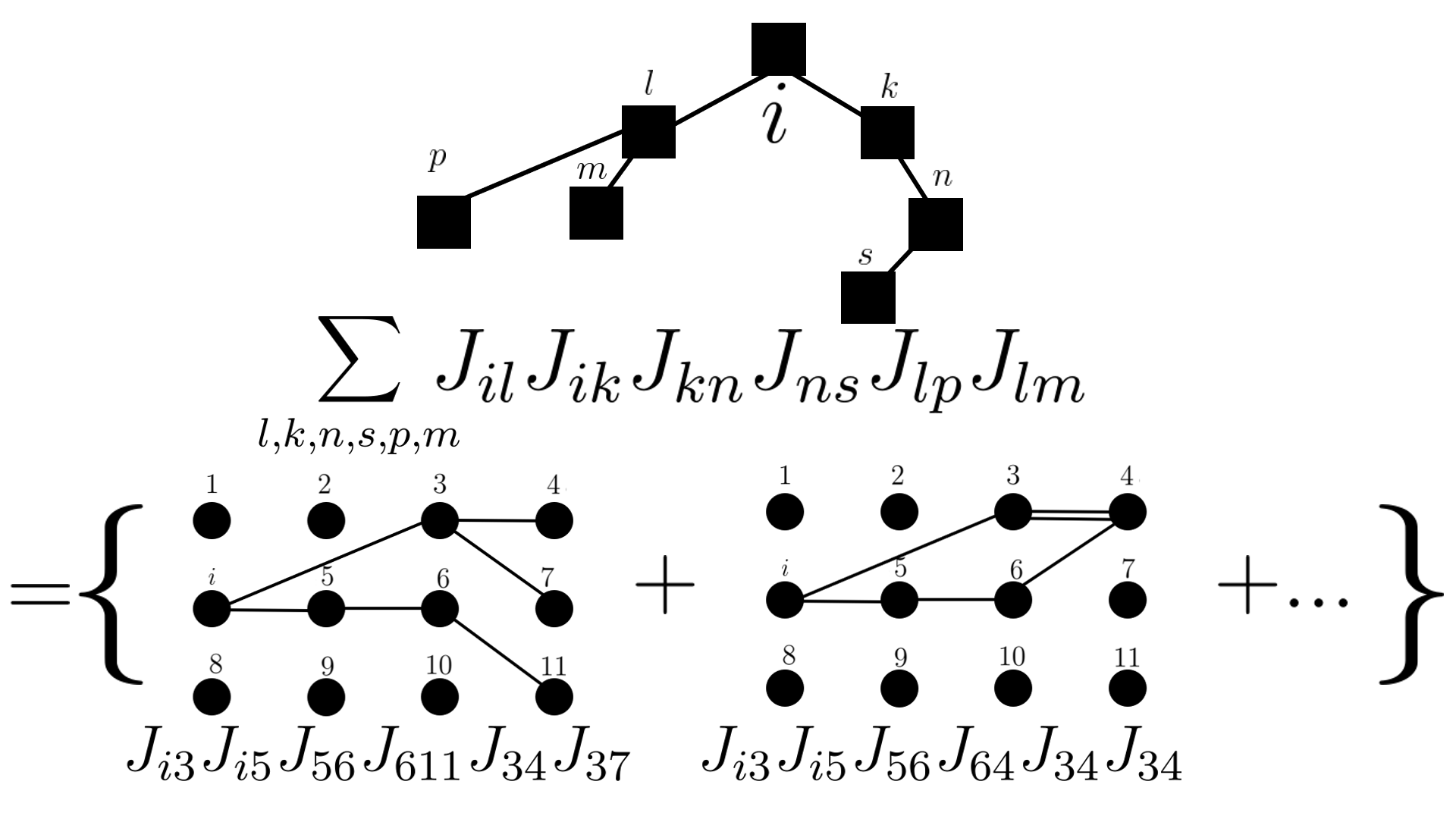}
   \caption{\label{fig::7a}}
    \end{subfigure}
\quad \quad
    \begin{subfigure}[b]{0.47\textwidth}
    \centering
    \includegraphics[width=\linewidth]{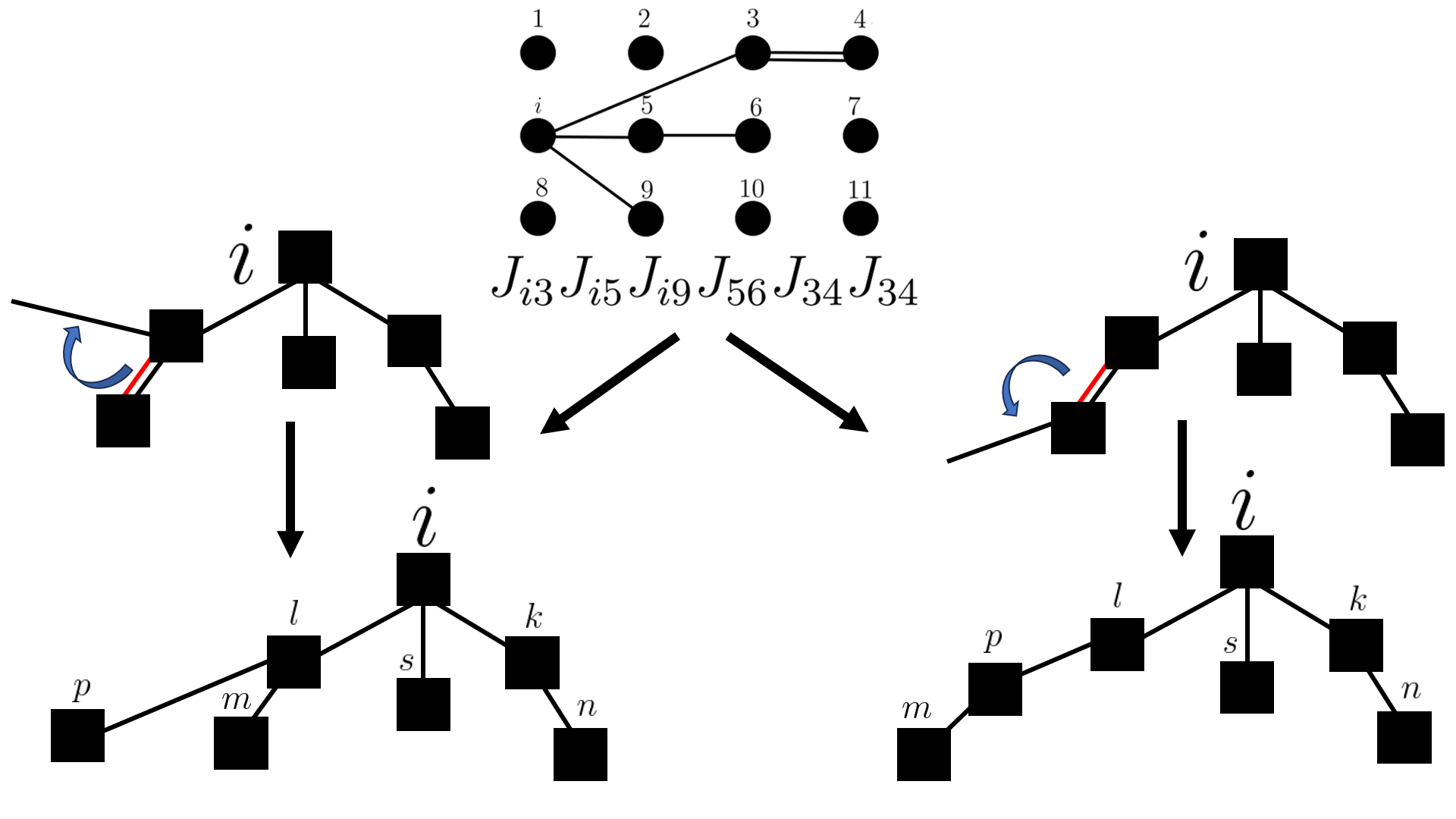}
    \caption{\label{fig::7b}}
    \end{subfigure}

   \caption{\justifying \textbf{(a)} Example of a single rooted tree being associated to different polymers. The sum of the indexes over the lattice of a certain rooted tree structure results in the sum of a set of polymers with a fixed cardinality $m$ connected to site $i$. \textbf{(b)} A lattice polymer which can be mapped to two different rooted trees. In this example, the polymer can be mapped to a different rooted tree by adding the auxiliary index in a new layer instead of the previous one. This incurs in a tolerable overcounting of some polymers.}

\end{figure*}
The key now is to realize that each term $\prod_{(l,p) \in \tilde{\gamma}}J_{l,p}$ has a structure which can be made to correspond to a rooted tree $T=\{V_T,e\}$, with vertex $i$ being the root and the arbitrary indexes $l,p,k...$ as the vertices. Any connected polymer can be assigned (in a nonunique way) to a particular rooted tree by substituting the indexes $V_T$ by vertices in the appropriate rooted tree. Illustrations of that nonunique mapping are shown in Fig.~\ref{fig:LRlimp_Map_Lattice_to_tree}.

The existence of this map implies that if we sum over all rooted trees, and each is itself summed over the sites of the entire lattice, the resulting sum is an upper bound to that over all polymers in $\tilde{\Upsilon}_i$. As illustrated in Fig.~\ref{fig::7a}, the sum takes the form
\begin{align}
      \mathbf{S_{i}} \leq  \sum_{\substack{T \\ \vert V_T \vert=m_i+1}} \, \sum_{\{V_T\}\equiv l,p,k \dots} \, \prod_{ e \in T} J_{e}.
     \label{eq:antesLemmaCommonUpperBound}
\end{align}

Notice that here we incur in a possibly important overcounting, since the same lattice polymer can be mapped to different rooted tree structures, as exemplified in Fig.~\ref{fig::7b}. The next step is to obtain a common upper bound for the sum $\sum_{\{V_T\}\equiv l,p,k...} \, \prod_{\forall e \in T} J_{e}$ independently of the tree $T$.

\begin{lemma}\label{lemma:common_upperbound}
   Given a rooted tree $T$ such that $\vert V_T \vert=n+1$, the sum over all associated polymers is such that 
    \begin{equation}
        \sum_{\{V_T\}\equiv l,p,k...} \prod_{ e \in T} J_{e} \leq (gu)^{n}.
    \end{equation}

\begin{proof}
    We first show the simple case of a star-like tree with $m_i=3$ in Fig.~\ref{fig::8a},
    \begin{align}\label{eq:stargraph}
     \sum_{\substack{T \text{ star-like}\\ m_i=3}} \prod_{e \in T}J_e &\equiv \sum_{p, l, j } J_{ij}J_{il} J_{ip} \\  &\leq\sum_{l j } J_{ij} J_{il} \sum_{p} J_{ip}\\ &\leq gu\sum_{l j } J_{ij}J_{il} \leq  \left(gu\right)^3.
     \end{align}
It is easy to see that a $n$-leaved star tree has a corresponding bound, leading to 
     \begin{equation}\label{eq:stargraph_N}
     \sum_{\substack{T \text{ star-like}\\ m_i=n}} \prod_{e \in T}J_e \leq  \left(gu\right)^n.
     \end{equation}
The complementary case is a linear graph, as the one shown in Fig.~\ref{fig::8b}
\begin{equation}\label{eq:linegraph}
     \sum_{\substack{T \text{ linear}\\ m_i=3}} \prod_{e \in T}J_e \leq \sum_{\substack{p \\ l,k}} J_{ik}J_{kl} J_{lp} = \sum_{p} \left[\textbf{J}^3\right]_{ip},
\end{equation}
where $\left[\textbf{J}^n\right]_{ij}$ corresponds to the linear graph of $m_i$ edges which starts at $i$ and ends at $l$. As $(1+d_{ij})^{-\alpha}$ satisfy the convolution condition, for any linear graph we have~\cite{Nachtergaele_2006,Nachtergaele_2019,kim2024thermalarealawlongrange} 
\begin{equation}
    \sum_{\substack{T \text{ linear}\\ m_i=n}} \prod_{e \in T}J_e = \sum_j \left[\textbf{J}^n\right]_{ij} \leq \left(gu\right)^n.
\end{equation}
From these examples we can generalize to larger tree structures by an inductive argument.
First, for all the leaves of the rooted tree that belong to a line subgraph of size $s \ge 2$, their contribution can be bounded away by $(gu)^s$ as in Eq.~\eqref{eq:linegraph} until we obtain a sum that can be mapped to a tree in which all the leaves belong to a star subgraph. Then, for each of the star subgraphs with $s$ leaves, we upper bound their contribution by $(gu)^s$ as in Eq.~\eqref{eq:stargraph_N}. We can do this until the resulting tree has no more edges, and we are only left with the root, as shown in Fig.~\ref{fig::8c}. Since there are $n$ edges in total, the sum is upper bounded by $(gu)^n$.
\end{proof}
\end{lemma}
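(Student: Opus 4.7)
The natural approach is induction on the number of edges $n$, leveraging the uniform summability of the couplings that is guaranteed by the assumption $\alpha > D$. The base case $n = 0$ is trivial, since the tree consists of the root alone, the product over an empty edge set is $1$, and $(gu)^0 = 1$.

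For the inductive step, take a rooted tree $T$ with $n \geq 1$ edges, pick any leaf $v$ of $T$ other than the root, and let $w$ denote its parent. Because $v$ has degree one in $T$, the innermost sum over the lattice position of $v$ isolates a single factor $J_{wv}$, with every other factor in $\prod_{e \in T} J_e$ independent of $v$. By the definition of $u$ in Eq.~\eqref{eq:def-u}, this sum is bounded uniformly in $w$ by $\sum_{v} J_{wv} = g \sum_{v} (1+d_{wv})^{-\alpha} \leq gu/2^{\alpha} \leq gu$. Applying the induction hypothesis to the pruned tree $T \setminus \{v\}$, which has $n - 1$ edges, yields $(gu)^{n-1} \cdot gu = (gu)^n$, as claimed.

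As sanity checks, this recipe reproduces the two extremes cleanly: for a star with $n$ leaves, all leaves share the root as parent, and the sum factorizes into $n$ independent contributions each bounded by $gu$; for a linear chain, iterated leaf removal is equivalent to repeated application of the convolution-type bound $\sum_k J_{ik} J_{kj} \leq gu\, J_{ij}$ along the chain, followed by a final unrestricted sum. The only point that requires real care is the uniformity of the bound in the parent vertex $w$: the $\max_i$ built into the definition of $u$ ensures $\sum_{v} J_{wv} \leq gu$ holds independently of where $w$ sits in $\Lambda$, which in turn is finite only because $\alpha > D$. This is precisely the regime to which the lemma, and the overall strategy of Theorem~\ref{Theorem:KoteckyLR}, are restricted.
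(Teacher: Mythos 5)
Your proof is correct, and it is in fact a cleaner and more elementary version of the paper's argument. The paper proves the bound by first treating two extreme examples (star-like trees, via uniform summability, and linear chains, via the convolution/reproducing property $\sum_k J_{ik}J_{kl} \leq gu\,J_{il}$) and then sketching an inductive pruning of whole line and star subgraphs. Your single-leaf-removal induction subsumes both cases at once: since a leaf $v$ appears in exactly one factor $J_{wv}$, the innermost sum is bounded by $\sum_v J_{wv} \leq gu/2^{\alpha} \leq gu$ uniformly in the parent $w$, and the pruned tree inherits the hypothesis. Notably, this means you never need the convolution condition — only uniform summability of $J$ — which streamlines the set of assumptions actually used in this step (the paper cites the convolution condition for the linear case, but as your chain sanity check shows, summing from the free end outward makes it unnecessary). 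You also correctly identify the one point requiring care, namely uniformity of the bound in the parent vertex, which is exactly what the $\max_i$ in the definition of $u$ provides. The only cosmetic caveat is that the sum $\sum_{\{V_T\}}$ in the paper runs over unrestricted lattice assignments (the tree map already overcounts), so no distinctness constraint interferes with your factorization; your argument is valid as written.
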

\begin{figure*}[t]
    \begin{minipage}{0.4\textwidth}
        \centering
        \subfloat[][\justifying Upper bound of the summation over star-like trees of $m_i=3$ connected to i.]{
            \includegraphics[width=0.9\linewidth]{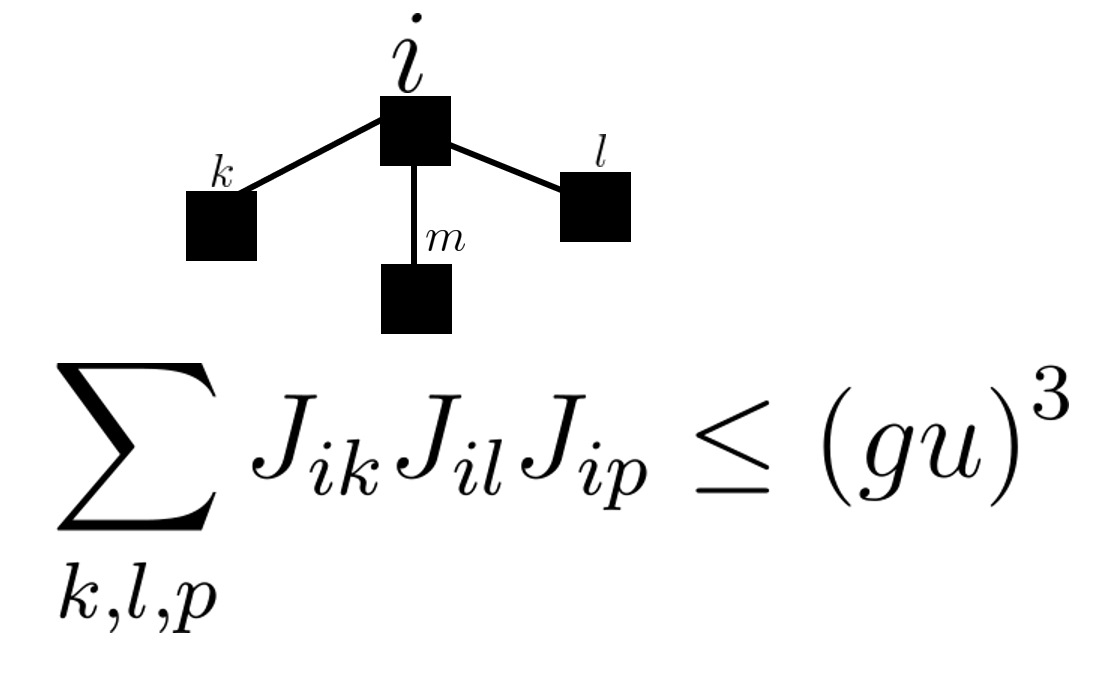}
            \label{fig::8a}
        }\\
        \subfloat[][\justifying Upper bound of the summation over linear trees of $m_i=3$ connected to i.]{
            \includegraphics[width=0.9\linewidth]{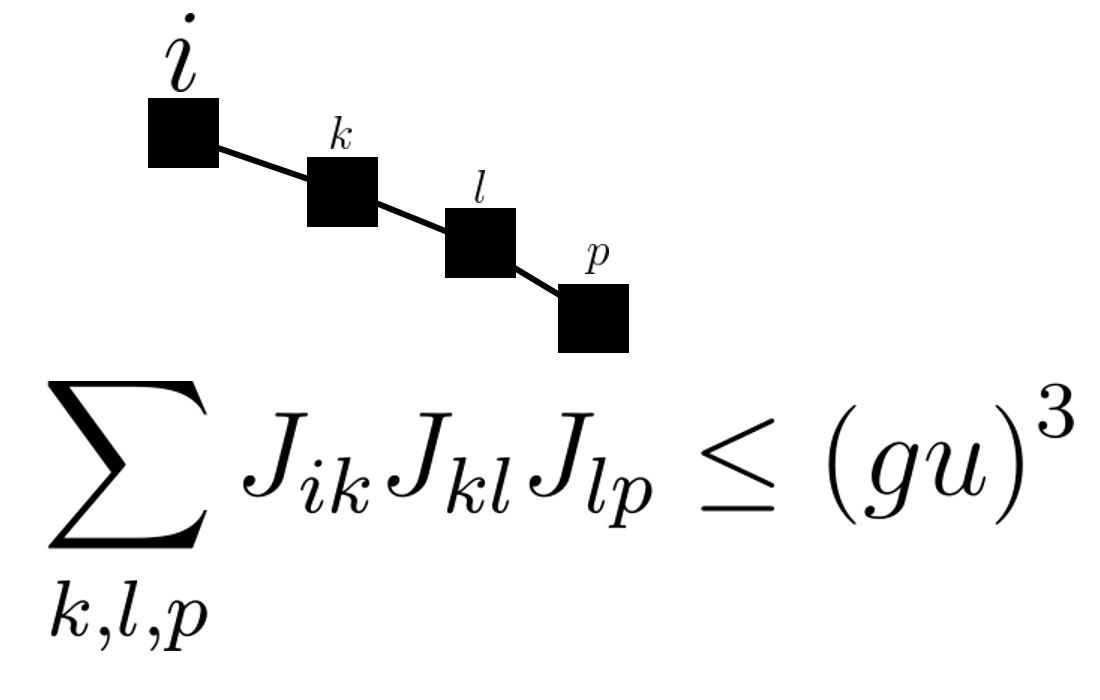}
            \label{fig::8b}
        }
    \end{minipage}
    \begin{minipage}{0.55\linewidth} 
        \centering
        \subfloat[][\justifying Upper bound of the summation over all trees with a certain structure of $m_i=6$ connected to i. We first upper bound the sequential contributions, reducing the size of the tree, and finally remove the star-like ones, until we have summed over all indexes.]{
            \includegraphics[width=1\linewidth]{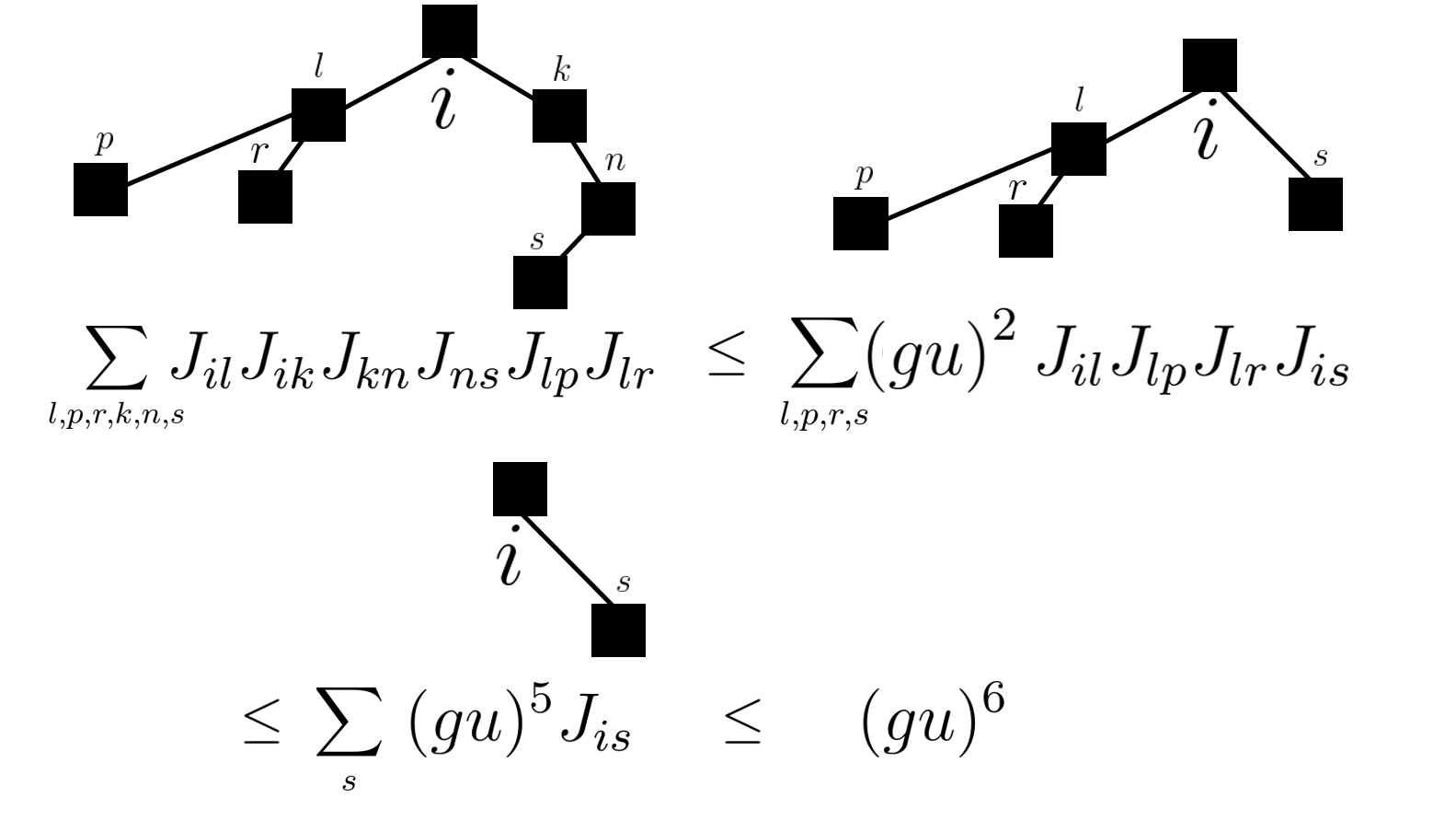}
            \label{fig::8c}
        }
    \end{minipage}
    \caption{\justifying We represent the process of upper-bounding the contribution corresponding to a rooted tree in the proof of Lemma~\ref{lemma:common_upperbound}. The upper bound depends only on the cardinality $m_i$, and not on the specific tree.}
    \label{fig:composite}
\end{figure*}
We conclude from Lemma~\ref{lemma:common_upperbound} that the contribution of each rooted tree in the sum of Eq.~\eqref{eq:antesLemmaCommonUpperBound} has an upper bound independent of the tree, so that we can write
\begin{align}
     \mathbf{S}_i \leq    \sum_{\substack{T  \\|V|=m+1}} \left(g u\right)^{m}= \left(g u\right)^{m} \sum_{\substack{T  \\|V|=m+1}} 1.
     \label{eq:despuesLemmaCommonUpperBound}
\end{align}

We have thus reduced our bound to the counting of rooted trees with a certain amount of edges.
If a tree has $m_i$ edges, it consequently has $m_i+1$ vertices. Standard combinatorial arguments show that the number of unlabeled rooted trees with $n$ vertices is lower than $4^{n-1}$~\cite{TheNumberOfTrees}, from which it follows that
\begin{equation}
    \mathbf{S}_i \leq \left( 4 g u\right)^{m_i}.
\end{equation}
Note from~\cite{TheNumberOfTrees} that there are tighter asymptotic estimates of the number of unlabeled rooted trees, so we expect that a better estimate is possible. This would result in a larger radius of convergence of the series. 

At this point, we can incorporate the bound into Eq.~\eqref{eq::S_Z*_beforebound} yielding
\begin{align}
    \mathbf{S}_{Z^*} &\leq \sum_{\substack{m_1, \dots m_j, \dots m_k \\ \sum_j m_j =m}} \prod_{j} \left(4gu \right)^{m_j} \\&= \left(4gu\right)^m \sum_{\substack{m_1, \dots m_j, \dots m_k \\ \sum_j m_j =m}} 1 = \left(4guk\right)^m \,.
\end{align}
Finally, plugging this bound in Eq.~\eqref{eq:boundKP} leads to 
\begin{align}
    &\sum_{\gamma \nsim Z^*} \vert w_\gamma\vert \mathrm{e}^{a\left(\gamma\right)} \leq  \sum_{m=1} \left(4\vert\beta\vert \mathrm{e} gu k\right)^{m}. \label{eq::expansionBeforeUpperBound_pluggingresult}
\end{align}
Then, for $\vert \beta \vert < \beta' =\left(4\mathrm{e}guk \right)^{-1}$,
\begin{align}
    \sum_{\gamma \nsim Z^*} \vert w_\gamma\vert \mathrm{e}^{a\left(\gamma\right)} \leq \left(\frac{1}{1-\frac{\vert \beta \vert}{\beta '}}-1\right),
    \label{eq::expansionBeforeUpperBound_pluggingresult_2}
\end{align}
consequently for $ \vert \beta \vert \leq  \beta'/2$,
\begin{align}
    \sum_{\gamma \nsim Z^*} \vert w_\gamma\vert \mathrm{e}^{a\left(\gamma\right)} \leq 1.
    \label{eq::ExpressionJustBeforeCriteriaSatisfied}
\end{align}
We now pick an arbitrary polymer $\gamma^*$ and sum this last expression over all its hyperedges
\begin{align}
    \sum_{\gamma \nsim \gamma^*} \vert w_\gamma\vert \mathrm{e}^{a\left(\gamma\right)} &\leq \sum_{Z^* \in \gamma^*}\sum_{\gamma \nsim Z^*} \vert w_\gamma\vert \mathrm{e}^{a\left(\gamma\right)} \\  &\leq \sum_{Z^* \in \gamma^*} 1 = \vert \gamma \vert = a \left(\gamma^*\right).
\end{align}
This shows that the conditions of Lemma~\ref{le::Kotecky-Preiss} are satisfied. If we select a site $i$ and a hyperedge $Z_i$ connected to it and apply Eq.~\eqref{eq::KoteckyPreiss2} with $\gamma^*=Z_i $,
\begin{equation}
    \sum_{\substack{\Gamma \in \mathcal{G}_C ,\, \Gamma \nsim Z_i}} \vert \varphi(\Gamma) \prod_{\gamma \in \Gamma}  w_{\gamma}\vert \leq 1.
\end{equation}

Finally, summing over all sites, we obtain a sum over all the connected clusters of the lattice
\begin{align}
     \sum_{\Gamma \in \mathcal{G}_C } \vert\varphi \left(\Gamma\right) \prod_{\gamma \in \Gamma} w_{\gamma}\vert &\leq \sum_{i}\sum_{\substack{\Gamma \in \mathcal{G}_C\\ \Gamma \nsim i}} \vert \varphi(\Gamma) \prod_{\gamma \in \Gamma}  w_{\gamma}\vert \\
     &\leq \sum_{i}1 \leq N.
\end{align}
This upper bound is valid for any $\beta \leq \beta'/2$. We are now in a position to prove the convergence of the series
\begin{widetext}
\begin{align}
    \big\vert \log{Z_\rho} - T_m \big\vert 
    &\leq \sum_{\substack{ \Gamma \in \mathcal{G}_C \\ \norm{\Gamma} \geq m}} \bigg\vert \varphi(\Gamma) \prod_{\gamma_i \in \Gamma} w_{\gamma_i} \bigg\vert \\
    \label{eq::BetaDependentUpperBound_inicio}
    &= \vert\beta\vert^m \sum_{\substack{ \Gamma \in \mathcal{G}_C \\ \norm{\Gamma} \geq m}} \vert\beta\vert^{\norm{\Gamma} - m} 
    \bigg\vert \varphi(\Gamma) \prod_{\gamma_i \in \Gamma} 
    \frac{1}{\norm{\gamma_i}! \prod_{Z \in \gamma_i} m_{\gamma_i}(Z)!} 
    \operatorname{Tr} \left[ \sum_{\sigma \in S_{\norm{\gamma_i}}} \prod_{j=1}^{\norm{\gamma_i}} h(\gamma_{i, \sigma(j)}) \rho \right] 
    \bigg\vert \\
    &\leq \bigg\vert \frac{2\beta}{\beta'} \bigg\vert^m \sum_{\substack{ \Gamma \in \mathcal{G}_C \\ \norm{\Gamma} \geq m }} 
    \bigg\vert \varphi(\Gamma) \prod_{\gamma_i \in \Gamma} w_{\gamma_i} \left[\beta=\beta'/2\right]  \bigg\vert \leq \bigg\vert \frac{2\beta}{\beta'} \bigg\vert^m N.
    \label{eq::BetaDependentUpperBound_final}
\end{align}
\end{widetext}
By defining $\beta^*=\beta'/2$, we conclude that for $\beta < \beta^*$
\begin{equation}
\big    \vert\log Z_\rho -T_m \big \vert \leq N \left\vert \frac{\beta}{\beta^*}\right\vert^m 
\end{equation}
and consequently $\log Z_\rho $ is analytic and the cluster expansion converges.
\clearpage
\subsection{Koteck\'y-Preiss convergence criterion for the generalized partition function of long-range lattice model}\label{app:KP_GenPartFunct}

Here we focus on a slightly different object,
\begin{equation}
    \log Z_{\rho, Gen} =  \log \text{Tr} \left[\prod_{l=1}^K \mathrm{e}^{\lambda_l H_l} \rho \right],
    \label{eq:appendix_logZGen}
\end{equation}
where $\{ H_l \}$ satisfies our initial assumptions, so that we include both local and long-range operators. Now there are multiple exponentials of operators inside the trace, which will give different kinds of hyperedges, up to $K$, each labeled with $l$. Additionally there is an order in which the terms can appear in the polymer expansion, which we identify by assigning a different "color" to each kind of hyperedge.

We first need an expression for the weight $w_\gamma^t$ of each polymer. We follow a calculation similar to Appendix A of~\cite{efficient_alg_Mann_2021}, but now with an extra degree of freedom due to the color.
\begin{widetext}
\begin{lemma}
    The polymer weights $w_\gamma^t$ of the expansion of $ \log \Tr \left[\prod_{l=1}^K \mathrm{e}^{\lambda_l H_l} \rho \right] $ with $\rho$ a product state are given by
\begin{equation}
    w_{\gamma}^t= \left[\prod_l \frac{\lambda_l^{t_l^i}}{t_l ! \prod_{Z^l \in \gamma}m(Z^l)!} \right]\Tr \left[ \prod_l \left( \sum_{\sigma \in S_{t_l}}\prod_{p =1}^{t_l} h(\gamma_{\sigma (p)})\right) \rho\right],
\end{equation}
    where $t_l$ is the number of hyperedges of certain color $l$ of a polymer $\gamma$, and $\sigma$ is a permutation of hyperedges of the same color.
\end{lemma}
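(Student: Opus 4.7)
The plan is to derive the weight formula by direct Taylor expansion of each exponential, followed by a careful regrouping of the resulting multi-sum into a sum over colored polymers, mirroring Appendix A of~\cite{efficient_alg_Mann_2021} but with $K$ noncommuting blocks kept in the fixed order imposed by the outer product over $l$. First I would expand $\mathrm{e}^{\lambda_l H_l}=\sum_{t_l\geq 0}\frac{\lambda_l^{t_l}}{t_l!}H_l^{t_l}$ for each $l=1,\dots,K$ and substitute $H_l=\sum_{Z^l}h_{Z^l}$, so that
\begin{equation}
\prod_l \mathrm{e}^{\lambda_l H_l}=\sum_{\{t_l\}}\prod_l\frac{\lambda_l^{t_l}}{t_l!}\sum_{(Z^l_1,\dots,Z^l_{t_l})}\prod_{p=1}^{t_l}h_{Z^l_p},
\end{equation}
where in each block $l$ the inner sum runs over ordered $t_l$-tuples of hyperedges.

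Next I would reorganize each ordered sum by its underlying multiset. Fixing a representative ordering $(\gamma_{l,1},\dots,\gamma_{l,t_l})$ of a multiset with multiplicities $m(Z^l)$, the $t_l!$ permutations $\sigma\in S_{t_l}$ yield ordered tuples, each distinct ordered tuple over that multiset being produced exactly $\prod_{Z^l}m(Z^l)!$ times by the permutations that only reshuffle equal hyperedges. Hence
\begin{equation}
\sum_{(Z^l_1,\dots,Z^l_{t_l})}\prod_{p=1}^{t_l} h_{Z^l_p}=\sum_{\text{multisets}}\frac{1}{\prod_{Z^l}m(Z^l)!}\sum_{\sigma\in S_{t_l}}\prod_{p=1}^{t_l}h(\gamma_{l,\sigma(p)}).
\end{equation}
Substituting this back, taking the trace, and collecting the prefactors block by block produces exactly the claimed expression for $w_\gamma^t$, with the color label $l$ tagging each hyperedge and the $K$ symmetrization sums kept separate.

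The final step is to exponentiate: since $\rho$ is a product state, the trace factorizes across any set of colored hyperedges whose supports are spatially disjoint, so the sum over collections of colored multisets factorizes into a sum over admissible collections of connected colored polymers with weights $w_\gamma^t$. Applying the standard polymer-to-cluster identity (as in Eq.~\eqref{eq:logZexp}) then yields the cluster expansion of $\log Z_{\rho,\mathrm{Gen}}$ with these weights.

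The main obstacle is tracking the combinatorics of colors without accidentally mixing them. Because the factors $\mathrm{e}^{\lambda_l H_l}$ do not commute, permutations must act only within a single color block, so the symmetrization $\sum_{\sigma\in S_{t_l}}$ is performed independently for each $l$, while the inter-block ordering is frozen by the product $\prod_l$ outside the trace. Once this is kept straight, the derivation reduces to $K$ copies of the single-color computation glued together by the fixed color order inside the trace, and the factorization over disconnected polymers carries over essentially unchanged from the single-exponential case.
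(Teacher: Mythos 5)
Your proposal is correct and follows essentially the same route as the paper's proof: Taylor-expand each exponential, convert ordered tuples of hyperedges into multisets with the $1/\prod_{Z^l}m(Z^l)!$ symmetrization factor, and use the product structure of $\rho$ to factorize the trace over connected components, keeping the color blocks in their frozen order. The only difference is one of ordering of steps --- the paper first splits the ordered sequences into connected ``sequential polymers'' and converts to multisets at the very end, tracking the interleaving of components within each color block explicitly via the multinomial coefficients $\prod_l\binom{n_l}{t_l^1\cdots t_l^k}$ (which is exactly the step you compress into ``collecting the prefactors block by block'' and the subsequent factorization claim); both bookkeepings produce the same per-polymer factors $\lambda_l^{t_l^i}/t_l^i!$ and hence the same weights.
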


\begin{proof}
We start by writing the inner sum as a product over disjoint objects, which we identify with the polymers $\gamma$

\begin{align}
      Z_{\rho, Gen} &= \text{Tr} \left[\prod_l \mathrm{e}^{\lambda_l H_l} \rho\right] = \sum_n \sum_{ \substack{n_A,\dots, n_l,\dots ,n_K\\\sum_l n_l=n}} \text{Tr} \left[\prod_l \frac{\lambda_l^{n_l}}{n_l !} \left(\sum_{Z\in Z_l(G)} h_Z\right)^{n_l}\rho\right],
\end{align}
where $Z_l(G)$ is the set of hyperedges of a certain color $l$ in the lattice. Let $ S = (Z_i)_{i=1}^{n} $ be an ordered sequence of hyperedges. Within that sequence there are connected subsequences of hyperedges. We call those subsequences of connected hyperedges sequential polymers $\gamma_s$. In those sequential polymers, repeated hyperedges are considered as different, due to the ordering of the sequence. Note that these sequential polymers, as the sequence $S$, preserve an ordering of color. Let $\Gamma_S$ be the set of sequential polymers in $S$, then
\begin{align}
      Z_{\rho, Gen} &=  \sum_n \sum_{ \substack{n_A,\dots, n_l,\dots ,n_K\\\sum_l n_l=n}}  \left[\prod_l \frac{\lambda_l^{n_l}}{n_l !}  \sum_{Z_1, \dots, Z_{n_l}\in   Z_l(G)} \right] \text{Tr} \left[ \prod_l \prod_{i}^{n_l} h_{Z_i}\rho\right] \\
      &=  \sum_n \sum_{ \substack{n_A,\dots, n_l,\dots ,n_K\\\sum_l n_l=n}}  \left[\prod_l \frac{\lambda_l^{n_l}}{n_l !} \right]\sum_{S : \vert S \vert=n} \prod_{\gamma \in \Gamma_S}\text{Tr} \left[ \prod_{Z \in \gamma} h_Z\rho\right]. 
\end{align}

Define  $ \Gamma_G := \bigcup_S \Gamma_S $ as the set of all sequential polymers in $G$, and let $ \mathcal{G}_G $ denote the collection of all admissible (i.e. mutually disconnected) sets of sequential polymers in $G$. We want to transform the sum of sequences into a sum of polymers. 

Let us introduce another variable in the sum, $k$, representing the number of polymers in a given sequence.  Notice that once we fix the number of hyperedges of each color $n_l$ for a polymer $\gamma_i$, $t_l^i$, there are $\prod_l \binom{n_l}{t_l^1 \dots t_l^k}$ sequences $S$ that give rise to a certain admissible set of colored polymers. Hence,
\begin{align}
      Z_{\rho, Gen} &=  \sum_n \sum_{ \substack{n_A,\dots, n_l,\dots ,n_K\\\sum_l n_l=n}}  \left[\prod_l \frac{\lambda_l^{n_l}}{n_l !} \right]  \sum_{k=0}^n \frac{1}{k!} \sum_{\substack{t_1^1, \dots t_1^i, \dots t_1^k\\...\\t_l^1, \dots t_l^i, \dots t_l^k\\...\\t_K^1, \dots t_K^i, \dots t_K^k}}\prod_l \binom{n_l}{t_l^1 \dots t_l^k}  \sum_{\substack{\gamma_1, \dots, \gamma_k \in \Gamma_G \\ t_l^i \text{ fixed } \forall \gamma_i\\ \text{admissible}}} \prod_i^k\text{Tr} \left[ \prod_{Z \in \gamma_i} h_Z\rho\right] \\
      &= \sum_n \sum_{ \substack{n_A,\dots, n_l,\dots ,n_K\\\sum_l n_l=n}}   \sum_{k=0}^n \frac{1}{k!}   \sum_{\substack{\gamma_1, \dots, \gamma_k \in \Gamma_G \\ \text{admissible}}} \prod_i^k  \left[\prod_l \frac{\lambda_l^{t_l^i}}{t_l^i !} \right]\text{Tr} \left[ \prod_{Z \in \gamma_i} h_Z \rho\right].
\end{align}

In the last step, we have merged the sum over $\{ t_l^i \}$ with the sum over the admissible polymers. By interchanging the summation over $k$ and $n$,

\begin{align}
    Z_{\rho, Gen}&=\sum_{k=0}^\infty \frac{1}{k!} \sum_{n=k}^\infty \sum_{ \substack{n_A,\dots, n_l,\dots ,n_K\\\sum_l n_l=n}}       \sum_{\substack{\gamma_1, \dots, \gamma_k \in \Gamma_G \\ \text{admissible}}} \prod_i^k  \left[\prod_l \frac{\lambda_l^{t_l^i}}{t_l^i !} \right]\text{Tr} \left[ \prod_{Z \in \gamma_i} h_Z \rho\right]=\sum_{\Gamma \in \mathcal{G}}  \prod_{\gamma \in \Gamma}  \left[\prod_l \frac{\lambda_l^{t_l^i}}{t_l^i !} \right]\text{Tr} \left[ \prod_{Z \in \gamma_i} h_Z \rho\right].
\end{align}
Finally, by transforming the sum over admissible sets of sequential polymers into a sum over admissible sets of polymers (where now repeated hyperedges are taken into account) and summing the weights of their permutations, we obtain
\begin{align}
    Z_{\rho, Gen}&=\sum_{\Gamma \in \mathcal{G}}  \prod_{\gamma \in \Gamma}  \left[\prod_l \frac{\lambda_l^{t_l^i}}{t_l^i ! \prod_{Z^l \in \gamma}m(Z^l)!} \right]\text{Tr} \left[ \prod_l \left( \sum_{\sigma \in S_{t_l^i}}\prod_{p =1}^{t_p^l} h_{\gamma_{\sigma (p)}}\right) \rho \right],
\end{align}
where we identify $h_{\gamma_{\sigma (p)}}=h_Z$ to each $Z \in \gamma$. Since there are equivalent sequential polymers being distinguished in the sum over permutations, i.e., permutations of repeated hyperedges, we introduce a factor of $1/\prod_{Z^l}m(Z^l)!$ to avoid overcounting.
\end{proof}
\end{widetext}
With the expression for $w_{\gamma}^t$, we can now study convergence. We proceed as before, first selecting a site $i$ and summing over all polymers connected to it. We choose again the same function $a(\gamma) =\vert\gamma\vert$, and we proceed as in Appendix~\ref{app:KP_LR} selecting an arbitrary hyperedge $Z^*$ 
\begin{align}
    \sum_{ \gamma \nsim Z^* } \vert w_\gamma^t\vert \mathrm{e}^{a\left(\gamma\right)} &\leq \sum_{\gamma \nsim Z^* }\vert w_\gamma^t\vert \mathrm{e}^{\norm{\gamma}} \\
    & \leq \sum_{m=1} \mathrm{e}^{m} \sum_{\substack{t_1,...,t_K \\ \sum_l t_l=m}} \binom{m}{t_1, \dots, t_l, \dots, t_K} \\
    & \cdot \prod_l^{K} \vert\lambda_l\vert^{t_l}  \sum_{\substack{\gamma \nsim Z^* \, \norm{\gamma}=m \\ \text{fixed } t_l  \forall l} }   \prod_{Z \in \gamma} \norm{h_Z}   .
\end{align}
Since all the $\{ h_Z\}$ obey Eq.~\eqref{eq:HkLocal}, irrespective of their color, this is exactly the same sum as in Appendix~\ref{app:KP_LR}, but with the difference that we also have the sum over the combinations of $\{ t_l \}$, so that
\begin{align}
    \sum_{ \gamma \nsim Z^* } \vert w_\gamma^t\vert \mathrm{e}^{a\left(\gamma\right)} & \leq \sum_{m=1} \left(4\mathrm{e}guk\right)^m \sum_{\substack{t_1,...,t_K \\ \sum_l t_l=m}}\\
    &\cdot \prod_l^{K} \vert\lambda_l\vert^{t_l}\binom{m}{t_1, \dots, t_l, \dots, t_K}    \\
    & \leq \sum_{m=1} \left(4 \mathrm{e}guk\right)^m    \left(\sum_l^K \vert \lambda_l \vert\right)^m.
\end{align}
Therefore, by repeating the argument of Appendix~\ref{app:KP_LR}, we conclude that for $\sum_l \vert \lambda_l \vert \leq \beta^* $, Lemma~\ref{le::Kotecky-Preiss} is satisfied, and
\begin{equation}
    \sum_{\Gamma \in \mathcal{G}_C } \vert\varphi \left(\Gamma\right) \prod_{\gamma \in \Gamma} w_{\gamma}^t\vert \leq N.
    \label{eq::IndepBoundGeneralizedClusterExp}
\end{equation}

We again repeat the same process as above, but the quantity to bound now is 
\begin{equation}
    \left \vert \log Z_{\rho, Gen} -T_{M_1,  \dots M_K} \right \vert
\end{equation}
where $T_{M_1,  \dots M_K}$ is the order $\{M_l\}$ expansion. Let $M= \sum_l M_l$, where $M_l$ is the order of the parameter $\lambda_l$ in the expansion. Then
\begin{align}
    \vert \log Z_{\rho, Gen} - T_M \vert &= \vert\sum_{\substack{\Gamma \in \mathcal{G}_C \\ \sum_l n_l \geq M } } \varphi \left(\Gamma\right) \prod_{\gamma \in \Gamma} w_{\gamma}^t\vert \\
    &\leq \sum_{\substack{\Gamma \in \mathcal{G}_C \\ \sum_l n_l \geq M } } \vert \varphi \left(\Gamma\right) \prod_{\gamma \in \Gamma} w_{\gamma}^t\vert \\
    &= \sum_{\substack{\{n_l\}\\ \sum_l n_l \geq M }}^\infty \sum_{\substack{\Gamma \in \mathcal{G}_C \\ \{n_l \} } } \vert \varphi \left(\Gamma\right) \prod_{\gamma \in \Gamma} w_{\gamma}^t\vert,
\end{align}
with $n_l$ being the number of hyperedges of the color $l$ of a certain cluster, so in the last step we simply rearrange the sum in terms of number of hyperedges of each color. Notice that once we fix $n_l$ we can write

\begin{align}
    \vert &\log Z_{\rho, Gen}-T_{M_1,  \dots M_K} \vert \\
    &\leq \sum_{\substack{\{n_l\}\\\sum_l n_l \geq M}}^\infty \left( \prod_{n_l'}\bigg \vert \frac{\lambda_l}{\beta^*/K} \bigg \vert^{n_l'} \right) \\
    &\cdot \sum_{\substack{\Gamma \in \mathcal{G}_C \\ \{n_l \} } }  \vert \varphi   \left(\Gamma\right) \prod_{\gamma \in \Gamma} w_{\gamma}^t\left[ \lambda_l = \beta^*/K\right]\vert,
\end{align}
where $n_l' \leq n_l$ and we chose them such that $\sum_l n_l'=M$. Upper bounding by simply summing over this new variable we obtain
\begin{align}
    \vert &\log Z_{\rho, Gen} - T_{M_1,  \dots M_K} \vert \\
    &\leq \sum_{\{n_l\}\geq \{M_l\}}^\infty \sum_{\substack{n_l' \\\sum_l n_l' =M}} \left( \prod_{n_l'}\bigg \vert \frac{\lambda_l}{\beta^*/K} \bigg \vert^{n_l'} \right) \times\\
    & \sum_{\substack{\Gamma \in \mathcal{G}_C \\ \{n_l \} } }  \vert \varphi   \left(\Gamma\right) \prod_{\gamma \in \Gamma} w_{\gamma}^t\left[ \lambda_l = \beta^*/K\right]\vert \\
    &\leq \left( \sum_{l} \bigg \vert \frac{\lambda_l}{\beta^*/K} \bigg \vert \right)^M \sum_{\{n_l\}\geq \{M_l\}}^\infty \times\\
    &\sum_{\substack{\Gamma \in \mathcal{G}_C \\ \{n_l \} } }  \vert \varphi   \left(\Gamma\right) \prod_{\gamma \in \Gamma} w_{\gamma}^t\left[ \lambda_l = \beta^*/K\right]\vert \\
    &\leq \left(K \sum_{l} \bigg \vert \frac{\lambda_l}{\beta^*} \bigg \vert \right)^M N,
\end{align}
where in the last inequality we used Eq.~\eqref{eq::IndepBoundGeneralizedClusterExp}. Thereby the cluster expansion of $\log Z_{\rho, Gen}$ converges for $\sum_l \vert \lambda_l \vert \leq \beta^*/K$.
\section{Proofs of statistical properties}
If we expand $\log \langle \mathrm{e}^{\tau A} \rangle_\rho$ in $\tau$ we obtain
\begin{equation}
    \log \langle \mathrm{e}^{\tau A} \rangle_\rho =  \langle A \rangle_\rho \tau +\frac{1}{2} \left(\langle A^2 \rangle_\rho-\langle A \rangle_\rho ^2\right) \tau^2 + \mathcal{O}(\tau^3),
    \label{eq:ap_expansionwrtTau}
\end{equation}
where the remainder term can be controlled by Theorems~\ref{Theorem:KoteckyLR} and~\ref{theo:KP_GenPartFunct}.
At this point is useful to remember that, as pointed out by Dobrushin~\cite{Dobrushin1996, AlgorithmicPirogov–Sinai}, the cluster expansion and Taylor series are the same series in $\tau$ but arranged differently, so we can unequivocally identify the moments of both series.
\subsection{Chernoff-Hoeffding proof}\label{app:CHbound}
If we set $\rho = \rho_\beta$ and $\langle A \rangle_\beta=0$ all the terms that contribute to the expansion given by Eq.~\eqref{eq:ap_expansionwrtTau} will have at least a common factor $\tau^2$. In a similar manner to Appendix~\ref{app:KP_LR}, and using that $\tau +\beta \leq \beta^*/2$, we can write
\begin{align}
    \log &\langle \mathrm{e}^{\tau A} \rangle_\beta \\
    &= \sum_{\substack{\Gamma \in \mathcal{G}_C \\ n_\beta \geq 0 \, ;  \, n_\tau \geq 1 } } \varphi \left(\Gamma\right) \prod_{\gamma \in \Gamma} w_{\gamma}^t\\
    &=\left(\frac{\tau}{\beta^*/2-\beta} \right)^2 \times \\
    &\sum_{\substack{\Gamma \in \mathcal{G}_C \\ n_\beta \geq 0 \, ;  \, n_\tau \geq 1 } } \varphi \left(\Gamma\right) \prod_{\gamma \in \Gamma} w_{\gamma}^t \left[\tau = \beta^*/2-\beta\right] \\
    &\leq c_\beta N \tau^2, 
\end{align}
with $c_\beta= \left(\frac{1}{\beta^*/2-\beta}\right)^2$, and where we used Eq.~\eqref{eq::IndepBoundGeneralizedClusterExp} in the last step. Therefore,
\begin{align}
&P_{A,\beta} (| a - \langle A \rangle_{\beta} | > \delta) \\
&= \int_{a- \langle A \rangle_\beta\geq \delta} \Tr \left[ \rho \delta \left(a-A\right)\right] \\
&=\int_{a- \langle A \rangle_\beta\geq \delta} \Tr \left[ \rho \mathrm{e}^{\tau \left(A- \langle A \rangle_\beta\right)}\mathrm{e}^{-\tau \left(A- \langle A \rangle_\beta\right)} \delta \left(a-A\right)\right] \\ 
&\leq \mathrm{e}^{-\tau \delta } \Tr \left[\rho \mathrm{e}^{\tau \left(A- \langle A \rangle_\beta\right)}\right]  \leq \mathrm{e}^{-\tau \delta } \mathrm{e}^{ c_\beta N \tau^2 }
\end{align}

If we set $\tau = \frac{\delta}{2c_\beta N}$ in the last expression we arrive to the bound
\begin{equation}
    P_{A,\beta} (|a - \langle A \rangle_{\beta} | > \delta) \leq \exp \left[- \frac{\delta^2}{4c_\beta N}\right].
\end{equation}
This choice of $\tau$ implies the constrain $\delta \leq  2N (\beta^*/2-\beta)^{-1} \leq 8gukN$, but it turns out to be trivial since by definition $\delta \leq 2\norm{A}\leq 2\sum_{i,j} \norm{a_{ij}} \leq 2ugN$. Therefore the bound holds for $\beta\leq \beta^*/2$. For a product state the proof is identical but without the dependence in $\beta$ of $c_\beta$, being instead just $c_\rho = (\beta^*)^{-2}$~\cite{classsimofshort_PRXQuantum.4.020340}.
\subsection{Berry-Esseen error proof}\label{app:BE}
In order to prove Theorem~\ref{th:LR_BEtheorem} we set $\tau=i\lambda$ in \eqref{eq:ap_expansionwrtTau}. We prove the case of $\rho =\rho_\beta$.
From Eq.~\eqref{eq:ap_expansionwrtTau} we conclude that all the quadratic contributions in time of the expansion sum up to $\frac{-\lambda^2 \sigma_\beta^2}{2}$. If we set $\langle A \rangle_\rho =0 $ all the linear terms in $\lambda$ vanish, so we can write
\begin{equation}
   \left \vert \log \Tr\left[\mathrm{e}^{i\lambda A} \rho_\beta  \right] - \frac{-\lambda^2 \sigma_\beta^2}{2} \right \vert = \vert \sum_{\substack{\Gamma \in \mathcal{G}_C \\n_\lambda \geq 3} }\varphi \left(\Gamma\right) \prod_{\gamma \in \Gamma} w_{\gamma}^t \vert,
\end{equation}
where $n_\lambda$ is the number of edges of the variable $\lambda$ in the cluster. Now in a way similar to Appendix~\ref{app:CHbound},
\begin{align}
  & \big \vert \sum_{\substack{\Gamma \in \mathcal{G}_C \\n_\lambda \geq 3} }\varphi \left(\Gamma\right) \prod_{\gamma \in \Gamma} w_{\gamma}^t  \big\vert \\
  &\leq \left\vert\frac{\lambda}{\beta^*/2-\beta}\right\vert^3 \sum_{\substack{\Gamma \in \mathcal{G}_C } }\big\vert\varphi \left(\Gamma\right) \prod_{\gamma \in \Gamma} w_{\gamma}^t\left[\lambda=\beta^*/2-\beta\right] \big \vert \\
  &\leq N \left\vert\frac{\lambda}{\beta^*/2-\beta}\right\vert^3 
\end{align}
which, for $\lambda \leq \beta^*/2-\beta$, leads to 
\begin{equation}
  \left  \vert \log \Tr\left[\mathrm{e}^{i \lambda A} \rho_\beta \right]- \frac{-\lambda^2 \sigma_\beta^2}{2} \right \vert \leq N\left\vert\frac{\lambda}{\beta^*/2-\beta}\right\vert^3.
\end{equation}
Recall that $\varphi_A(\lambda) =  \langle \mathrm{e}^{-\mathrm{i}\lambda A/\sigma_\beta} \rangle_\beta$ and define $\lambda'=\lambda \sigma_\beta$ so
\begin{equation}
   \left \vert \log \varphi_A (\lambda')- \frac{-\lambda'^2}{2} \right \vert \leq N \left\vert\frac{\lambda}{\left(\beta^*/2-\beta\right) \sigma_\beta}\right\vert^3 .
\end{equation}
Following the derivation of Appendix A in~\cite{Rai_2024} we arrive to
\begin{equation}
    \zeta_N \leq \frac{2C}{\sigma_\beta \left(\beta^*/2-\beta\right)} + \frac{3N}{\left(\sigma_\beta \left(\beta^*/2-\beta\right)\right)^3} \leq \mathcal{O}(N^{-1/2}),
    \end{equation}
for $\lambda < \beta^* \sigma_\beta /2 \equiv \mathcal{O}(N^{1/2})$ and $C\leq \frac{18}{\sqrt{2}\pi}$. For the last inequality, we assume the variance of the Gibbs state $\rho_\beta$ grows as $\Omega (N^{1/2})$. Once again, we proved the result for the Gibbs state, and for a product state the proof is identical but without dependence in $\beta$ (see Appendix A of~\cite{Rai_2024}).


\section{Simulation details}\label{app: sim}

The numerical simulations consist of a TEBD-like algorithm where we construct the time evolution operator of the LR-TFI model~\eqref{eq:def-TFI} explicitly in the short-range case~\cite{MPOTEv_CiracPirvu_2010, MPOTEv_Paeckel_2019} and then applying a superextensive amount of swap gates to exactly simulate the power-law interactions in a finite system size.
This yields a Trotterized representation of the time evolution operator $U(\delta t) = \mathrm{e}^{-\mathrm{i}H \delta t}$ in form of an MPO. For this, we have used the ITensor library in Julia~\cite{ITensor_Fishman_2022}.
We concatenate the Trotterized time evolution to yield the desired final time $ U(t) \simeq \prod_{i=1}^{n_t} U(\delta t)$, with $t = n_t \cdot \delta t$. Once we have the time evolution of the trace $\Tr[U(t)] = \Tr[\mathrm{e}^{-\mathrm{i} H t}]$, a Fast Fourier Transform $\mathcal{F}$ is applied in order to obtain the DOS $P(E)$ as,
\begin{equation}
    P(E) =  \frac{1}{d}\sum_i \delta (E-E_i) =\frac{1}{d} \mathcal{F} \left( \Tr [\mathrm{e}^{-\mathrm{i}Ht}]\right) \,.
\end{equation}
In the simulation, the relevant parameters are the time-step $\delta t$ and the SVD cutoff, which have been determined in order not to saturate a maximal bond dimension of $\chi_\mathrm{max} = 1800$ during the time evolution.
To assess the accuracy of our simulations we have computed the CDOS obtained with our method against the one obtained via exact diagonalization in Fig.~\ref{fig:methodology_check}.

\begin{figure*}[h]
    \centering
    \includegraphics[width=0.65\linewidth]{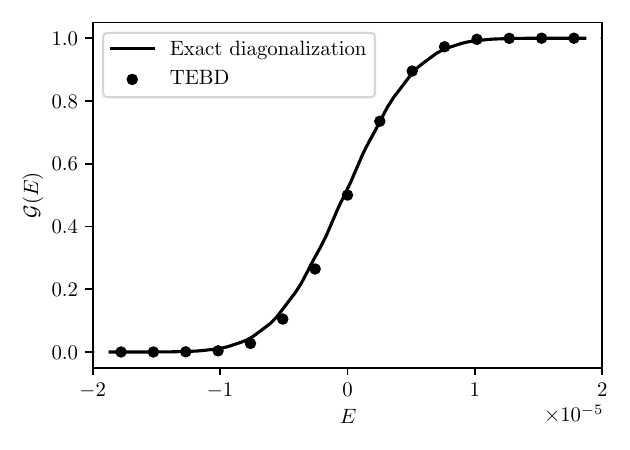}
    \caption{\justifying
    Cumulative density of States (CDOS) estimated by our TN simulation (dots) and via exact diagonalization (line) for the transverse-field Ising model \eqref{eq:def-TFI} in system size $N=20$. The parameters of the simulation are $\delta t=0.005$, $h/J=0.25$, and $\text{SVD cutoff}=10^{-22}$ with a maximum bond dimension of $\chi_\mathrm{max} = 1800$.
    }
    \label{fig:methodology_check}
\end{figure*}

\end{document}